\renewcommand{\vec}[1]{\mathbf{#1}}
\definecolor{BrickRed}{rgb}{0.8,0.25,0.33}
\theoremstyle{definition}
\theoremstyle{plain}
\newtheorem{theorem}{Theorem}[section]
\newtheorem{corollary}[theorem]{Corollary}
\newtheorem{fact}[theorem]{Fact}
\newtheorem{lemma}[theorem]{Lemma}
\newtheorem{claim}[theorem]{Claim}
\theoremstyle{definition}
\newtheorem{definition}{Definition}[section]
\crefname{thm}{Theorem}{theorems}
\crefname{cla}{Claim}{claims}
\crefname{lem}{Lemma}{lemmas}
\crefname{fact}{Fact}{facts}
\newcommand{\E}{\mathbb{E}}
\newcommand*{\todos}{}%
\def\tristan#1{}
\def\david#1{}
\def\mahsa#1{}
\def\mark#1{}
\def\amin#1{}
\def\tristan#1{\marginpar{$\leftarrow$\fbox{T}}\footnote{$\Rightarrow$~{\sf\textcolor{cyan}{#1 --Tristan}}}}
\def\david#1{\marginpar{$\leftarrow$\fbox{D}}\footnote{$\Rightarrow$~{\sf\textcolor{blue}{#1 --David}}}}
\def\mahsa#1{\marginpar{$\leftarrow$\fbox{MD}}\footnote{$\Rightarrow$~{\sf\textcolor{purple}{#1 --Mahsa}}}}
\def\mark#1{\marginpar{$\leftarrow$\fbox{MB}}\footnote{$\Rightarrow$~{\sf\textcolor{green}{#1 --Mark}}}}
\def\amin#1{\marginpar{$\leftarrow$\fbox{A}}\footnote{$\Rightarrow$~{\sf\textcolor{red}{#1 --Amin}}}}
\newcommand{\Opt}{\textup{OPT}}
\newcommand{\OptOn}{\textup{OPT}_{\textup{on}}}
\newcommand{\OptOff}{\textup{OPT}_{\textup{off}}}
\newcommand{\1}{\mathbbm{1}}
\newcommand{\Section}[1]{\section{#1}}
\newcommand{\Paragraph}[1]{\paragraph{#1.}}
\newcommand{\Subsection}[1]{\subsection{#1}}
\title{Optimal Rounding for Two-Stage Bipartite Matching}
\author{Tristan Pollner, Amin Saberi, Anders Wikum \thanks{Department of Management Science \& Engineering, Stanford University | \{\texttt{tpollner, saberi, wikum}\} @ stanford.edu}}
\begin{document}

\include{xspace}

\pagenumbering{gobble}
\date{}
\maketitle

\begin{abstract}  
We study two-stage bipartite matching, in which the edges of a bipartite graph on vertices $(B_1 \cup B_2, I)$ are revealed in two batches. In stage one, a matching must be selected from among revealed edges $E \subseteq B_1 \times I$. In stage two, edges $E^\theta \subseteq B_2 \times I$ are sampled from a known distribution, and a second matching must be selected between $B_2$ and unmatched vertices in $I$. The objective is to maximize the total weight of the combined matching. We design polynomial-time approximations to the optimum online algorithm, achieving guarantees of $\nicefrac{7}{8}$ for vertex-weighted graphs and $2\sqrt{2}-2 \approx 0.828$ for edge-weighted graphs under arbitrary distributions. Both approximation ratios match known upper bounds \cite{devanur2013randomized, naor2025online} on the integrality gap of the natural fractional relaxation, improving upon the best-known approximation of 0.767 by Feng, Niazadeh, and Saberi \cite{feng2021two} for unweighted graphs whose second batch consists of independently arriving nodes. 

Our results are obtained via an algorithm that rounds a fractional matching revealed in two stages, aiming to match offline nodes (respectively, edges) with probability proportional to their fractional weights, up to a constant-factor loss. We leverage negative association (NA) among offline node availabilities---a property induced by dependent rounding---to derive new lower bounds on the expected size of the maximum weight matching in random graphs where one side is realized via NA binary random variables. Moreover, we extend these results to settings where we have only sample access to the distribution. In particular, $\text{poly}(n,\epsilon^{-1})$ samples suffice to obtain an additive loss of $\epsilon$ in the approximation ratio for the vertex-weighted problem; a similar bound holds for the edge-weighted problem with an additional (unavoidable) dependence on the scale of edge weights.
\end{abstract}

\newpage
\pagenumbering{arabic}

\Section{Introduction}
In the two-stage bipartite matching problem, we are given a bipartite graph on vertices $(B_1 \cup B_2, I)$ whose edges are revealed in two batches. In the first stage, a matching must be selected from among edges $E \subseteq B_1 \times I$. In the second stage, edges $E^\theta \subseteq B_2 \times I$ are drawn from a known distribution $\mathcal{D}$, and a second matching is selected between $B_2$ and the unmatched vertices of $I$. The goal is to maximize the total weight of the final matching across both stages. In the vertex-weighted setting, each matched offline node $i \in I$ contributes weight $w_i \geq 0$, while in the edge-weighted setting, each matched edge $e \in E \cup E^\theta$ contributes weight $w_e \ge 0$. 

This problem is motivated by matching markets where participants are matched in batches, such as kidney exchange platforms \cite{yan2020dynamic} and ride-hailing services \cite{ferrari2015kidney}. As noted in prior work \cite[SODA'21]{feng2021two}, platforms often have partial information about future participants when making current matching decisions. For instance, ride-hailing apps such as Uber can incorporate knowledge of potential riders who are viewing prices but haven't yet confirmed a ride, when matching current riders to available drivers.

\cite{feng2021two} studied the two-stage bipartite matching problem in the context of adversarial and stochastic input. For the former, they give a randomized algorithm achieving an optimal competitive ratio of $\nicefrac{3}{4}$. In the stochastic setting introduced above, \cite{feng2021two} considered a model where all second-stage nodes $b \in B_2$ arrive with a known probability $\pi_b$, and achieved approximation guarantees of $1-1/e+1/e^2 \approx 0.767$ and $0.761$ for the unweighted and vertex-weighted versions, respectively. They additionally show that there is no FPTAS for this problem assuming $\textsf{P} \neq \textsf{NP}$.\footnote{In fact, it is possible to show the slightly stronger result that no PTAS exists assuming $\textsf{P} \neq \textsf{NP}$, via a construction of \cite{braverman2025new}; refer to \Cref{app:hardness}.}

\Subsection{Our Results}

In this work, we build on the results of Feng et al.\@ \cite{feng2021two} along three dimensions: we extend from unweighted to weighted graphs, from independently drawn vertices to arbitrary distributions, and improve the approximation guarantees to match the integrality gap. Specifically, we obtain a $\nicefrac{7}{8}$-approximation for vertex-weighted graphs and a $(2\sqrt{2} - 2)$-approximation for edge-weighted graphs, both in time polynomial in the number of nodes $n$ and the support size of $\mathcal{D}$.

\begin{theorem} \label{thm:mainvtx}
    There exists a $\textup{poly}(n, |\mathcal{D}|)$-time algorithm for two-stage bipartite achieving a $\nicefrac{7}{8}$-approximation to the optimum online algorithm, for vertex-weighted graphs. 
\end{theorem}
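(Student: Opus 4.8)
The plan is to compare against the natural LP relaxation whose optimum is $\OptOn$, the value of the optimum online algorithm. I would first write a fractional relaxation in which stage-one variables $x_e$ for $e \in E$ and stage-two variables $x_e^\theta$ for $e \in E^\theta$ (for each realization $\theta$ in the support of $\mathcal{D}$) satisfy the offline-vertex constraint $\sum_{e \ni i} x_e + \E_\theta[\sum_{e \ni i} x_e^\theta] \le 1$ along with the online-vertex constraints; solving this LP takes $\mathrm{poly}(n, |\mathcal{D}|)$ time. Let $y_i = \sum_{e \ni i} x_e$ be the stage-one fractional mass on offline node $i$. The crux is to round the stage-one fractional matching $x$ using a dependent-rounding scheme (as advertised in the abstract) so that (i) the rounded solution is an integral matching on $E$, (ii) each offline node $i$ is matched in stage one with probability exactly $y_i$, and (iii) the indicator availabilities $\{\,1 - \1[i \text{ matched in stage one}]\,\}_{i \in I}$ are \emph{negatively associated}. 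Pipage/dependent rounding on the bipartite stage-one graph gives exactly this.

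Next I would analyze stage two conditioned on the (random) set of surviving offline nodes. Given that availabilities are NA binary random variables with means $1 - y_i$, I would invoke the paper's new lower bound on the expected maximum-weight matching in a random bipartite graph where one side is realized through NA indicators — this is presumably the technical lemma developed before this theorem. The point is that feeding the fractional stage-two solution $x^\theta$ (which is feasible against the \emph{expected} availabilities) into such a graph yields, in expectation over $\theta$ and over the stage-one randomness, a stage-two matching collecting at least some constant fraction of $\E_\theta[\sum_{e\in E^\theta} w_{\text{head}(e)} x_e^\theta]$. The contribution of stage one is, by construction, exactly $\sum_i w_i y_i$. Balancing the two contributions against the LP value, with the freedom to scale down $y$ (i.e., deliberately leave some stage-one mass unmatched to boost stage-two survival probabilities), is what produces the $\nicefrac{7}{8}$ constant; one checks the worst case is $y_i$ uniform and the scaling factor chosen so the per-node guarantee $\min$ over the two stages equals $\tfrac{7}{8}$.

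I would organize the write-up as: (1) state the LP and note $\text{LP} \ge \OptOn$; (2) describe the dependent-rounding subroutine and record the NA + marginal-preservation guarantees as a black box from earlier in the paper; (3) apply the NA maximum-weight-matching lower bound to lower-bound stage two; (4) combine, optimize the scaling parameter, and conclude the ratio is $\nicefrac{7}{8}$; (5) remark that $\nicefrac{7}{8}$ is tight by the integrality-gap instance of \cite{devanur2013randomized}. The main obstacle I anticipate is step (3): transferring a guarantee that holds against \emph{expected} availabilities to one that holds in the random instance requires the NA machinery to control the downward fluctuations of $\sum_{e \ni i}(1 - \1[i \text{ matched}])$ simultaneously across all online nodes $b \in B_2$, and getting the constant to be exactly $\nicefrac{7}{8}$ rather than something weaker will hinge on the precise form of that lemma (likely a convex-order or correlation-gap style bound, consistent with the $\cvxge$/$\cvxle$ macros defined in the preamble). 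A secondary subtlety is ensuring the argument is genuinely against $\OptOn$ and not merely the weaker offline optimum, which is why the LP must include the online feasibility constraints and the comparison must be made realization-by-realization.
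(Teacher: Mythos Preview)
Your proposal has the right high-level shape (LP relax, dependent rounding in stage one, NA of availabilities, lower bound on stage-two max matching), but there are two concrete gaps that would prevent you from reaching $\nicefrac{7}{8}$.

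First, your LP is too weak. You impose only the \emph{expected} offline-vertex constraint $\sum_{e \ni i} x_e + \E_\theta[\sum_{e \ni i} x_e^\theta] \le 1$. The paper instead uses the \emph{per-realization} constraint $\sum_{a} x_{ia} + \sum_b y_{ib}^\theta \le 1$ for every $\theta$ (this is \eqref{LPon}, constraint~\eqref{eq:lp_online_constraint}). The distinction is essential: the per-realization constraint is exactly what guarantees $\Pr[i \text{ available}] = 1 - x_i \ge y_i^\theta$, which is the hypothesis of the key second-stage lemma. With only the expected constraint you are essentially working with the offline relaxation $(\text{LP}_{\text{off}})$, whose rounding gap is $\nicefrac{3}{4}$, not $\nicefrac{7}{8}$ (see \Cref{app:offline}). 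Your remark that ``the argument must be genuinely against $\OptOn$'' is on point, but the fix is not in the analysis; it is in the LP itself.

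Second, there is no scaling parameter in the vertex-weighted case. You propose to ``scale down $y$'' and optimize to balance the two stages; the paper runs \textsf{Round-Augment} with $c=1$, i.e., rounds $\vec{x}$ as-is. The $\nicefrac{7}{8}$ constant emerges not from a tunable tradeoff but from the per-node algebraic inequality
\[
x_i + y_i^\theta\Bigl(1 - \tfrac{x_i}{2}\Bigr) \ge \tfrac{7}{8}\,(x_i + y_i^\theta) \qquad \text{whenever } x_i + y_i^\theta \le 1,
\]
which in turn rests on the second-stage bound $\E[\nu(G^\theta_A)] \ge \sum_i y_i^\theta(1 - \tfrac{x_i}{2})$. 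That bound is \emph{not} a convex-order or correlation-gap statement (the $\cvxge/\cvxle$ macros you spotted are unused); it is obtained by (i) using submodular dominance to replace NA availabilities by independent ones (\Cref{lem:indsufficient}), since the second-stage objective is a transversal-matroid rank function, and then (ii) proving an inductive lemma on trees (\Cref{lem:main}) that lower-bounds $\E[\nu(G_A)]$ for independent Bernoulli availabilities, extended to vertex weights via the greedy-matroid argument (\Cref{lem:mainvtx}). These are the pieces you are missing in step~(3).
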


\begin{theorem} \label{thm:mainedge}
    There exists a $\textup{poly}(n, |\mathcal{D}|)$-time algorithm for two-stage bipartite achieving a $(2\sqrt{2}-2) \approx 0.828$-approximation to the optimum online algorithm, for edge-weighted graphs. 
\end{theorem}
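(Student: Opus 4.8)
The plan is to pass through the natural fractional relaxation of the optimum online algorithm and then round it in two stages. Write $x_e$ for the stage-one fractional variables ($e \in E \subseteq B_1 \times I$) and, for each scenario $\theta$ in the support of $\mathcal{D}$ (occurring with probability $p_\theta$), write $x^\theta_e$ for the stage-two variables ($e \in E^\theta \subseteq B_2 \times I$). The relaxation maximizes $\sum_{e} w_e x_e + \sum_\theta p_\theta \sum_{e} w_e x^\theta_e$ subject to matching constraints on the $B_1$-side of $x$ and the $B_2$-side of each $x^\theta$, together with the coupling constraint $\sum_{e \ni i} x_e + \sum_{e \ni i} x^\theta_e \le 1$ for every offline node $i \in I$ and every $\theta$. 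This program has $\mathrm{poly}(n,|\mathcal{D}|)$ variables and constraints, so it is solvable in the allotted time; and its value is at least $\OptOn$, since the (possibly randomized) stage-one matching of any online algorithm yields a feasible $x$ through its edge-marginals, while its optimal stage-two response in scenario $\theta$ yields a feasible $x^\theta$ --- the coupling constraint holding because, conditioned on $\theta$, an offline node is matched in stage two only if it was left free in stage one, an event of probability $1 - \sum_{e\ni i} x_e$ that is independent of $\theta$ (as the stage-one matching is committed before $\theta$ is revealed).

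\textbf{The rounding scheme.} In stage one I would apply bipartite dependent rounding to $x$ restricted to $B_1 \times I$. This step is lossless: it produces a random matching $M_1 \subseteq E$ with $\Pr[e \in M_1] = x_e$ for every $e$, so its expected weight is exactly $\sum_e w_e x_e$; crucially, it can be arranged so that the offline availability indicators $A_i := \1[i \notin V(M_1)]$ are negatively associated, with $\Pr[A_i = 1] = q_i := 1 - \sum_{e \ni i} x_e \ge \sum_{e\ni i} x^\theta_e$ for all $\theta$ by the coupling constraint. In stage two, upon observing the realized scenario $\theta$ and the realized availability set $S = \{i : A_i = 1\}$, I would run the rounding procedure of the key lemma below on the bipartite graph $(B_2 \cup I, E^\theta)$ with fractional matching $x^\theta$ and the random set $S$, producing a matching $M_2 \subseteq E^\theta \cap (B_2 \times S)$ with $\Pr[e \in M_2] \ge (2\sqrt 2 - 2)\, x^\theta_e$, the probability taken over $M_1$ and the procedure's internal coins. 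Since $M_1$ and $M_2$ share no vertex (they use disjoint online sides and $M_2$ uses only offline vertices left free by $M_1$), $M_1 \cup M_2$ is a valid combined matching, and summing gives $\E[w(M_1) + w(M_2)] \ge \sum_e w_e x_e + (2\sqrt2 - 2)\sum_\theta p_\theta \sum_e w_e x^\theta_e \ge (2\sqrt 2 - 2)\,\mathrm{LP} \ge (2\sqrt 2 - 2)\,\OptOn$, using $2\sqrt2 - 2 < 1$.

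\textbf{The key lemma and the main obstacle.} Everything hinges on the following statement, whose proof is where negative association does the work: given a bipartite graph $H = (L \cup R, F)$ with edge weights, a fractional matching $y$ on $H$, and negatively associated Bernoullis $\{Z_r\}_{r\in R}$ with $\Pr[Z_r=1]=q_r \ge \sum_{e\ni r}y_e$, one can efficiently (given a draw of the $Z_r$) output a matching $M$ of $H$ that uses only the realized $R$-vertices and satisfies $\Pr[e\in M] \ge (2\sqrt 2 - 2)\,y_e$ for all $e$. I would prove this by processing the $L$-vertices and, for each, selecting an available-and-still-free neighbor with a probability that rescales $y$ to compensate both for offline nodes with small $q_r$ (into which one should match more aggressively, since they seldom matter) and for the depletion of free neighbors over time; negative association is exactly what lets one lower-bound, uniformly, the probability that a prescribed neighbor is still free when its partner is processed --- it controls the joint ``all neighbors gone'' events that pairwise negative correlation alone cannot. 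The constant $2\sqrt2-2$ should fall out of balancing the rescaling against the depletion rate (a differential-equation / worst-case calculation), and it cannot be improved, since the fractional relaxation already has integrality gap $2\sqrt2-2$ in the edge-weighted case \cite{naor2025online}, which simultaneously certifies optimality of the approximation ratio. I expect the delicate point to be precisely this analysis --- in particular, handling the interaction between overflow at an $L$-vertex, whose rescaled fractional demand across the realized neighbors can exceed one, and the negative association invoked to guarantee availability --- rather than the LP or the stage-one rounding, both of which are standard.
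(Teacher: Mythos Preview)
Your key lemma is false as stated. Take $H$ to be a star with center $\ell\in L$ and $n$ leaves $r_1,\dots,r_n\in R$, set $y_{\ell r_i}=1/n$ for all $i$, and let the $Z_{r_i}$ be independent $\mathrm{Ber}(1/n)$ variables (independence implies NA, and $q_{r_i}=1/n=y_{r_i}$, so your hypothesis is met). Any scheme can match $\ell$ only if some leaf is realized, so $\sum_i \Pr[(\ell,r_i)\in M]\le \Pr[\exists i:\,Z_{r_i}=1]=1-(1-1/n)^n\to 1-1/e\approx 0.632$, whereas your claimed guarantee would force $\sum_i (2\sqrt2-2)\,y_{\ell r_i}=2\sqrt2-2\approx 0.828$. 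The hypothesis $q_r\ge y_r$ is simply too weak to support a $(2\sqrt2-2)$ edge-level guarantee; the best constant it can support is $1-1/e$, and your ``differential-equation / worst-case'' analysis, however it is made precise, cannot produce a larger one.

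The gap is exactly at the step you call ``lossless.'' The paper does \emph{not} round $x$ losslessly in stage one; it first scales $x$ by $c=2\sqrt2-2$ and then applies dependent rounding. This sacrifices a factor $c$ in stage one (harmless, since $c$ is the target ratio), but it buys the crucial strengthened availability bound $\Pr[A_i]=1-c\,x_i\ge 1-c(1-y^\theta_i)$ for every $i$ and $\theta$ (using $x_i+y^\theta_i\le 1$). Under \emph{that} hypothesis the analogue of your key lemma is true and tight: if $(A_v)_{v\in R}$ are NA with $\Pr[A_v]\ge 1-c(1-y_v)$, one can match each edge with probability $\ge c\,y_e$. The proof is also structurally different from your sketch: rather than sweeping over $L$-vertices, the paper reduces to trees, peels off a maximum-depth star, applies a \emph{monotone} star CRS (whose existence under the strengthened hypothesis reduces to the inequality $c+c^m(1-1/m)^m\le 1$, with $2\sqrt2-2$ arising as the positive root of $c+c^2/4=1$), and recurses on the remaining tree---monotonicity of the CRS being precisely what preserves negative association of the updated availabilities through the recursion.
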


Our approach differs from that of \cite{feng2021two} in that we develop new online algorithms for rounding batched \emph{fractional} matchings. In this closely related problem, a fractional matching is now revealed in two stages, and in each stage we must commit to an integral matching. The goal is match offline nodes (resp. edges) with probability proportional to their fractional matching weight, with a multiplicative loss at most some constant $c_{\textsf{V}}$ (resp. $c_{\textsf{E}}$). For both the vertex-weight and edge-weighted cases, we provide lower bounds matching previously known upper bounds of $c_{\textsf{V}} \le \nicefrac{7}{8}$ \cite{devanur2013randomized} and $c_{\textsf{E}} \le 2\sqrt{2} - 2$ \cite{naor2025online}. As a consequence, \Cref{cor:fractionalmatching} shows that our approximations in \Cref{thm:mainvtx} and \Cref{thm:mainedge} match the integrality gaps of the corresponding LP relaxations for the optimum online policy. 

\begin{restatable}{corollary}{fractionalmatchingbounds} \label{cor:fractionalmatching}
    For vertex-level (resp. edge-level) two-stage online fractional matching, the optimal rounding factor is $c_{\textsf{V}} = \nicefrac{7}{8}$ (resp. $c_{\textsf{E}} = 2\sqrt{2}-2$).  
\end{restatable}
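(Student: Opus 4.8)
The plan is to prove \Cref{cor:fractionalmatching} by pinning the rounding factor from both sides, with essentially all the work on one side. The upper bounds $c_{\textsf{V}} \le \nicefrac{7}{8}$ and $c_{\textsf{E}} \le 2\sqrt{2}-2$ are inherited from the integrality-gap constructions of \cite{devanur2013randomized} and \cite{naor2025online}: if an online rounding procedure matched every offline node (resp.\ edge) with probability at least $c$ times its fractional value, composing it with an optimal fractional solution would beat those gaps, so $c$ cannot exceed the stated constants. If the cited instances are phrased for the fully online model rather than the two-batch model, a small adaptation — collapsing all arrivals after the first into a single second batch — transfers them verbatim, and I would include that check. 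The entire content is therefore the matching lower bound, which I would package as a standalone rounding theorem: there is a two-stage online rounding algorithm achieving factor $\nicefrac{7}{8}$ for vertex weights and $2\sqrt{2}-2$ for edge weights.

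For the lower bound I would construct the algorithm in two stages. In stage one, round the first-batch fractional matching $x$ on $E \subseteq B_1 \times I$ via a dependent-rounding scheme on the bipartite matching polytope (pipage / swap rounding in the Gandhi--Khuller--Parthasarathy--Srinivasan style): this yields an integral matching $M_1$ in which each edge $e$ appears with probability exactly $x_e$, each offline node $i$ is matched with probability exactly $x_i := \sum_{e \ni i} x_e$, and — crucially — the availability indicators $\{\1[i \text{ unmatched by } M_1]\}_{i \in I}$ are negatively associated. In stage two, once a scenario $\theta$ is revealed (hence $E^\theta$ and a feasible conditional fractional matching $y^\theta$ on $B_2 \times I$), round $y^\theta$ restricted to the random set of still-free offline nodes. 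Here I would invoke the paper's earlier structural lemmas lower-bounding the expected weight of a maximum-weight matching when one side is realized through NA binary variables: these ensure each second-stage edge (resp.\ node) is matched with probability close to its fractional value even after stage-one contention, and negative association is exactly what replaces the pessimistic independent-rounding bound.

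The crux is extracting the exact constants. Fix an offline node $i$ with stage-one mass $x_i$ and expected stage-two mass $z_i := \E_\theta[\sum_b y^\theta_{bi}]$, where feasibility of the natural relaxation forces $x_i + z_i \le 1$. I would argue the total match probability is at least $\tfrac{7}{8}(x_i + z_i)$; the worst case trades off ``$i$ is grabbed in stage one and wastes its stage-two options'' against ``$i$ is reserved for stage two but loses to contention there,'' and squeezing this down to $\nicefrac{7}{8}$ needs a tight primal--dual charging (or a direct convexity argument over the law of $x_i$) that genuinely uses NA rather than independence. The edge-weighted case is harder: one must preserve \emph{edge}-level probabilities through both rounding steps, and the constant $2\sqrt{2}-2$ emerges from optimizing a single trade-off parameter — how aggressively stage one commits versus reserves capacity — the $\sqrt{2}$ being the signature of that optimization.

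I expect the main obstacle to be precisely this tight analysis in the edge-weighted regime, together with verifying that the NA-based matching lower bounds from the earlier sections are lossless up to the target constants; the upper-bound reduction and the final assembly of the corollary from the rounding theorem are then routine. As a coda, I would note that \Cref{thm:mainvtx} and \Cref{thm:mainedge} follow by applying this rounding to an optimal solution of the LP relaxing the optimum online policy, and that the tightness asserted in \Cref{cor:fractionalmatching} shows these ratios are the best achievable by any such LP-rounding approach.
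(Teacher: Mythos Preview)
Your outline matches the paper's architecture --- upper bounds from the cited gap instances, lower bounds via dependent rounding in stage one plus the NA-based matching lemmas in stage two --- but there is one place where the plan is underspecified enough to be a genuine gap, and one minor confusion.

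The minor point: the two-stage fractional rounding problem is adversarial, not stochastic. A single $y$ is revealed in stage two with the promise that $x+y$ is a feasible fractional matching; there is no $\theta$ and no $z_i := \E_\theta[\cdot]$. Your phrasing conflates the rounding corollary with the LP-based approximation theorems.

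The real gap: the structural lemmas you plan to invoke (\Cref{lem:mainvtx} and \Cref{lemma: main-edge}) yield \emph{global} weighted guarantees of the form $\E[\nu(G_A,w)] \ge \sum_i w_i \lambda_i$ for every weight vector $w$, not per-vertex or per-edge guarantees. The passage from ``the expected max-weight matching dominates $\sum_i w_i\lambda_i$ for all $w$'' to ``there exists a stage-two scheme with $\Pr[i \text{ selected}] \ge \lambda_i$ for each $i$'' is a separate LP-duality argument (the paper's \Cref{lem:dualityCRS}, establishing existence of a $\lambda$-bounded contention resolution scheme). Your ``primal--dual charging (or a direct convexity argument over the law of $x_i$)'' does not pin this down, and your proposed stage-two procedure of ``rounding $y$ on the still-free nodes'' does not by itself give per-element control. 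Concretely, the paper runs dependent rounding on $x$ in stage one and then, in stage two, uses the CRS guaranteed by duality to obtain $\Pr[i\text{ matched}] \ge x_i + \tfrac12 y_i(2-x_i) \ge \tfrac78(x_i+y_i)$. In the edge case, stage one is scaled to $c\cdot x$ with $c=2\sqrt{2}-2$ (this is your ``trade-off parameter,'' and it gives $\Pr[e\text{ matched}]=c\,x_e$ exactly for first-stage edges), and duality applied to \Cref{lemma: main-edge} supplies a stage-two CRS matching each second-stage edge with probability $c\,y_e$.
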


\paragraph{Sample-based bounds.} A running time that scales polynomially in the support of distribution $\mathcal{D}$ may not be desirable for distributions with large support but compact representation---for example, the setting from \cite{feng2021two} where each second-stage online node appears independently with known probability. In addition, explicit knowledge of the distribution may not be realistic in practice. To handle both considerations, we further show that access to polynomially-many samples from $\mathcal{D}$ only degrades our approximation ratios by some additive $\epsilon$ factor. 

\begin{restatable}{corollary}{approxStatement}
    For any $\epsilon > 0$, given $k = \tilde{O}(n \cdot \epsilon^{-2})$ i.i.d. samples from $\mathcal{D}$ there exists a $\textup{poly}(n,k)$-time algorithm for the vertex-weighted two-stage bipartite problem with approximation ratio $\nicefrac{7}{8} - \epsilon$ to optimum online.
\end{restatable}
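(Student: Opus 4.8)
The plan is a sample average approximation (SAA) argument combined with the robustness of the rounding scheme behind \Cref{thm:mainvtx}. Draw $k$ i.i.d.\ samples $\theta_1,\dots,\theta_k\sim\mathcal D$, let $\widehat{\mathcal D}$ be the resulting empirical distribution (support size $\le k$), solve the natural online LP relaxation for $\widehat{\mathcal D}$ to obtain an optimal first-stage fractional matching $\widehat x$, and run the algorithm of \Cref{thm:mainvtx} with $\widehat x$ as its first-stage input. Crucially, that algorithm's second (and last) stage simply selects an optimal matching among the realized edges $E^\theta$ and the still-available offline vertices, so it is well defined for \emph{every} realized type---in particular outside $\mathrm{supp}(\widehat{\mathcal D})$---and the whole procedure runs in time $\mathrm{poly}(n,|\widehat{\mathcal D}|)=\mathrm{poly}(n,k)$. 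For a first-stage fractional matching $x$ and a distribution $\mathcal D'$ on second-stage edges, write $f_\theta(x)$ for the value of the optimal fractional second-stage response to $x$ given $E^\theta$ (a max-weight fractional matching in $(B_2,I,E^\theta)$ with offline capacities $1-x_i$), and set $g(x,\mathcal D'):=\sum_{e\in E}w_e x_e+\E_{\theta\sim\mathcal D'}[f_\theta(x)]$, so that $\max_x g(x,\mathcal D')=\mathrm{LP}(\mathcal D')\ge\OptOn(\mathcal D')$.

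The first step is to observe that the analysis establishing \Cref{thm:mainvtx} in fact shows a guarantee that is per-realization and agnostic to the distribution: because the $\nicefrac{7}{8}$ bound is obtained vertex-by-vertex from the negative association of the offline availabilities (which depends only on the first-stage solution being rounded) together with the NA max-weight-matching lower bound applied pointwise in $\theta$, running the rounding of \emph{any} fixed first-stage solution $x$ against \emph{any} $\mathcal D'$ yields expected weight at least $\tfrac78\,g(x,\mathcal D')$. Applying this with $x=\widehat x$ and $\mathcal D'=\mathcal D$, the algorithm's value on the true instance is at least $\tfrac78\,g(\widehat x,\mathcal D)$, so it suffices to prove $g(\widehat x,\mathcal D)\ge(1-\tfrac87\epsilon)\,\OptOn$ with high probability (using $\mathrm{LP}(\mathcal D)\ge\OptOn$).

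To analyze $g(\widehat x,\mathcal D)$, subtract the $x$-independent quantity $\mathrm{MWM}_\theta:=f_\theta(\mathbf 0)$, the max-weight matching in the realized second-stage graph: let $\bar f_\theta(x):=f_\theta(x)-\mathrm{MWM}_\theta$ and $\bar g(x,\mathcal D'):=\sum_e w_e x_e+\E_{\theta\sim\mathcal D'}[\bar f_\theta(x)]$. Since $\bar g$ and $g$ differ by a term not depending on $x$, $\widehat x$ also maximizes $\bar g(\cdot,\widehat{\mathcal D})$ and $g(x^\star,\mathcal D)-g(\widehat x,\mathcal D)=\bar g(x^\star,\mathcal D)-\bar g(\widehat x,\mathcal D)$ for $x^\star:=\arg\max_x g(x,\mathcal D)$; the standard SAA chain then gives
\[
  \bar g(x^\star,\mathcal D)-\bar g(\widehat x,\mathcal D)\;\le\;\big[\bar g(x^\star,\mathcal D)-\bar g(x^\star,\widehat{\mathcal D})\big]+\big[\bar g(\widehat x,\widehat{\mathcal D})-\bar g(\widehat x,\mathcal D)\big]\;\le\; 2\bar\gamma,
\]
where $\bar\gamma:=\sup_x\big|\E_{\theta\sim\mathcal D}[\bar f_\theta(x)]-\E_{\theta\sim\widehat{\mathcal D}}[\bar f_\theta(x)]\big|$, so $g(\widehat x,\mathcal D)\ge\mathrm{LP}(\mathcal D)-2\bar\gamma\ge\OptOn-2\bar\gamma$ and it remains to show $\bar\gamma\le\tfrac47\epsilon\,\OptOn$ once $k=\tilde O(n\epsilon^{-2})$. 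This is where the centering pays off: $f_\theta(x)$ itself can be as large as $\mathrm{MWM}_\theta$, which may dwarf $\OptOn$, but $\bar f_\theta(x)\in[-\sum_i w_i x_i,\,0]\subseteq[-\OptOn,0]$ for every feasible $x$, since $\sum_i w_i x_i=\sum_e w_e x_e$ is the value of a feasible first-stage matching and hence at most $\OptOn$. Thus $\bar f_\theta$ has range of width $\le\OptOn$ \emph{uniformly}, with no dependence on the weight scale. Moreover $f_\theta(x)$---hence $\bar f_\theta(x)$---depends on $x$ only through the offline marginals $z=(x_i)_{i\in I}\in[0,1]^I$, and $|f_\theta(z)-f_\theta(z')|\le\sum_i w_i|z_i-z_i'|$. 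Taking a $\delta$-net $N$ of the marginal box ($\log|N|=O(n\log(1/\delta))$), applying Hoeffding to each net point, union-bounding, and transferring off the net via the Lipschitz bound with $\delta$ polynomially small in $\epsilon\,\OptOn/n$, yields $\bar\gamma\le\OptOn\cdot O\!\big(\sqrt{n\,\mathrm{polylog}(n/\epsilon)/k}\big)$, which is $\le\tfrac47\epsilon\,\OptOn$ for $k=\tilde O(n\epsilon^{-2})$. (The ratio and the algorithm are scale-invariant, so we may normalize e.g.\ $\max_i w_i=1$, and the logarithmic factors are absorbed into $\tilde O$.)

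I expect the crux to be exactly this centering step. Without it, the relevant random variables $f_\theta(x)$ have range governed by $\max_\theta\mathrm{MWM}_\theta$, forcing a sample complexity scaling with the ratio of the largest realizable matching to $\OptOn$---the ``unavoidable'' weight-scale dependence that does arise in the edge-weighted version. The observation that only the $x$-dependent part of $f_\theta$ influences the choice of $\widehat x$, and that this part varies by at most $O(\OptOn)$ over the feasible region, is what lets the vertex-weighted sample complexity depend on $n$ and $\epsilon$ alone. The remaining ingredients---checking that \Cref{thm:mainvtx}'s bound indeed holds per realization and for an arbitrary fixed first-stage solution, the concavity/Lipschitz properties of $f_\theta$, and the net/union-bound bookkeeping---are routine.
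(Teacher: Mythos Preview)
Your proposal is correct in its essential structure and closely parallels the paper's approach: both solve \eqref{LPon} on the empirical distribution, run \textsf{Round-Augment}, exploit that the per-realization $\nicefrac78$ guarantee holds for any fixed first-stage solution against any $\theta$, and---crucially---observe that a suitable centering bounds the range in the concentration step by $O(\OptOn)$ rather than by the raw weight scale.

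The two proofs differ in how the centering and the union bound are executed. The paper observes that after dependent rounding the algorithm has committed to an \emph{integral} first-stage offline set $S\subseteq I$; it therefore controls paired differences $(\hat\mu_S-\hat\mu_T)$ versus $(\mu_S-\mu_T)$ over all $S,T\subseteq I$, shows combinatorially that each sample-wise difference lies in $[-\mu_{\Opt},\mu_{\Opt}]$, and finishes with Hoeffding plus a union bound over $2^{2|I|}$ pairs. Your centering instead subtracts the $\theta$-dependent quantity $f_\theta(\mathbf 0)$ and nets over \emph{fractional} marginals. This is a legitimate alternative, but the Lipschitz-transfer step as written has a loose end: after normalizing $\max_i w_i=1$ you need $n\delta\lesssim\epsilon\,\OptOn$, whence $\log(1/\delta)=O\!\bigl(\log(n/(\epsilon\,\OptOn))\bigr)$, and nothing prevents $\OptOn$ from being arbitrarily small relative to $\max_i w_i$ (e.g.\ when the heaviest offline node is reachable only in a rare second-stage realization). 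Hence the claimed $\mathrm{polylog}(n/\epsilon)$ dependence is not justified as stated. The paper's finite union bound over integral sets sidesteps this entirely; alternatively, you can repair your argument by netting in the weighted metric $d(z,z')=\sum_i w_i|z_i-z_i'|$, under which the feasible first-stage marginal region has diameter at most $2\OptOn$, so taking $\delta=\Theta(\epsilon\,\OptOn)$ gives $\log|N|=O(n\log(1/\epsilon))$ independently of the weight scale.
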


\begin{restatable}{corollary}{approxStatement}
    Given an instance of the edge-weighted two-stage bipartite problem, let $W$ be the max matching weight achievable over any graph in $\mathcal{D}$, and let $\mu$ be the expected gain of optimum online. For any $\epsilon > 0$, given $k = \tilde{O}\left(\frac{W^2}{\mu^2 \epsilon^2} \cdot n^2 \right)$ i.i.d. samples from $\mathcal{D}$, there exists a $\textup{poly}(n,k)$-time algorithm with approximation ratio $2\sqrt{2} - 2 - \epsilon$ to optimum online.
\end{restatable}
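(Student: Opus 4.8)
The plan is to form the empirical distribution $\hat{\mathcal{D}}$ supported on the $k$ i.i.d.\ samples $\theta_1,\dots,\theta_k \sim \mathcal{D}$, run the algorithm of \Cref{thm:mainedge} as though $\hat{\mathcal{D}}$ were the true second-stage distribution, and argue that this substitution costs at most an additive $\epsilon\mu$ in expected value. Since $|\hat{\mathcal{D}}| \le k$, the resulting procedure runs in time $\textup{poly}(n,|\hat{\mathcal{D}}|) = \textup{poly}(n,k)$; the argument mirrors that of the vertex-weighted corollary, with $W$ now standing in for the (unbounded) scale of a single scenario's matching weight.

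First I would establish a uniform concentration bound. For a first-stage matching $M_1 \subseteq E$ and a realized scenario $\theta$, let $g(M_1,\theta)$ denote the maximum-weight matching between $B_2$ and the unmatched offline vertices $I\setminus V(M_1)$ in $E^\theta$; by definition of $W$ we have $g(M_1,\theta)\in[0,W]$. Assume without loss of generality that the algorithm of \Cref{thm:mainedge} plays an optimal residual matching in its second stage (this only increases its value and leaves $\opton$ unchanged), so that both $\opton$ and the value of \emph{any} first-stage policy, on either distribution, are expectations of $w(M_1) + g(M_1,\cdot)$ over the (possibly randomized) choice of $M_1$. It therefore suffices to control $\big|\tfrac1k\sum_{j=1}^k g(M_1,\theta_j) - \E_{\theta\sim\mathcal{D}}[g(M_1,\theta)]\big|$ simultaneously over all $M_1$. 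Hoeffding's inequality bounds this deviation by $\eta$ for a fixed $M_1$ except with probability $2\exp(-2k\eta^2/W^2)$, and a union bound over the at most $2^{|E|}\le 2^{n^2}$ first-stage matchings shows that, for $k = O\!\big(\tfrac{W^2}{\eta^2}(n^2+\log\tfrac1\delta)\big)$, this holds for every $M_1$ at once with probability at least $1-\delta$; call this event $\mathcal{E}$.

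On $\mathcal{E}$, averaging over the realized choice of $M_1$ yields $|\opton(\mathcal{D})-\opton(\hat{\mathcal{D}})|\le\eta$ and, writing $\textup{ALG}$ for the (sample-dependent) algorithm instantiated with $\hat{\mathcal{D}}$, $|\mathrm{val}(\textup{ALG},\mathcal{D})-\mathrm{val}(\textup{ALG},\hat{\mathcal{D}})|\le\eta$. Combining with the guarantee $\mathrm{val}(\textup{ALG},\hat{\mathcal{D}})\ge(2\sqrt2-2)\,\opton(\hat{\mathcal{D}})$ from \Cref{thm:mainedge} gives, on $\mathcal{E}$,
\[
\mathrm{val}(\textup{ALG},\mathcal{D}) \;\ge\; (2\sqrt2-2)\,\opton(\mathcal{D}) - 2\eta \;=\; (2\sqrt2-2)\,\mu - 2\eta.
\]
Taking $\eta=\tfrac14\epsilon\mu$ and $\delta=\tfrac12\epsilon$, and noting that $\textup{ALG}$ never returns negative weight on $\mathcal{E}^c$ (so that event costs at most $\delta\mu$ in expectation), the overall expected value is at least $(2\sqrt2-2-\epsilon)\mu$. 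The corresponding sample requirement is $k = O\!\big(\tfrac{W^2}{\epsilon^2\mu^2}(n^2+\log\tfrac1\epsilon)\big)=\tilde O\!\big(\tfrac{W^2}{\mu^2\epsilon^2}\,n^2\big)$.

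I expect the main obstacle to be the transfer step: $\textup{ALG}$ depends on the samples, so one cannot apply concentration to it as a fixed object. This is why the concentration in the second step must be made \emph{uniform} over a sample-independent family --- here all first-stage matchings, together with the optimal-residual-second-stage reduction --- that is rich enough to contain $\textup{ALG}$'s behavior, and $\log$ of the size of this family is precisely what produces the $n^2$ factor. The $W^2/\mu^2$ dependence, which does not appear in the vertex-weighted corollary, is genuinely necessary: edge weights can be rescaled arbitrarily, so a single scenario's residual matching value is only bounded by $W$ rather than by a constant multiple of $\opton$, and converting the additive sampling error $\eta$ into a relative error $\epsilon$ against $\mu$ inflates the sample count by the factor $W^2/\mu^2$.
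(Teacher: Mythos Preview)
Your proposal is correct and follows essentially the same approach as the paper: form the empirical distribution, apply Hoeffding to the second-stage residual value $g(M_1,\theta)\in[0,W]$ for each fixed first-stage matching, union-bound over the $\le 2^{|E|}\le 2^{n^2}$ choices of $M_1$, and transfer the $(2\sqrt{2}-2)$ guarantee from $\hat{\mathcal{D}}$ to $\mathcal{D}$ with additive loss $O(\eta)$. The paper's version differs only cosmetically---it union-bounds over \emph{pairs} of matchings (incurring $2^{2|E|}$ instead of $2^{|E|}$) and states a high-probability rather than in-expectation guarantee---but the structure and all key steps are the same.
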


\subsection{Our Techniques}
Rounding the (fractional) solution to a linear programming relaxation is a standard technique in the online matching literature for developing approximation algorithms for the online optimum \cite{papadimitriou2021online, braverman2022max, naor2025online, braverman2025new}. Given a solution $(\vec{x},\vec{y})$ to an LP relaxation of the two-stage problem, our algorithm (\textsf{Round-Augment}) computes a first-stage matching by applying a dependent rounding procedure \cite{gandhi2006dependent} to the first-stage solution $\vec{x}$, then augments with a matching of maximum possible weight in the second stage. 
    
Our main technical contributions involve bounding the expected gain contributed by the second-stage max weight matching. One of the attractive properties of dependent rounding is that the procedure is known to introduce various forms of negative correlation \cite{gandhi2006dependent}. We prove a new dependence result: namely, that the (binary) availabilities of offline nodes for matching in the second stage satisfy negative association (NA). In light of this fact, for both edge-weighted and vertex-weighted bipartite graphs, we derive lower bounds on the expected size of the max weight matching in a random graph where one node set is randomly realized according to NA binary random variables. To establish these bounds, we develop two complementary analyses for the vertex-weighted and edge-weighted cases, both of which begin by reducing the problem to matching in acyclic bipartite graphs. 

In the vertex-weighted setting, we use submodular dominance \cite{qiu2022submodular} (as our objective is the rank function of a transversal matroid) to further reduce to independent node availabilities. Under this independence assumption, we prove a new result that generalizes the well-known lossless rounding of bipartite fractional matchings to lower bound the expected size of an unweighted max matching in the resulting random graphs. Notably, this result allows for random nodes on \emph{both} sides of the graph to assist our inductive proof. We then extend our result to vertex-weighted graphs by examining the matching decisions made by the (optimal) greedy algorithm for finding a max-weight independent set in the transversal matroid. Via strong duality, we show our global bounds imply rounding algorithms achieving balanced vertex-by-vertex guarantees. This final step is closely related to the existing literature on contention resolution schemes (CRS) \cite{chekuri2014submodular}; however, we require a type of CRS that is \emph{imbalanced} (i.e. not selecting items proportional to their probability of being available), departing from previous work.

In the edge-weighted setting, submodular dominance no longer applies. Rather than reducing to independent availabilities, we show that negative association of node availabilities can be preserved inductively when matching according to a particular \emph{monotone} contention resolution scheme. In detail, the algorithm uses the CRS to make matching decisions in a maximum-depth star subgraph before recursing on the remaining graph. We show that the existence of a contention resolution scheme with the desired properties on star graphs is a consequence of negative cylinder dependence. Moreover, we prove directly that this procedure produces a rounding scheme with balanced edge-by-edge guarantees.
 
\subsection{Further Related Work}

Online matching has been widely studied in the computer science literature since its introduction by \cite{karp1990optimal}, with extensions to vertex/edge-weighted graphs \cite{aggarwal2011online, fahrbach2020edge, gao2021improved, blanc2022multiway}. A long line of work has studied stochastic arrivals under competitive analysis (comparing to the optimum offline benchmark) \cite{feldman2009online, manshadi2012online, jaillet2013online, ezra2020online, huang2021online, tang2022fractional, huang2022power, chen2024stochastic}. The area is vast and we refer the reader to a recent book \cite{echenique2023online} and survey \cite{huang2024online} for a more complete treatment. 

Of particular relevance to our work is recent study on the effect of batching on online matching \cite{akbarpour2020thickness, feng2021two, feng2025batching}. Our contribution also fits into a line of work on two-stage stochastic combinatorial optimization problems, including on matching \cite{immorlica2004costs, shmoys2004stochastic, charikar2005sampling, shmoys2007approximation, katriel2008commitment, escoffier2010two, lee2020maximum, housni2021matching, feng2021two, jin2022online}. Among these works, \cite{feng2021two} is the most directly comparable, as it introduces the two-stage \emph{vertex-arrival} bipartite matching model we study. \cite{jin2022online} consider the same model under the ``algorithms with predictions'' framework, providing optimal robustness-consistency tradeoffs. Meanwhile, \cite{lee2020maximum} provides competitive ratio guarantees for two-stage \emph{edge-arrival} matching in bipartite and general graphs. 

Also relevant to our work is a recent interest in the optimum online benchmark. There is a growing body of results providing polynomial-time approximations and hardness results for discrete problems \cite{anari2019nearly, papadimitriou2021online, saberi2021greedy, feng2021two, braverman2022max, dutting2023prophet, braun2024approximating, dehaan2024matroid, naor2025online, braverman2025new} and continuous-time ``stationary'' settings \cite{aouad2020dynamic, kessel2021stationary, patel2024combinatorial, amanihamedani2024improved, amanihamedani2025adaptive}. This benchmark has also been studied in a recent line of work on online stochastic selection problems with unknown arrival order \cite{ezra2023next, ezra2023importance, chen2024setting, sun2025online}.

Our techniques are related to the literature on rounding and contention resolution. A sample of this work related to matching includes results for offline settings \cite{gandhi2006dependent, feige2006maximizing, chekuri2014submodular, bruggmann2020optimal, brubach2021improved, nuti2025towards}, online random-order settings \cite{adamczyk2018random, fu2021random, pollner2022improved, macrury2024random, macrury2025random}, and online adversarial order settings \cite{feldman2016online, lee2018optimal, ezra2020online, macrury2025random, naor2025online}. Our work gives the first tight bounds for rounding a fractional matching in two stages; a departure from the related work is the requirement of imbalanced guarantees. 
\Section{Model and Preliminaries} \label{sec:prelims}

We consider a random bipartite graph \(G^\theta \) with an offline node set $I$ and an online node set $B_1 \cup B_2$ partitioned into batches $B_1$ and $B_2$. Edges \(E \subseteq B_1 \times I\) incident to the first batch are deterministic, while edges $E^\theta \subseteq B_2 \times I$ incident to the second batch are realized according to some parameter $\theta \in \Theta$ drawn from a known distribution $\mathcal{D}$. We make no further assumptions on the distribution of $E^\theta$. Matching proceeds as follows: batches are revealed sequentially, and once a batch is revealed, we must irrevocably choose a matching between the online nodes in that batch and the offline nodes. Each offline node can be matched at most once across both batches. 

We use $\OptOn$ to denote the \emph{optimal online algorithm}, i.e., the online algorithm that achieves the largest matching value in expectation over $\mathcal{D}$. Here, matching value may refer to unweighted, vertex-weighted, or edge-weighted matching size and will be clear from context. If $\textup{ALG}(\theta)$ denotes the expected matching value returned by algorithm $\textup{ALG}$ on realization $\theta$, we say $\textup{ALG}$ is $\alpha$-approximate if for all distributions $\mathcal{D}$ we have 
\[\underset{\theta \sim \mathcal{D}}{\E}[\textup{ALG} (\theta)] \ge \alpha \cdot \underset{\theta \sim \mathcal{D}}{\E}[\OptOn(\theta)].\]

\Paragraph{Rounding a two-stage fractional matching} Our approximation guarantees are based on new algorithms for rounding a fractional matching that is revealed online in two stages. In this related problem, two fractional matchings $\vec x \in [0,1]^{B_1 \times I}$ and $\vec y \in [0,1]^{B_2 \times I}$ are revealed along with the edges in their respective batch. Naturally, the fractional mass on non-edges must be $0$. An online rounding scheme must commit to an integral first-stage matching given only knowledge of $\vec x$, then compute a second-stage matching in the remaining graph when $\vec y$ is revealed.  We assume $\vec{x}$ and $\vec{y}$ are chosen upfront and adversarially, with the guarantee that $\vec{x}+\vec{y}$ is a feasible fractional matching in the full graph.

To derive approximation guarantees in the vertex-weighted and edge-weighted settings, the corresponding goal in online rounding is to provide \emph{vertex-uniform} or \emph{edge-uniform} bounds, which match the marginals of $\vec x$ and $\vec y$ up to a multiplicative factor. A vertex-uniform bound is a constant $c_{\textsf{V}}$ such that for all offline nodes $i \in I$, $$\Pr[i \text{ matched}] \ge c_{\textsf{V}} \cdot \left( \sum_{a \in B_1} x_{ia} + \sum_{b \in B_2} y_{ib} \right).$$ Analogously, an edge-uniform bound is a constant $c_{\textsf{E}}$ such that for every edge $e$, $$\Pr[e \text{ matched}] \ge c_{\textsf{E}} \cdot \begin{cases} x_e & \text{if } e \text{ in first stage} \\
y_e & \text{if } e \text{ in second stage} \end{cases}.$$
Previous work has shown that the optimal factors for rounding a two-stage fractional matching are $c_{\textsf{V}} \le \nicefrac{7}{8}$ \cite{devanur2013randomized} and $c_{\textsf{E}} \le 2\sqrt{2} - 2$ \cite{naor2025online}.

\Paragraph{Dependent rounding and negative dependence} We construct two-stage matching algorithms from solutions to an LP relaxation by applying the dependent rounding scheme to the first-stage matching $\vec{x}$. Dependent rounding \cite{gandhi2006dependent} is an iterative procedure that performs the following operations to randomly round $\vec{x}$:
\begin{enumerate}
    \item Identify a cycle or maximal path $P$ consisting of non-integral edges in $\vec{x}$. Partition the edges of $P$ into matchings $M_1$ and $M_2$.
    \item Compute positive numbers $\alpha$ and $\beta$ based on $\vec{x}$, $M_1$, and $M_2$. With probability $\beta/(\alpha+\beta)$, update $x$ by setting $x_e \to x_e+\alpha$ for $e \in M_1$ and $x_e \to x_e-\alpha$ for $e \in M_2$. With probability $\alpha/(\alpha+\beta)$, update $x$ by setting $x_e \to x_e-\beta$ for $e \in M_1$ and $x_e \to x_e+\beta$ for $e \in M_2$.
\end{enumerate}
Upon termination, dependent rounding outputs a random integral matching $M$ with marginals $\vec{x}$ (i.e., $\Pr[e \in M] = x_e$ for all $e \in E$). Moreover, Gandhi et al. prove that the matching values on edges incident to any fixed offline node are negatively correlated (a non-trivial statement in their setting when vertices have integral capacities that may exceed 1). Our results rely on proving a different form of negative correlation, between nodes rather than edges. In particular, we prove that when a first-stage matching is computed via dependent rounding, the availabilities of offline nodes in the second stage exhibit \emph{negative association}.

\begin{definition}[Negative association]
Random variables $X_1, X_2, \dots, X_n \in [0,1]$ are \emph{negatively associated (NA)} if for any disjoint index sets $S, T \subseteq [n]$ and monotone increasing functions $f: [0,1]^{|S|} \to \mathbb{R}$ and $g: [0,1]^{|T|} \to \mathbb{R}$, we have \[\mathbb{E}[f(\vec{X}_S) \cdot g(\vec{X}_T)] \le \mathbb{E}[f(\vec{X}_S)] \cdot \mathbb{E}[g(\vec{X}_T)],\] where $\vec{X}_S$ denotes the collection $(X_i)_{i \in S}$. 
\end{definition}
NA random variables enjoy many useful properties, including that increasing functions each defined on disjoint subsets of NA random variables are themselves NA \cite{joag1983negative}. Negative association also implies the weaker conditions of \emph{negative cylinder dependence} and \emph{submodular dominance} \cite{qiu2022submodular}. 
A collection of Bernoulli random variables satisfies negative cylinder dependence if in any sub-collection, the probability that all variables take the same value is no larger than if the random variables were independent.
\begin{definition}[Negative Cylinder Dependence]
Bernoulli random variables $X_1, \dots, X_n$ are \emph{negatively cylinder dependent (NCD)} if for any $S \subseteq [n]$ we have \[\E \left[ \prod_{i \in S} X_i \right] \le \prod_{i \in S} \E[X_i] \quad \text{and} \quad \E \left[ \prod_{i \in S} \overline{X_i} \right] \le \prod_{i \in S} \E[\overline{X_i}].\]
\end{definition}
A distribution satisfies submodular dominance if its expectation under any submodular function is made no larger by replacing the distribution with independent samples from the marginals. In the following definition and throughout in the paper, we will use the notation $\tilde{\mathcal{D}}$ to denote the product distribution over the marginals of a distribution $\mathcal{D}$. Analogously, if $(X_i)$ is a collection of random variables whose joint distribution is given by $\mathcal{D}$, then $(\tilde{X}_i)$ is an independent collection with joint distribution $\tilde{\mathcal{D}}$.
\begin{definition}[Submodular dominance] A distribution $\mathcal{D}$ over $2^U$ satisfies \emph{submodular dominance} if for every  submodular function $f: 2^U \to \mathbb{R}$, \[\underset{S \sim \mathcal{D}}{\E}\left[f(S)\right] \geq \underset{S \sim \mathcal{\tilde{D}}}{\E}\left[f(S)\right].\]
\end{definition} 
\Section{LP Relaxation and Two-Stage Matching Algorithm}
In this section, we give our two-stage matching algorithm and formalize its structural properties.
\Paragraph{LP relaxation}
As in previous work, our two-stage matching algorithm is based on rounding the solution to a Linear Programming (LP) relaxation of the optimum online algorithm.  Our relaxation takes the following form:
\begin{align}
\max \quad & \sum_{a \in B_1} \sum_{i \in I} \tilde{w}_{ia} \cdot x_{ia} + \mathbb{E}_{\theta \sim \mathcal{D}}\Biggl[\sum_{b \in B_2} \sum_{i \in I} \tilde{w}_{ib} \cdot y^\theta_{ib}\Biggr] \tag{$\text{LP}_{\text{on}}$} \label{LPon} \\[1mm]
\text{s.t.} \quad & \sum_{a \in B_1} x_{ia} + \sum_{b \in B_2} y^\theta_{ib}  \le 1, \quad \forall i \in I, \forall \theta \in \Theta \label{eq:lp_online_constraint}\\[1mm]
& \sum_{i \in I} x_{ia} \le 1, \quad \forall a \in B_1, \label{eq:lp_A}\\[1mm]
& \sum_{i \in I} y^\theta_{ib} \le 1, \quad \forall b \in B_2, \forall \theta \in \Theta, \label{eq:lp_B}\\[1mm]
& x_{ia} \ge 0, \quad \forall i \in I, a \in B_1,\qquad y^\theta_{ib} \ge 0, \quad \forall i \in I, b \in B_2, \forall \theta \in \Theta. \label{eq:lp_nonneg}
\end{align}
For all \(i\in I\) and \(a\in B_1\), \(x_{ia}\) denotes a fractional matching variable corresponding to the probability of matching the first-stage edge $(i,a) \in E$. Similarly, for each possible realization $\theta \in \Theta$, the fractional variable \(y^\theta_{ib}\) corresponds to the probability of matching the second-stage edge $(i,b) \in E^{\theta}$, conditioned on realizing $\theta$. We enforce that non-edges receive zero weight, i.e.,
\begin{equation} x_{ia}=0 \quad \forall (i,a)\notin E \qquad \text{and} \qquad y^\theta_{ib}=0 \quad \forall (i,b)\notin E^{\theta}.\end{equation} We will define node-level fractional degrees by $x_i := \sum_{a \in B_1} x_{ia}$ and $y_i^{\theta} := \sum_{b \in B_2} y_{ib}^{\theta}$. This formulation captures unweighted ($\tilde{w}_{ia}=1$), vertex-weighted $(\tilde{w}_{ia} = w_i)$, and edge-weighted $(\tilde{w}_{ia} = w_{ia})$ variants with a suitable $\tilde{w}$. We note that \eqref{LPon} is a valid relaxation because an online algorithm's first-stage decisions must be independent of the realization $\theta$; for a complete proof we refer the reader to \Cref{app:deferredproofs}. We note, however, that \eqref{LPon} is not a valid relaxation of the the optimum offline benchmark, which can correlate its decisions in the first stage with the realization of $\theta$. For more discussion on the fractional relaxation of the optimum offline, and a simple $\nicefrac{3}{4}$-rounding, we refer the reader to \Cref{app:offline}.

\Paragraph{Algorithmic framework}
Our \textsf{Round-Augment} algorithm (\Cref{alg: round-augment}) for two-stage matching begins by applying dependent rounding to the first-phase LP solution $\vec{x}$, potentially scaled down by a constant $c$. Then, once the second-stage edges are realized, it augments the first-stage matching with a max weight matching over the available nodes in the second stage. As we will see in Sections \ref{sec: vtx-weight} and \ref{sec: edge-weight}, the scaling factor $c$ will depend on the particular variant of the problem, with $c = 1$ for the vertex-weighted algorithm and $c = 2\sqrt{2} - 2 \approx 0.828$ for the edge-weighted algorithm.
\begin{algorithm}[H]
\caption{\textsf{Round-Augment}}
\begin{algorithmic}[1]  \label{alg: round-augment}
\STATE \textbf{Input:} Optimal LP solution \((\vec{x},\vec{y})\) for \eqref{LPon}, scalar $c \in [0,1]$

\STATE Upon arrival of batch $B_1$:
\STATE \quad Round \(c \cdot \vec{x}\) using dependent rounding to obtain matching \(M_1\)
\STATE Upon arrival of batch \(B_2\) with realization \(\theta\)
\STATE \quad $
I_{\mathrm{avail}} \gets \{ i \in I : i \text{ is unmatched in } M_1 \} $

\STATE \quad $M_2 \gets$ max weight matching on graph induced by \(I_{\mathrm{avail}}\) and batch \(B_2\) (using the edges \(E^{\theta}\))
\STATE \textbf{Return:} \(M = M_1 \cup M_2\)
\end{algorithmic}
\end{algorithm}

Because dependent rounding preserves marginals, the contribution from the first stage matching is simple to compute. Our analysis hence reduces to lower bounds on the expected weight of the maximum second-stage matching.

\Paragraph{Second-stage structure} For any fixed realization $\theta \in \Theta$, randomness in the second stage comes from the (random) availability of offline nodes after applying dependent rounding. As such, it will be useful to formalize this type of random node-induced graph via the following notation. 
\begin{definition}
    Given a bipartite graph $G = (V, E)$ and a collection of indicator random variables $A=(A_v)_{v \in V}$ for whether node $v \in V$ is active, $G_A$ denotes the (random) subgraph of $G$ induced by active nodes. (Any indicators that are not explicitly specified are assumed to be active with probability 1.)
\end{definition}

For any fixed second-stage graph $G^\theta$, we are interested in computing a max matching in the random graph $G^\theta_A$, where $A_i = \mathbbm{1}\{i \in I_{\textup{avail}}\}$ indicates  whether offline node $i \in I$ remains available to be matched after dependent rounding. It is natural to expect that $(A_i)_{i \in I}$ exhibit some form of negative dependence due to contention in the matching process. \Cref{lemma: dep-round-NA} formalizes this intuition; when $(L, R) = (B_1, I)$, the result shows that negative association of $(A_i)_{i \in I}$ is a property of the dependent rounding procedure. As the proof is largely mechanical, we leave it to \Cref{app:deferredproofs}.

\begin{restatable}{lemma}{depRoundNA} \label{lemma: dep-round-NA}
    Let $G = (L, R, E)$ be a bipartite graph supporting a fractional matching $\vec{x}$. If $A_v$ is an indicator for whether node $v \in R$ remains available after applying dependent rounding to $\vec{x}$, then the random variables $(A_v)_{v \in R}$ are NA.
\end{restatable}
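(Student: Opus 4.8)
The plan is to reduce to a statement about vertex \emph{degrees} and then induct over the steps of dependent rounding. Since negative association is preserved when one applies a coordinate-wise decreasing map to all coordinates, it is equivalent to prove that the vector of final degrees $(D_v)_{v \in R}$ is NA, where $D_v := \sum_{e \ni v}\mathbbm{1}[e \in M] = 1 - A_v \in \{0,1\}$. I would show this by induction: writing $\vec{D}^{(t)} = (D_v^{(t)})_{v \in R}$ for the partial (fractional) $R$-side degrees after $t$ rounding steps, note that $\vec{D}^{(0)}$ is a deterministic vector (the fractional degrees of $\vec{x}$), hence trivially NA, while $\vec{D}^{(T)}$ is the quantity of interest; the inductive claim is that $\vec{D}^{(t)}$ is NA for every $t$.

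For the inductive step I would condition on the history, which fixes the current fractional matching and thus the cycle-or-path $P$ that is selected, its bipartition $M_1 \cup M_2$, and the scalars $\alpha, \beta$. Every vertex internal to $P$---and every vertex, if $P$ is a cycle---lies on exactly two edges of $P$, one in $M_1$ and one in $M_2$; since the step adds $+\alpha, -\alpha$ (resp.\ $-\beta, +\beta$) to these two edges, the degree of such a vertex is unchanged. An endpoint of $P$ lying in $R$ lies on exactly one edge of $P$, so its degree changes by a fresh increment $\Delta \in \{+\alpha, -\beta\}$ of conditional mean $0$---this mean-zero property is exactly what makes dependent rounding preserve marginals. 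And when \emph{both} endpoints of $P$ lie in $R$, their $P$-edges belong to opposite parts of the bipartition, so one endpoint's degree changes by $\Delta$ and the other's by $-\Delta$. In summary, one step transforms $\vec{D}^{(t)}$ by either the identity, a mean-zero random shift of a single coordinate, or a mean-zero random transfer of mass between two coordinates.

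The induction then rests on a workhorse claim: if $\vec{X}$ is NA, then so is any $\vec{X}'$ obtained from it by such a mass-transfer step. In the clean case where $\Delta$ is drawn, with fresh randomness, from a law not depending on $\vec{X}$, the claim is routine---for disjoint $S, T$ and increasing $f, g$, conditioning on $\vec{X}$ and pushing the expectation over $\Delta$ inside shows that $\vec{x} \mapsto \E_\Delta[f(\vec{x}'_S)]$ is again increasing in the coordinates $f$ reads (a mixture of shifts of $f$), and likewise for $g$, so NA of $\vec{X}$ applied to these smoothed functions gives $\E[f(\vec{X}'_S)\,g(\vec{X}'_T)] \le \E[f(\vec{X}'_S)]\,\E[g(\vec{X}'_T)]$. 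The supporting facts---that a deterministic vector is NA, that NA restricts to sub-collections, and that increasing functions on disjoint coordinate blocks of an NA vector are jointly NA---are all standard \cite{joag1983negative}.

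The hard part will be that dependent rounding chooses $P$ adaptively from the realized state, so both the perturbed coordinates and the law of $\Delta$ (via $\alpha, \beta$) depend on $\vec{X} = \vec{D}^{(t)}$; in particular $\alpha, \beta$ can involve coordinates that $f$ does not read, so the smoothed function above need not split across the $S$/$T$ partition. I would deal with this by conditioning more carefully---on the transcript of cycles and paths chosen so far, together with the $L$-side of the current state---and then verifying NA-preservation separately in each of the (few) possible shapes of $P$, checking in each case that the residual dependence on the state does not break the inequality. This bookkeeping, though conceptually mechanical, is what makes the argument lengthy, and is why it is relegated to the appendix.
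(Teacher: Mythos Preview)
Your plan is essentially the paper's proof: both induct over rounding steps, track the $R$-side fractional degrees, observe that one step alters at most the two $R$-endpoints of the chosen path (and those by opposite amounts), and argue case-by-case that such an update preserves NA. Your three ``shapes'' (identity, single-coordinate shift, two-coordinate transfer) are exactly the paper's case split.

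There is one concrete gap in your sketch of the workhorse claim. Your smoothing argument (``$\vec{x}\mapsto \E_\Delta[f(\vec{x}'_S)]$ is increasing, likewise for $g$, so apply the IH to the smoothed pair'') covers only the sub-cases where the perturbed coordinates all lie on one side of the $S/T$ partition. When one endpoint $i$ lies in $S$ and the other $j$ lies in $T$, both $f(\vec{X}'_S)$ and $g(\vec{X}'_T)$ depend on the \emph{same} $\Delta$, so $\E_\Delta[fg]\neq \E_\Delta[f]\,\E_\Delta[g]$ and there is no product of smoothed increasing functions to feed into the IH. The paper treats this case separately: writing $f^{\pm},g^{\pm}$ for the values under the two shifts, it applies the IH to each of $\E[f^+g^-]$ and $\E[f^-g^+]$, then shows the leftover cross-term equals $p(1-p)\bigl(\E f^+-\E f^-\bigr)\bigl(\E g^+-\E g^-\bigr)\ge 0$, nonnegative precisely because the two endpoints move in opposite directions (so both factors are $\ge 0$ by monotonicity of $f,g$). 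You will need this computation; the smoothing observation alone does not close the case.

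On the adaptivity worry you raise---that $P,\alpha,\beta,p$ at step $k{+}1$ depend on the full edge-level state and not just on the degree vector---the paper does not spell this out either: it writes the smoothed function $h$ with $i,\alpha,\beta,p$ treated as constants. So this is not a point where your outline diverges from the paper; both are at the same level of informality here, and your proposed fix (``condition on the transcript plus the $L$-side state'') would need more detail to be convincing, since conditioning on the full transcript collapses the IH while averaging cases back together runs into the fact that mixtures of NA distributions need not be NA.
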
 
\Section{Vertex-weighted} \label{sec: vtx-weight}
The vertex-weighted variant of \eqref{LPon} has an integrality gap of at most $\nicefrac{7}{8}$, given by the 8-cycle example of \cite{devanur2013randomized}, which is shown in \Cref{fig:tight-gap} for completeness. In this instance, vertices have unit weight and $E^{\theta}$ completes an 8-cycle in two different ways, each with probability $\nicefrac{1}{2}$. The optimal solution to \eqref{LPon} assigns a fractional value of $\nicefrac{1}{2}$ to every edge for an objective value of 4. However, no online algorithm can guarantee a matching whose expected size is greater than $\nicefrac{7}{2}$.
\begin{figure}[h!]
\caption{Instance with an integrality gap of $7/8$. Dashed edges are revealed in the second stage, and colors show parity in each 8-cycle.}
\label{fig:tight-gap}
\centering
\includegraphics[width=0.85\textwidth]{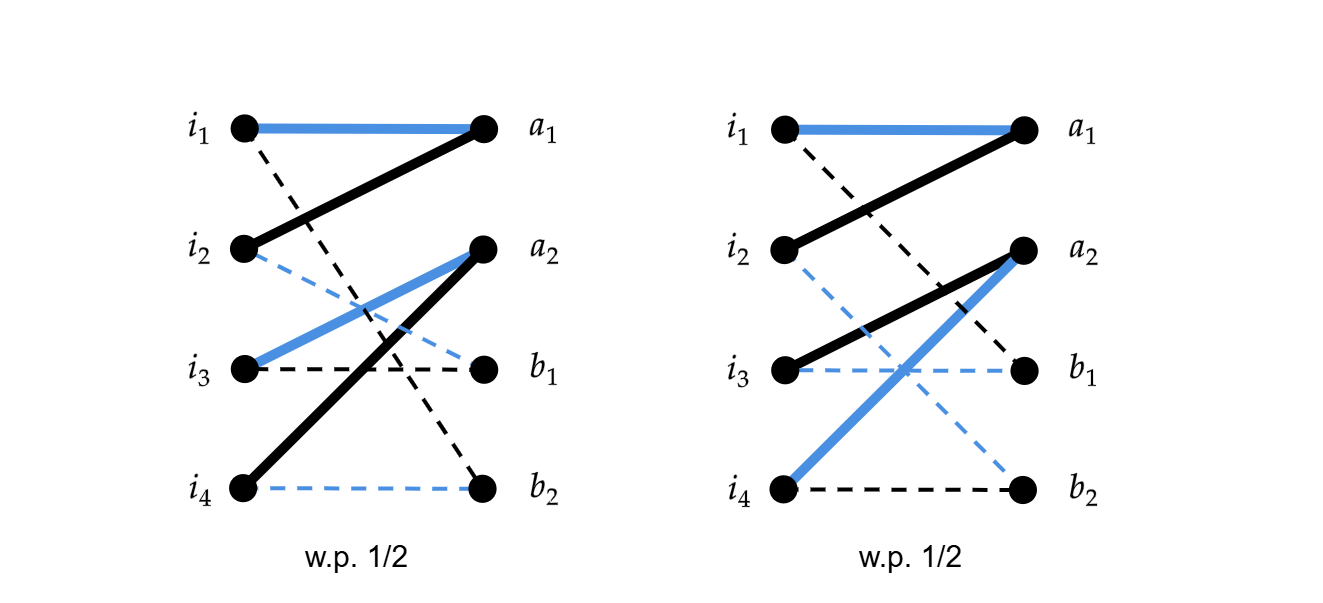}
\end{figure}

\Paragraph{Unweighted analysis}
We begin by showing that the vertex-weighted variant of \Cref{alg: round-augment} ($c = 1$) achieves an approximation ratio of $\nicefrac{7}{8}$ for the unweighted problem ($\tilde{w}_{e} = 1$), matching the LP integrality gap. We use $\nu(G)$ to denote the weight of the maximum matching in graph $G$; hence our task is to lower bound the expected max matching value $\E[\nu(G^\theta_A)]$ achieved in the second-stage graph $G^\theta=(B_2, I, E^\theta)$ after performing dependent rounding. Because the availabilities $(A_i)_{i \in I}$ are negatively associated, for the purposes of the analysis, a valid lower bound is given by assuming that the availability of offline nodes are fully independent after the first stage. We recall that we denote these independent random variables by $\tilde{A}$ using the notation introduced in \Cref{sec:prelims}. This is formalized in \Cref{lem:indsufficient}.

\begin{lemma} \label{lem:indsufficient}
    The expected gain of \Cref{alg: round-augment} is at least $\sum_{a \in B_1} \sum_i x_{ia} + \sum_{\theta \in \Theta} p^\theta \cdot \mathbb{E}[\nu(G^\theta_{\tilde{A}})]$, where $(\tilde{A}_i)_{i \in I}$ denotes a collection of independent Bernoulli random variables with $\Pr[\tilde{A}_i] = \Pr[A_i]$. 
\end{lemma}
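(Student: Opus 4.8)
The plan is to decompose the expected gain of \Cref{alg: round-augment} into its two stages and handle them separately. Since dependent rounding applied to $\vec{x}$ (with $c=1$) produces a random matching $M_1$ with exact marginals, the expected weight of the first-stage matching is precisely $\sum_{a \in B_1} \sum_{i \in I} x_{ia}$; this requires no inequality. The remaining task is to lower bound the expected weight of the second-stage matching $M_2$, which by construction is a maximum weight matching in $G^\theta_A$ once $\theta$ is realized. Conditioning on $\theta$ and averaging, the second-stage contribution equals $\sum_{\theta \in \Theta} p^\theta \cdot \mathbb{E}\big[\nu(G^\theta_A)\big]$, where $A = (A_i)_{i \in I}$ are the (dependent) availability indicators and the expectation is over the randomness of dependent rounding. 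So it suffices to show $\mathbb{E}[\nu(G^\theta_A)] \ge \mathbb{E}[\nu(G^\theta_{\tilde A})]$ for each fixed $\theta$.

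The heart of the argument is this stochastic comparison between dependent and independent availabilities. By \Cref{lemma: dep-round-NA} (applied with $(L,R) = (B_1, I)$), the availabilities $(A_i)_{i \in I}$ are negatively associated, hence in particular satisfy submodular dominance: for any submodular $f$, $\mathbb{E}_{S \sim \mathcal{D}}[f(S)] \ge \mathbb{E}_{S \sim \tilde{\mathcal{D}}}[f(S)]$, where $S$ is the (random) set of available offline nodes. The key observation is that for a fixed bipartite graph $G^\theta = (B_2, I, E^\theta)$, the map $S \mapsto \nu\big((G^\theta)_S\big)$ — the maximum matching size as a function of which offline nodes in $I$ are available — is exactly the rank function of the transversal matroid on ground set $I$ induced by $E^\theta$, and matroid rank functions are monotone and submodular. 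Applying submodular dominance to $f = \nu \circ (\text{restriction to } S)$ immediately yields
\[
\mathbb{E}\big[\nu(G^\theta_A)\big] \;\ge\; \mathbb{E}\big[\nu(G^\theta_{\tilde A})\big],
\]
and summing against $p^\theta$ and adding back the first-stage term gives the claimed bound.

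I would therefore structure the proof as: (i) write the expected gain as first-stage plus $\sum_\theta p^\theta \mathbb{E}[\nu(G^\theta_A)]$, citing exact marginals of dependent rounding; (ii) invoke \Cref{lemma: dep-round-NA} for NA of $(A_i)$ and then the implication NA $\Rightarrow$ submodular dominance; (iii) identify $\nu(G^\theta_\bullet)$ with a transversal matroid rank function to certify submodularity and monotonicity; (iv) conclude via submodular dominance. The main subtlety — really the only place care is needed — is step (iii): one must be precise that restricting the available \emph{offline} side of a bipartite graph and taking max matching size gives the transversal matroid rank (with ground set $I$), so that the function to which submodular dominance is applied is genuinely submodular on the correct ground set $2^{I}$, and that the marginals are preserved so that $\Pr[\tilde A_i] = \Pr[A_i]$ as stated. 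Everything else is bookkeeping.
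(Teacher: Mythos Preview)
Your proposal is correct and follows essentially the same approach as the paper: decompose into the two stages, use exact marginals of dependent rounding for the first stage, then apply \Cref{lemma: dep-round-NA} together with the implication NA $\Rightarrow$ submodular dominance to the transversal-matroid rank function $S \mapsto \nu(G^\theta_S)$ for each fixed $\theta$. The paper's proof is exactly this, written out slightly more compactly.
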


\begin{proof}
    For each second-stage realization $\theta \in \Theta$ and subset of offline nodes $S \subseteq I$, let $f_{\theta}(S)$ denote the size of the maximum matching in the second-stage graph $G^\theta$ between node sets $B_2$ and $S$. Note that $f_{\theta} : 2^{I} \rightarrow \mathbb{Z}_{\ge 0}$ is a submodular function (it is the rank function of a transversal matroid). As NA random variables satisfy submodular dominance, we have 
    \begin{align*}
    \mathbb{E}[\text{ALG}] 
        &= \sum_{a \in B_1} \sum_{i \in I} x_{ia} + \sum_{\theta \in \Theta} p^{\theta} \cdot \mathbb{E}[f_{\theta}( \{ i : A_i = 1 \})] \\
        &\geq \sum_{a \in B_1} \sum_{i \in I} x_{ia} + \sum_{\theta \in \Theta} p^{\theta} \cdot \mathbb{E}[f_{\theta}( \{ i : \tilde{A}_i = 1 \})] \\
        &= \sum_{a \in B_1} \sum_{i \in I} x_{ia} + \sum_{\theta \in \Theta} p^\theta \cdot \mathbb{E}[\nu(G^\theta_{\tilde{A}})]. \qedhere
    \end{align*}
\end{proof}
When vertex availabilities are independent, we use the following lemma to lower bound the expected size of the maximum second-stage matching.

\begin{lemma} \label{lem:main}
    Let $G = (V, E)$ be a bipartite graph supporting a fractional matching $\vec{x} \in [0,1]^E$. If $(A_v)_{v \in V}$ are independent Bernoulli random variables satisfying $\Pr[A_v] \geq x_v := \sum_{u \sim v} x_{uv}$ for all $v \in V$, then \[\E[\nu(G_A)] \ge \sum_{e \in E} x_e - \frac{1}{2} \sum_{v \in V} x_v (1-\Pr[A_v]).\]
\end{lemma}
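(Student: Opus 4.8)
The plan is to prove the bound by induction on the number of non-integral edges in $\vec{x}$, together with a reduction to the case where $G$ is acyclic. First I would argue that it suffices to prove the statement when $\vec{x}$ is supported on a forest: if $\vec{x}$ has a cycle of fractional edges, one can apply a single dependent-rounding-style step to rotate mass around the cycle, decreasing the number of fractional edges without changing any $x_v$ (hence without changing the right-hand side) and without changing $\sum_e x_e$; a convexity/averaging argument then shows $\E[\nu(G_A)]$ for the original $\vec{x}$ is at least the average of the two rotated instances, each of which has strictly fewer fractional edges. Repeating, we reduce to a forest, and then (since the components are independent) to a single tree.

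On a tree I would induct on the number of edges. The key structural move is to pick a leaf edge $e = (u,v)$ where, say, $v$ is a leaf. Condition on whether $v$ is active. If $v$ is active (probability $\Pr[A_v]\ge x_v$), then whenever $u$ is active we can match $e$ and recurse on $G - \{u,v\}$ with the induced fractional matching; if $v$ is inactive we recurse on $G - v$. The delicate point is bookkeeping the "credit": matching $e$ when both endpoints are available must account for the term $x_e$ in $\sum_e x_e$ as well as the loss terms $\tfrac12 x_u(1-\Pr[A_u])$ and $\tfrac12 x_v(1-\Pr[A_v])$ attached to $u$ and $v$. I expect the cleanest route is a careful greedy/peeling argument: process the tree rooting at a leaf, and when we delete $u$ along with $v$ we must show the contribution $\Pr[A_v]\cdot\Pr[A_u] + (\text{recursion on } G-\{u,v\})$ dominates $x_e + (\text{recursion on } G - v, \text{ with } u$'s remaining fractional degree$)$ — i.e. the half-loss at $u$ exactly pays for the "wasted" fractional mass on the other edges at $u$ that we forgo by committing $u$ to $v$. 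This is where the factor $\tfrac12$ and the possibility of random nodes on both sides enter, and it is the main obstacle: getting the induction hypothesis strong enough that the leaf-peeling step closes, likely by strengthening to a per-vertex statement or by tracking, for the internal neighbor $u$, the quantity $x_u - x_e$ and the availability $\Pr[A_u]$ separately.

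An alternative (and perhaps what the authors do) is an LP-duality / exchange argument: show that the greedy algorithm that, in a fixed vertex order, matches each available offline vertex to an available unmatched neighbor achieves the claimed bound in expectation, by writing $\E[\nu(G_A)] = \sum_v \Pr[v \text{ matched by greedy}]$ and lower-bounding each term. The term $\tfrac12 \sum_v x_v(1-\Pr[A_v])$ then arises as the total "overcounting correction" from edges whose both endpoints could have been matched. Either way, the heart of the matter is a local charging argument on a tree, and I would present the forest reduction first, then the induction, flagging the leaf-peeling inequality as the step requiring the most care.
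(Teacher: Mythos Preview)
Your forest reduction is morally right but overcomplicated: $\E[\nu(G_A)]$ does not depend on $\vec{x}$ at all, so no convexity or averaging is needed. Rotating mass around an even cycle leaves every $x_v$ (hence the entire RHS) literally unchanged while sending some edge to $\{0,1\}$; then delete that edge and recurse.

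The real gap is in the leaf-peeling. Your sketch matches the leaf edge $(u,w)$ whenever both endpoints are active and recurses on the smaller tree. For the inductive hypothesis to apply there, the parent $w$'s updated availability---namely $\Pr[A_w](1-\Pr[A_u])$, the probability $w$ is present and was not consumed by $u$---must still dominate $w$'s residual fractional degree $x_w - x_{uw}$. This can fail badly when $\Pr[A_u]\gg x_u$: e.g.\ $p_w=x_w=0.6$, $x_u=0.1$, $p_u=0.9$ gives $p_w(1-p_u)=0.06<0.5=x_w-x_u$. Your hedge about ``strengthening the induction hypothesis'' does not resolve this, and neither does the LP-duality alternative you float.

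The paper closes this with a reduction you are missing entirely: before peeling, it shifts fractional mass along the path between any two vertices with strict slack $p_v>x_v$, doing a case split on path parity to show the RHS can only increase. Iterating forces $p_v=x_v$ at every vertex except possibly one, taken as the root. Now every non-root leaf $u$ has $p_u=x_u$, so $p_w(1-p_u)=p_w(1-x_u)\ge x_w(1-x_u)\ge x_w-x_u$ and the recursion is valid; the leftover inequality (the paper's ``$\Delta\ge0$'') becomes a two-line polynomial check using $x_u\le x_w$. This path-tightening is the step where the argument actually closes, and your proposal does not contain it.
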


\begin{proof}
For convenience, let $p_v := \Pr[A_v = 1]$ for all $v \in V$. We proceed by induction on the number of edges $|E|$ in the graph, where for $|E|=0$ the statement trivially holds. The inductive step begins with three simple cases. 

If $G$ is not connected, we apply the inductive hypothesis to each connected component. If $G$ is connected but there exists some edge $e \in E$ with matching value $x_e = 0$, we apply the inductive hypothesis to the graph $G \setminus \{e\}$ with $\vec{x}$ restricted to $E \setminus \{e\}$; note that the right-hand-side is unchanged while the expected size of the maximum matching cannot increase. Finally, if there exists some edge $e=(u,v) \in E$ with matching value $x_{e} = 1$, note that $p_u = p_v = 1$. Thus, either $e$ is an isolated edge (and $G$ is disconnected) or there exists an adjacent edge $f$ with $x_f = 0$. In either case, we apply one of the two previous arguments. 

Next, we will show that it is sufficient to prove the result for acyclic graphs. Indeed, if $G$ contains a cycle $C$ in the support of $\vec x$, we can transform $\vec x$ to bring some $x_e \in (0,1)$ to $0$ or $1$. Because $C$ must be even length, partition $C=(e_1, \dots, e_{2k})$ into odd edges $C^+=\{e_1, e_3\, \dots\, e_{2k-1}\}$ and even edges $C^- = \{e_{2}, e_4, \dots, e_{2k}\}$. Consider setting
\[
\delta =\min  \left( \min_{e\in C^-}x_{e},\min_{e\in C^{+}}(1-x_{e}) \right),
\]
 and define a new fractional matching $\vec x'$ by
\[
x'_{e} =
\begin{cases}
x_{e}-\delta & e\in C^-,\\
x_{e}+\delta & e\in C^+,\\
x_{e}         & e\notin C.
\end{cases}
\]
Because $|C^+| = |C^-|$ and the fractional degrees of all nodes are preserved, the right-hand side is unchanged. Moreover, $x'_v \leq p_v$ for all $v \in V$. After this transformation, at least one edge in $C$ has $x_e' \in \{0,1\}$, and we apply one of the previous arguments.

It hence suffices to prove the statement for trees. We next show that it is only necessary to consider trees for which $p_v \neq x_v$ for at most one node in $G$. Toward that end, suppose that there exist two nodes $u, w \in V$ such that $x_u < p_u$ and $x_w < p_w$, and consider a path $P$ between $u$ and $w$. We will consider the cases where $|P|$ is odd or even separately.
\begin{enumerate}
    \item \textit{$|P|$ odd.} Partition $P = (e_1, \dots, e_{2k+1})$ into edge sets $P^+ = \{e_1, e_3, \dots, e_{2k+1}\}$ and $P^- = \{e_2, e_2, \dots, e_{2k}\}$. For $\delta = \min \left\{\min_{e \in P^-}x_e, \min_{e \in P^+} (1-x_e), p_u - x_u, p_w - x_w\right\},$ define a new fractional matching
    \[
        x'_{e} =
        \begin{cases}
        x_{e}-\delta & e\in P^-,\\
        x_{e}+\delta & e\in P^+,\\
        x_{e}         & e\notin P.
        \end{cases}
    \]
    Fractional degrees are unchanged for all nodes except $u$ and $w$, with $x_u' = x_u + \delta \leq p_u$ and $x_w' = x_w + \delta \leq p_w$. Proving the statement for $\vec{x}'$ immediately implies the statement for $\vec{x}$, as
    \begin{align*}
        \textup{RHS}(\vec{x}') - \textup{RHS}(\vec{x}) &=\left(\sum_{e \in E} x_e' - \frac{1}{2}\sum_{v \in V} x_v' \cdot (1-p_v)\right) - \left(\sum_{e \in E} x_e - \frac{1} {2}\sum_{v \in V} x_v \cdot (1-p_v)\right) \\ &= \delta  - \frac{1}{2}\left(\delta(1-p_u) + \delta(1-p_w)\right) \\
        &= \frac{\delta(p_u + p_w)}{2} \geq 0.
\end{align*}

\item \textit{$|P|$ even}. Assume without loss of generality that $p_u \geq p_w$ and the edge $e_1$ in $P = (e_1, \dots, e_{2k})$ is incident to node $u$. This time, partition $P$ into edge sets $P^+ = \{e_1,e_3, \dots, e_{2k-1}\}$ and $P^- = \{e_2, e_4, \dots, e_{2k}\}$. For $\delta = \min \left\{\min_{e \in P^-}x_e, \min_{e \in P^+} (1-x_e), p_u - x_u\right\}$, define a new fractional matching
\[
    x'_{e} =
    \begin{cases}
    x_{e}-\delta & e\in P^-,\\
    x_{e}+\delta & e\in P^+,\\
    x_{e}         & e\notin P.
    \end{cases}
\]
Once again, fractional degrees are unchanged except $x_u' = x_u + \delta \leq p_u$ and $x_w' = x_w - \delta \geq 0$. Moreover,
\begin{align*}
   \textup{RHS}(\vec{x}') - \textup{RHS}(\vec{x})
   &= \left(\sum_{e \in E} x_e' - \frac{1}{2}\sum_{v \in V} x_v' \cdot (1-p_v)\right) - \left(\sum_{e \in E} x_e - \frac{1}{2}\sum_{v \in V} x_v \cdot (1-p_v)\right) \\
   &= - \frac{1}{2}\left(\delta(1-p_u) - \delta(1-p_w)\right) \\
    &= \frac{\delta(p_u - p_w)}{2} \geq 0.
\end{align*}
\end{enumerate}
The result of the applicable transformation is that at least one of the following hold: (i) $x'_u = p_u$, $x'_w = p_w$, or (ii) $x'_e \in \{0,1\}$ for some edge $e$. In iteratively applying these transformations, we can either invoke the inductive hypothesis, or we terminate in a matching $\vec x$ with at most one $v \in V$ such that $x_v < p_v$.

Finally, we prove the statement for trees where $p_v = x_v$ for all $v \in V$ except possibly the root. For a non-root leaf $u$ with parent $w$, we define $G^\uparrow = G \setminus \{u\}$ and take $\vec{x}'$ to be the restriction of the fractional matching $\vec{x}$ to edges in $G^\uparrow$; note $x'_w = x_w - x_u$. Define new availabilities $A'_w =\textup{Ber}(p_w \cdot (1-p_u))$ and $A'_v = A_v$ for all $v \in V \setminus \{u,w\}$ for the graph $G^\uparrow$, and let $p'_v := \Pr[A'_v]$ for vertices $v \in V(G^\uparrow)$. Observe that $p'_w \ge x_w'$ because $$p_w \cdot (1 - p_u) = p_w \cdot (1 - x_u) \ge x_w \cdot (1 - x_u) \ge x_w - x_u. $$ 
We consider a matching strategy that matches $(u, w)$ if both $u$ and $w$ are active, and then computes a maximum matching in the remaining realized graph. The expected size of the matching returned by this method is a lower bound on $\E[\nu(G_A)]$; in particular, we have 
\begin{align*}
    \E[\nu(G_A)] &\geq p_u \cdot p_w + (1-p_u)p_w \cdot \E[\nu(G_A \setminus \{u\} \cup \{w\})] + \left(1- (1-p_u)p_w\right) \cdot \E[\nu(G_A \setminus \{u, w\})]\\
    &= p_u \cdot p_w + \E[\nu(G^\uparrow_{A'})].
\end{align*}
Applying the inductive hypothesis to $G^{\uparrow}$, $\vec{x}'$, and $A'$, we have
\begin{align*}
\E[\nu(G_A)] &\ge  p_u \cdot p_w + \sum_{e \in E} x_e - x_u -  \frac{1}{2} \left( \sum_{v \in V \setminus \{u\}}  x'_v \cdot (1-p'_v) \right) \\
&=  p_u \cdot p_w + \sum_{e \in E} x_e - x_u -  \frac{1}{2} \left( \sum_{v \in V}  x_v \cdot (1-p_v) - \ x_u \cdot (1-p_u) -  x_w \cdot (1-p_w) +   x_w' \cdot (1 - p_w') \right),
\end{align*} 
so it suffices to show $\Delta \ge 0$ for 
\begin{align*}
\Delta &:= p_u p_w - x_u
 + \tfrac12\bigl(x_u(1-p_u)+x_w(1-p_w)-x_w'(1-p_w')\bigr)  \\
 &=  x_u p_w - x_u
+\tfrac12\bigl(x_u(1-x_u)+x_w(1-p_w)-(x_w-x_u)\bigl(1-p_w(1-x_u)\bigr)\bigr).
\end{align*}
Observe
\[
\frac{\partial\Delta}{\partial p_w}
=x_u+\tfrac12\bigl[-x_w + (x_w-x_u)(1-x_u)\bigr]
= \frac{x_u}{2} \cdot (1 - x_w + x_u) \ge 0,
\]
hence it suffices to prove $\Delta \ge 0$ when $p_w=x_w$.  Finally, we calculate
\begin{align*}
2\Delta\bigl|_{p_w=x_w}
&= 2x_u x_w - 2x_u + x_u(1-x_u) + x_w(1-x_w) - (x_w - x_u)\bigl(1 - x_w(1-x_u)\bigr)\\ 
&= x_u(1-x_w)(x_w - x_u)\ge0
\end{align*}
where for the final inequality we used that $x_w \ge x_u$, as $u$ is a leaf adjacent to $w$. 
\end{proof}

\Cref{lem:main} implies an approximation ratio of $\nicefrac{7}{8}$ for \Cref{alg: round-augment}, run on unweighted graphs with scaling factor $c=1$. Indeed, we can lower bound $\E[\nu(G^\theta_{\tilde{A}})]$ by applying \Cref{lem:main} to the second-stage graph with indicators $\left((1)_{b\in B_2}, (\tilde{A}_i)_{i \in I}\right)$ and the fractional matching $y^{\theta}$. Constraint \eqref{eq:lp_online_constraint} of \eqref{LPon} ensures that $\Pr[\tilde{A}_i] = 1-x_i \ge \sum_{b \in B_2} y^{\theta}_{ib} = y_i^\theta$ for all $i \in I$. It follows that 
\begin{align}
\E[\nu(G^\theta_{\tilde{A}})] \ge \sum_{i \in I} y_i^{\theta} - \sum_{i \in I} \tfrac{1}{2} \cdot y_i^{\theta} \cdot (1 - (1-x_i)) = \sum_{i \in I}  y_i^{\theta}  \left(1 - \tfrac{1}{2} \cdot x_i \right) . \label{eq:78bound}
\end{align}
Hence we can bound our approximation ratio as 
\begin{align*}
    \frac{\mathbb{E}[\text{ALG}]}{\text{OPT\eqref{LPon}}} &\ge \frac{\sum_{a \in B_1} \sum_{i \in I} x_{ia} + \sum_{\theta} p^{\theta} \cdot \E[\nu(G^{\theta}_{\tilde{A}})]}{\sum_{a \in B_1} \sum_{i \in I} x_{ia} + \sum_{\theta} p^{\theta} \cdot \sum_{b \in B_2} \sum_{i \in I} y_{ib}^{\theta}} &&\text{(\Cref{lem:indsufficient})} \\
    &\ge \frac{\sum_{\theta} p^{\theta} \cdot \left(  \sum_{i \in I} x_i +  \sum_{i \in I} y_{i}^{\theta} \cdot \left( 1 - \tfrac{x_i}{2} \right) \right) }{\sum_{\theta} p^{\theta} \cdot \left( \sum_{i \in I} x_i + \sum_{i \in I} y_{i}^{\theta} \right)}. && \text{(\Cref{eq:78bound})}
\end{align*}
Because $x_i + y_i^{\theta} \le 1$, we conclude by bounding
\[ x_i + y_i^\theta\Bigl(1-\tfrac{x_i}{2}\Bigr) 
\ge x_i + y_i^{\theta} - \tfrac{1}{2} \left( \frac{x_i + y_i^{\theta}}{2} \right)^2 = (x_i + y_i^{\theta}) \cdot \left(1  - \tfrac{1}{2} \cdot  \frac{x_i + y_i^{\theta}}{4}  \right) \ge  \tfrac78\,(x_i+y_i^\theta)\] which demonstrates an approximation ratio of $\nicefrac{7}{8}$. 

\Paragraph{Vertex-weighted analysis}
The following lemma extends our unweighted result to the vertex-weighted setting (i.e., $\tilde{w}_{ij} = w_i$).
\begin{lemma} \label{lem:mainvtx}
    Consider a bipartite graph $G = (L, R, E)$ with non-negative vertex weights $(w_v)_{v \in R}$ that supports a fractional matching $\vec{x} \in [0,1]^E$. If $(A_v)_{v \in R}$ are independent Bernoulli random variables satisfying $\Pr[A_v] \geq x_v$ for all $v \in R$, then \[\E[\nu(G_A, w)] \ge  \frac{1}{2} \sum_{v \in R} w_v \cdot x_v (1 + \Pr[A_v]),\]
\end{lemma}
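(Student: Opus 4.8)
The plan is to reduce the vertex-weighted statement to the unweighted \Cref{lem:main} via a layer-cake decomposition that exploits the matroid structure of $\nu(\cdot,w)$. Throughout write $p_v := \Pr[A_v]$, and for a threshold $t \ge 0$ let $R^{\ge t} := \{v \in R : w_v \ge t\}$, let $G^{\ge t}$ be the subgraph of $G$ induced by $L \cup R^{\ge t}$, and let $\vec x^{\ge t}$ be the restriction of $\vec x$ to the edges of $G^{\ge t}$. Since deleting vertices of $R$ removes no edge incident to a surviving vertex of $R$, $\vec x^{\ge t}$ is still a feasible fractional matching in $G^{\ge t}$ and moreover $\sum_{u \sim v} x^{\ge t}_{uv} = x_v \le p_v$ for every $v \in R^{\ge t}$; declaring the $L$-side nodes deterministically active (so $\Pr[A_l] := 1 \ge x_l$), we see that \Cref{lem:main} applies to $\bigl(G^{\ge t},\, \vec x^{\ge t},\, (A_v)_{v \in L \cup R^{\ge t}}\bigr)$ for every $t \ge 0$.

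First I would establish the deterministic layer-cake identity. Fix a realization of the availabilities and write $H := G_A$; then $\nu(H,w)$ is the maximum weight of an independent set in the transversal matroid $\mathcal M_H$ on $R$ induced by $H$ (a set $S \subseteq R$ being independent iff $H$ has a matching saturating $S$). Since weights are nonnegative, the greedy algorithm, run in order of decreasing $w_v$, computes this value, and after processing all vertices of weight at least $t$ its selected set is a basis of $R^{\ge t}$; this yields the standard identity $\nu(H,w) = \int_0^\infty \mathrm{rank}_{\mathcal M_H}(R^{\ge t})\, dt$, and $\mathrm{rank}_{\mathcal M_H}(R^{\ge t}) = \nu(H^{\ge t})$, where $H^{\ge t}$ is $H$ restricted to $L \cup R^{\ge t}$. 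Because restricting to high-weight $R$-vertices commutes with deleting inactive vertices, $H^{\ge t} = (G^{\ge t})_A$, and taking expectations (Tonelli applies, all terms being nonnegative) gives
\[
\E[\nu(G_A, w)] = \int_0^\infty \E\!\bigl[\nu\bigl((G^{\ge t})_A\bigr)\bigr]\, dt .
\]

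Next I would bound each layer using \Cref{lem:main}. On $G^{\ge t}$ with $\vec x^{\ge t}$ and the availabilities above, the $L$-terms vanish (they have $p_l = 1$), and since $G^{\ge t}$ is bipartite with every edge incident to exactly one vertex of $R^{\ge t}$ we have $\sum_{e \in E(G^{\ge t})} x_e = \sum_{v \in R^{\ge t}} x_v$; hence
\[
\E\!\bigl[\nu\bigl((G^{\ge t})_A\bigr)\bigr] \ \ge\ \sum_{v \in R^{\ge t}} x_v - \tfrac12 \sum_{v \in R^{\ge t}} x_v(1 - p_v) \ =\ \tfrac12 \sum_{v \in R^{\ge t}} x_v(1 + p_v).
\]
Substituting into the integral and exchanging sum and integral,
\[
\E[\nu(G_A, w)] \ \ge\ \int_0^\infty \tfrac12 \sum_{v \in R^{\ge t}} x_v(1+p_v)\, dt \ =\ \tfrac12 \sum_{v \in R} x_v(1+p_v) \int_0^\infty \mathbbm{1}\{w_v \ge t\}\, dt \ =\ \tfrac12 \sum_{v \in R} w_v\, x_v\,(1+p_v),
\]
which is the claimed bound.

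I expect the main obstacle to be the layer-cake step: making precise that, for each fixed realization, thresholding the vertex weights of the transversal-matroid weight-maximization problem exactly recovers the unweighted maximum-matching problem on the induced subgraph $(G^{\ge t})_A$ (handling ties in $w$, which affect only a measure-zero set of thresholds), together with verifying that the hypotheses of \Cref{lem:main}---feasibility of $\vec x^{\ge t}$ and $\Pr[A_v] \ge x^{\ge t}_v$---survive the restriction. Once that is in place, the remaining two displays are routine.
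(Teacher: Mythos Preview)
Your proof is correct. Both your argument and the paper's ultimately reduce the weighted bound to applications of \Cref{lem:main} on the threshold-induced subgraphs $G[L\cup\{v:w_v\ge t\}]$, but the packaging differs. You use the continuous layer-cake identity $\nu(G_A,w)=\int_0^\infty \nu\bigl((G^{\ge t})_A\bigr)\,dt$ for transversal matroids and integrate the unweighted bound directly. The paper instead runs a discrete reverse induction on the number of ties at the top weight: it lowers the $k{-}1$ heaviest weights down to the next level, invokes \Cref{fact:weighted-matching} to argue that the greedy algorithm's matching probabilities $\pi_i$ are unchanged under this perturbation, and then observes that $\sum_{i<k}\pi_i=\E[\nu(H_A)]$ for the top-$(k{-}1)$ subgraph $H$, to which \Cref{lem:main} applies. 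This is Abel summation in place of your integral, with \Cref{fact:weighted-matching} serving the same purpose as your layer-cake identity. Your route is shorter and bypasses both the explicit induction and the auxiliary fact; the paper's makes the greedy-algorithm interpretation explicit.
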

When all vertex weights are equal, this statement is an immediate corollary of \Cref{lem:main}. For the general case, we will consider the matching decisions made by the (optimal) greedy algorithm for transversal matroids. In particular, for a fixed bipartite graph $G=(L, R, E)$, we consider the transversal matroid $\mathcal{F}$ on subsets of $R$ where feasible sets are exactly those whose elements can be perfectly matched. Because the greedy algorithm is optimal for finding a max-weight feasible set in this matroid, we have the following useful result, which enables an inductive proof of \Cref{lem:mainvtx}.

\begin{fact}\label{fact:weighted-matching}
    Fix a bipartite graph $G=(L, R, E)$, fractional matching $x \in [0,1]^E$, and independent binary random variables $(A_v)_{v \in R}$. For each permutation $\sigma : [|R|] \rightarrow R$, there exists a maximum (vertex-weighted) matching algorithm $\mathcal{A}_{\sigma}$ such that for any vertex weights $w$ and $w'$ on $R$ with \[w_{\sigma(1)} \ge w_{\sigma(2)} \ge \ldots \ge w_{\sigma(n)} \quad \text{and} \quad w'_{\sigma(1)} \ge w'_{\sigma(2)} \ge \ldots \ge w'_{\sigma(n)},\] we have that\[\Pr[\mathcal{A}_\sigma \text{ matches } v \text{ in $(G_A, w)$}] = \Pr[\mathcal{A}_\sigma \text{ matches } v \text{ in $(G_A, w')$}] \quad \text{for all $v \in R$}.\]
\end{fact}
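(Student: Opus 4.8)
The plan is to take $\mathcal{A}_\sigma$ to be the matroid greedy algorithm for the transversal matroid $\mathcal{F}$ on $R$, run with ties in the weights broken according to the fixed permutation $\sigma$, and then to output a matching that saturates exactly the greedily selected set on the $R$-side. Recall that $S \subseteq R$ is independent in $\mathcal{F}$ precisely when $G$ (equivalently $G_A$, when $S$ consists of active vertices, since the edges leaving $S$ are unchanged) admits a matching saturating $S$. Given such an $S$, restricting any matching that saturates $S$ to its edges incident to $S$ gives a matching whose $R$-side is exactly $S$ and whose vertex-weighted value is $\sum_{v\in S} w_v$; conversely the $R$-side of any matching is independent in $\mathcal{F}$. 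Hence a maximum vertex-weighted matching corresponds to a maximum-weight independent set in $\mathcal{F}$, and by the Rado--Edmonds theorem the greedy rule --- visit the (active) vertices of $R$ in non-increasing weight order, adding each one that keeps the current set independent --- is optimal. Breaking ties by $\sigma$ still yields a max-weight independent set (e.g.\ replacing $w_{\sigma(i)}$ by $w_{\sigma(i)}+\epsilon(n-i)$ for small enough $\epsilon>0$ makes the weights strictly $\sigma$-decreasing and greedy unambiguous without altering the family of optimal sets, and the perturbed run coincides with the $\sigma$-tie-broken run on $w$). So $\mathcal{A}_\sigma$ is a legitimate maximum vertex-weighted matching algorithm.

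The key point is that the \emph{set} greedy selects depends only on the order in which it visits the active vertices and on the independence oracle of $\mathcal{F}$, and neither of these references the numerical weights. I would fix an arbitrary realization of $A$, with active set $R_A$. If $w$ satisfies $w_{\sigma(1)}\ge\cdots\ge w_{\sigma(n)}$, then $\mathcal{A}_\sigma$ visits the vertices of $R_A$ exactly in the order that $\sigma$ induces on $R_A$: the distinct weight values appear along $\sigma$, and within each equal-weight block ties are broken by $\sigma$, so the overall visiting order is $\sigma$ itself --- and this is insensitive to which $\sigma$-consecutive entries happen to be equal. Therefore, for any two weight vectors $w,w'$ both consistent with $\sigma$, the visiting order is identical, and since the ``add if it stays independent'' test is weight-free, the greedily selected set $S(A,\sigma)$ is literally the same in $(G_A,w)$ and $(G_A,w')$. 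Because the matching $\mathcal{A}_\sigma$ outputs saturates exactly $S(A,\sigma)$ on the $R$-side, for every $v\in R$ and every realization of $A$ we get the pointwise identity
\[
\1\{\mathcal{A}_\sigma \text{ matches } v \text{ in } (G_A,w)\} = \1\{v\in S(A,\sigma)\} = \1\{\mathcal{A}_\sigma \text{ matches } v \text{ in } (G_A,w')\}.
\]

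Taking expectations over $A$ of this identity gives $\Pr[\mathcal{A}_\sigma\text{ matches }v\text{ in }(G_A,w)] = \Pr[\mathcal{A}_\sigma\text{ matches }v\text{ in }(G_A,w')]$ for all $v\in R$, which is exactly the claim; in fact, since the argument is realization-by-realization, it does not even use independence of the $A_v$. I don't anticipate a genuine obstacle: the only things needing care are (i) that a single permutation $\sigma$ must simultaneously be a valid non-increasing order for both $w$ and $w'$, which is exactly what the hypothesis supplies, and (ii) keeping the greedily selected independent set conceptually separate from the particular matching that realizes it.
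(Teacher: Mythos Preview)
Your proposal is correct and follows exactly the approach the paper intends: define $\mathcal{A}_\sigma$ as the matroid greedy algorithm on the transversal matroid with ties broken by $\sigma$, and observe that the greedily selected set depends only on the visiting order (hence on $\sigma$) and the independence oracle, not on the numerical weights. The paper states this as a fact with only the one-line justification that ``the greedy algorithm is optimal for finding a max-weight feasible set in this matroid,'' so your write-up simply fills in the details that the paper leaves implicit.
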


\begin{proof}[Proof of \Cref{lem:mainvtx}]
    Enumerate $R$ as $[n]$ and assume without loss of generality that $w_1 \geq w_2 \geq \dots \geq w_n$. We will prove the claim via reverse induction on the size of $S = \{i \in [n]: w_i = w_1\}$. For the base case $|S| = n$, we have from \Cref{lem:main} that
    \[\E[\nu(G_A, w)] \ge  w_1 \cdot \frac{1}{2} \cdot \sum_{v \in R}   x_v\cdot (1 + p_v). \]  
    Now, suppose that $w_1 =\dots = w_{k-1} > w_{k} \geq \dots \geq w_n$. For $\epsilon = w_{k-1} - w_k > 0$, consider new vertex weights $w'$ formed from $w$ by uniformly decreasing $w_1, \dots, w_{k-1}$ by $\epsilon$. Write
    \begin{align*}
    \Opt(w)   &= \mathbb{E}[\nu(G_A, w)] &\qquad
    \Opt(w')  &= \mathbb{E}[\nu(G_A, w')] \\
    \text{LB}(w)    &= \tfrac{1}{2} \sum_{i=1}^n w_i x_i (1 + p_i) &\qquad
    \text{LB}(w')   &= \tfrac{1}{2} \sum_{i=1}^n w_i' x_i (1 + p_i)
\end{align*}
    If $\mathcal{A}_{\textup{id}}$ is the algorithm guaranteed by \Cref{fact:weighted-matching} for the identity permutation $\sigma = \textup{id}$ and 
  $\pi_i := \Pr[\mathcal{A}_{\textup{id}}\text{ matches $i$ in $(G_A, w)$}]$, then 
  \[
    \Opt(w)=\sum_{i=1}^n w_i\,\pi_i \quad \text{and} \quad 
    \Opt(w')=\sum_{i=1}^n w_i'\,\pi_i, \]
since $w_1 \ge w_2 \ge \ldots \ge w_n$ and $w_1' \ge w_2' \ge \ldots \ge w_n'$. It follows that
    \[\Opt(w) -\Opt(w') = \epsilon\sum_{i=1}^{k-1} \pi_i \quad \text{and} \quad 
    \textup{LB}(w)-\textup{LP}(w')=\tfrac\epsilon2\sum_{i=1}^{k-1} x_i(1+p_i).
  \]
  Rearranging terms,
  \[
    \Opt(w)-\textup{LB}(w)
    = \biggl(\Opt(w')-\textup{LB}(w')\biggr)+\epsilon\left(\sum_{i=1}^{k-1}\pi_i-\tfrac12\sum_{i=1}^{k-1} x_i(1+p_i)\right).
  \]
  The inductive hypothesis applied to $w'$ implies $\Opt(w') \geq \text{LB}(w')$, so the first term on the right-hand-side is non-negative. For the second term, let $H$ be the node-induced subgraph of $G$ over nodes $L \cup \{1, \dots, k-1\}$. Applying \Cref{lem:main} to $H$ with unit node weights, availabilities $(A_i)_{i=1}^{k-1}$, and fractional matching $\vec{x}$ restricted to $H$ yields that $\sum_{i=1}^{k-1} \pi_i = \E[\nu(H_A)] \ge\tfrac12\sum_{i=1}^{k-1} x_i(1+p_i)$.  Thus $\Opt(w) \geq \textup{LB}(w)$, completing the proof.
\end{proof}

Applying \Cref{lem:mainvtx} to the second stage graph $G^\theta = (B_2, I, E^\theta)$ and matching $\vec{y}^\theta$ with independent node availabilities $(\tilde{A}_i)_{i \in I}$ implies a $\nicefrac{7}{8}$-approximation for the vertex-weighted setting in an analogous fashion to our earlier calculation. Indeed, 
\begin{align*}
    \frac{\mathbb{E}[\text{ALG}]}{\text{OPT\eqref{LPon}}} &\ge \frac{ \sum_{i \in I} x_i \cdot w_i + \sum_{\theta} p^{\theta} \cdot \E[\nu(G^{\theta}_{\tilde{A}}, w)]}{\sum_{i \in I} x_i \cdot w_i + \sum_{\theta} p^{\theta} \cdot \sum_{i \in I} y_{i}^{\theta} \cdot w_i} &&\text{(\Cref{lem:indsufficient})} \\
    &\ge \frac{\sum_{\theta} p^{\theta} \cdot \left( \sum_{i \in I} x_i \cdot w_i +  \tfrac{1}{2} \cdot \sum_{i \in I} y_i^{\theta} \cdot w_i \cdot (2-x_i) \right) }{\sum_{\theta} p^{\theta} \cdot \left( \sum_{i \in I} x_i \cdot w_i + \sum_{i \in I} y_i^{\theta} \cdot w_i \right)}.  && \text{(\Cref{lem:mainvtx})}
\end{align*}
We conclude by bounding \[x_i + \tfrac{1}{2} \cdot y_i^{\theta} \cdot (2 - x_i) = x_i + y_i^{\theta} - \frac{x_i \cdot y_i^{\theta}}{2} \ge x_i + y_i^{\theta} - \tfrac{1}{8} \cdot (x_i + y_i^{\theta})^2 \ge \tfrac{7}{8} \cdot (x_i + y_i^{\theta}).\]

\Paragraph{Tight vertex-uniform rounding of two-stage fractional matchings}

Up to this point, our approximation guarantees have been global rather than uniform for each vertex. We now show our vertex-weighted result implies that $\nicefrac{7}{8}$ is the tight vertex-uniform guarantee for rounding a fractional matching in two stages, hence proving the first half of \Cref{cor:fractionalmatching} (restated below for convenience). 

\fractionalmatchingbounds* 

To prove this result, we demonstrate the existence of a particular \emph{$\lambda$-bounded} contention resolution scheme (CRS) on transversal matroids. Given a vector $\lambda \in [0,1]^n$, this type of CRS guarantees that each element $i \in [n]$ is selected with probability at least $\lambda_i$. We note that uniform or ``$c$-balanced'' contention resolution schemes are a special case where $\lambda_i = c \cdot \Pr[i \in A]$.

\begin{definition}[$\lambda$-bounded CRS]
Let $\mathcal{F}\subseteq 2^{[n]}$ be a downwards‐closed family of feasible sets and let $A$ denote a random set of active elements from $[n]$. A \emph{$\lambda$‐bounded contention resolution scheme} for $(\mathcal{F},A)$ is a (randomized) mapping from realizations of $A$ to a subset $S \subseteq A$ with $S \in \mathcal{F}$
such that 
\begin{equation} \label{eq: bounded-crs}
    \Pr\bigl[i\in S\bigr]\ge \lambda_i
  \quad\forall\,i\in[n].
\end{equation}
\end{definition}

By LP duality (an idea found earlier in the literature, e.g. \cite{chekuri2014submodular}), a $\lambda$-balanced CRS exists when there is no ``proof'' of the infeasibility of \eqref{eq: bounded-crs} via a weighted subset of elements. For completeness, we formalize this claim in \Cref{lem:dualityCRS} and offer a short proof, noting that the idea was introduced in prior work.

\begin{lemma} \label{lem:dualityCRS}
There exists a $\lambda$-bounded contention resolution scheme for $(\mathcal{F},A)$ if and only if for every weight vector $\vec{w} \in \mathbb{R}_{\ge 0}^n$,
  \[
    \sum_{i=1}^n w_i \cdot \lambda_i \le 
    \E_{A} \Bigl[\max_{S \subseteq A, S \in \mathcal{F} }\sum_{i\in S}w_i\Bigr].
  \]
\end{lemma}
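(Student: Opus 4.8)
The plan is to prove the equivalence by handling the two directions separately; the ``if'' direction is the substance, and I would establish it via LP duality (equivalently, the minimax theorem), as suggested in the statement.

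First, the easy ``only if'' direction. Assuming a $\lambda$-bounded CRS with output set $S$, for any $\vec w \in \mathbb{R}_{\ge 0}^n$ I would write $\sum_i w_i\lambda_i \le \sum_i w_i\Pr[i\in S] = \E_A[\sum_{i\in S}w_i] \le \E_A[\max_{S'\subseteq A,\, S'\in\mathcal F}\sum_{i\in S'}w_i]$, where the first step uses the defining guarantee $\Pr[i\in S]\ge\lambda_i$ together with $\vec w\ge 0$, the middle step is linearity of expectation over the randomness of both $A$ and the scheme, and the last step holds because $S$ is always a feasible subset of $A$.

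For the ``if'' direction, I would first observe that since $A$ ranges over the finite set $2^{[n]}$ and $\mathcal F$ is finite, the existence of a $\lambda$-bounded CRS is a finite linear feasibility problem: for each realization $a \in \mathrm{supp}(A)$ and each $S \in \mathcal F$ with $S \subseteq a$, introduce a variable $q_a(S) \ge 0$ (the conditional probability of outputting $S$ given $A=a$), subject to $\sum_S q_a(S) = 1$ for every $a$ and $\sum_a \Pr[A=a]\sum_{S\ni i}q_a(S) \ge \lambda_i$ for every $i$. A feasible point is exactly a $\lambda$-bounded CRS, so it remains to show feasibility follows from the hypothesis. I would then invoke Farkas' lemma: infeasibility is witnessed by multipliers $\vec w \ge 0$ on the coverage constraints and free multipliers $(u_a)$ on the normalization equalities with $u_a \ge \Pr[A=a]\sum_{i\in S}w_i$ for every admissible pair $(a,S)$ and $\sum_a u_a < \sum_i w_i\lambda_i$. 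Taking each $u_a$ at its smallest admissible value, namely $\Pr[A=a]\cdot\max_{S\subseteq a,\, S\in\mathcal F}\sum_{i\in S}w_i$, turns $\sum_a u_a$ into $\E_A[\max_{S\subseteq A,\, S\in\mathcal F}\sum_{i\in S}w_i]$, so infeasibility is equivalent to the existence of some $\vec w\ge 0$ violating the displayed inequality; hence the hypothesis (that no such $\vec w$ exists) forces feasibility. Alternatively, one can run the zero-sum game in which the scheme player chooses $\{q_a(\cdot)\}$ and the adversary a distribution $\mu\in\Delta([n])$ with bilinear payoff $\sum_i\mu_i(\Pr[i\in S]-\lambda_i)$; von Neumann's theorem applies since both strategy sets are compact and convex, the adversary's best response to a fixed $\mu$ optimizes each realization separately and yields $\E_A[\max_{S\subseteq A,\, S\in\mathcal F}\sum_{i\in S}\mu_i] - \sum_i\mu_i\lambda_i$, and the extension from $\mu\in\Delta([n])$ to arbitrary $\vec w\ge 0$ uses positive homogeneity of both sides.

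The step needing the most care is the duality bookkeeping: putting the primal feasibility LP into a standard form, pairing dual variables to constraints with consistent sign conventions, and checking that the optimal choice of the normalization multipliers collapses the dual infeasibility certificate into the single inequality $\sum_i w_i\lambda_i \le \E_A[\max_{S\subseteq A,\, S\in\mathcal F}\sum_{i\in S}w_i]$. The remaining ingredients --- finiteness of the LP, identifying $\{q_a(S)\}$ with a CRS, and the ``only if'' direction --- are routine.
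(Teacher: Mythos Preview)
Your proposal is correct and follows essentially the same approach as the paper: both encode the existence of a $\lambda$-bounded CRS as a finite linear feasibility system in the selection probabilities, then apply LP duality (the paper phrases it as ``strong duality,'' you phrase it as Farkas' lemma) and observe that optimizing out the free multipliers on the normalization constraints collapses the infeasibility certificate to the negation of the displayed inequality. The only cosmetic differences are your parametrization by conditional probabilities $q_a(S)$ summing to $1$ versus the paper's joint probabilities summing to $\Pr[A]$, and your additional minimax formulation, which the paper does not include.
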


\begin{proof} The forward direction is immediate. For the backward direction, we encode the existence of a CRS admitting balance function $f(\cdot)$ via a system of linear inequalities, and then show that the implication follows from strong duality. Let $\Pr[A]$ denote the probability that set $A$ is active.  We introduce variables $p_{S \mid A}$ for all $A \subseteq [n]$ and $S \in 2^A \cap \mathcal{F}$, interpreted as the probability of selecting \(S\) with random active set \(A\). The following system of linear (in)equalities encodes the existence of the desired CRS. 

\begin{align}
 & \sum_{A\subseteq[n]}  \sum_{S \ni i}  p_{S\mid A}  \ge \lambda_i && \forall i \in [n]\\
& \sum_{S} p_{S\mid A} = \Pr[A] && \forall A \subseteq [n]\\
& p_{S\mid A}\ge0 && \forall A \subseteq [n], S \in 2^A \cap \mathcal{F}
\end{align}
Via duality this system is infeasible when there exists any solution to the following system:
\begin{align}
\sum_{i\in S} w_i + \beta_A &\le 0 && \forall A \subseteq[n], S \in 2^A \cap \mathcal{F} \\
\sum_{i=1}^n w_i \lambda_i - \sum_{A\subseteq [n]} \beta_A \Pr[A] &> 0 &&  \\
w_i &\ge 0 && \forall i \in [n]
\end{align}
Note that if there is a feasible solution to this system, there must be one with $$\beta_A = \min_{S \in 2^A \cap \mathcal{F}} \left(-\sum_{i\in S} w_i \right) = - \max_{S \in 2^A \cap \mathcal{F}} \sum_{i\in S} w_i.$$ With this substitution, the condition for primal infeasibility reduces to the existence of a weight vector $\vec{w} \in \mathbb{R}_{\ge 0}^n$ such that
\[
\sum_{i=1}^n w_i \lambda_i + \sum_{A\subseteq[n]} \left( \max_{S \in 2^A \cap \mathcal{F}} \sum_{i\in S} w_i \right) \Pr[A] > 0,
\]
which completes the proof of the lemma. 
\end{proof}

Returning to our result on rounding fractional matchings, it is straightforward to now demonstrate a $\nicefrac{7}{8}$ vertex-level rounding for two-stage fractional matching via our previous results. Given fractional matchings $\vec{x}$ and $\vec{y}$, we apply dependent rounding as in \Cref{alg: round-augment} in the first stage. By submodular dominance (as in \Cref{lem:indsufficient}) and \Cref{lem:mainvtx}, if $A$ denotes the availability status of offline nodes after dependent rounding, then
$$ \E[\nu(G_{A}, w)]  \ge \E[\nu(G_{\tilde{A}}, w)] \ge  \frac{1}{2} \cdot \sum_{i} w_i \cdot y_i  \cdot (2 - x_i)$$ for any weights $w$. 
By \Cref{lem:dualityCRS}, there exists a $\lambda$-bounded CRS to produce a matching for the second stage for $\lambda_i :=  \nicefrac{1}{2} \cdot y_i \cdot (2 - x_i).$ In the combined process of matching in the first stage according to dependent rounding and in the second according to this CRS, the total probability that any offline node $i \in I$ is matched is given by  $$\Pr[i \text{ matched}] \ge x_i + \frac{1}{2} \cdot y_i  \cdot (2 - x_i) \ge \frac{7}{8} \cdot (x_i + y_i),$$ where the final inequality holds as $x_i + y_i \le 1$. 

As a corollary, possibly of independent interest, we note that this result implies the existence of a CRS with uniformly \emph{additive} guarantees for transversal matroids. This is different from most CRS guarantees, which are uniform up to some \emph{multiplicative} constant. It is an interesting future direction to further generalize and apply this type of guarantee. 

\begin{corollary}
    For a transversal matroid feasibility constraint $\mathcal{F} \subseteq 2^{[n]}$, if $\vec{x} \in \textup{\textsf{conv}}(\mathcal{F})$ and every $i \in [n]$ is realized independently with probability $x_i$, there exists a contention resolution scheme with additive guarantee $$\Pr[i \in S] \ge \frac{1}{2} \cdot x_i \cdot (1 + x_i) \ge x_i - \frac{1}{8}.$$
\end{corollary}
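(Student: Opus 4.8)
The plan is to instantiate \Cref{lem:mainvtx} on the bipartite graph defining the transversal matroid and then pass to a CRS via \Cref{lem:dualityCRS}. First I would fix a bipartite graph $G = (L, R, E)$ with $R = [n]$ realizing $\mathcal{F}$, so that $S \in \mathcal{F}$ exactly when $S$ can be perfectly matched into $L$. Since $\vec{x} \in \conv(\mathcal{F})$ lies in the transversal matroid polytope, and this polytope is precisely the projection onto $R$ of the bipartite fractional matching polytope of $G$, there is a fractional matching $\vec{z} \in [0,1]^E$ with $\sum_{u \sim v} z_{uv} = x_v$ for every $v \in R$. I would then take $(A_v)_{v \in R}$ to be independent with $A_v = \Ber(x_v)$; this is exactly the ``realized independently with probability $x_i$'' model in the statement, and it satisfies $\Pr[A_v] = x_v \ge x_v$, the hypothesis of \Cref{lem:mainvtx} applied with $\vec{x} := \vec{z}$.

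Next I would apply \Cref{lem:mainvtx} with an arbitrary nonnegative weight vector $w$ on $R$, obtaining $\E[\nu(G_A, w)] \ge \tfrac{1}{2}\sum_{v \in R} w_v\, x_v (1 + x_v)$. The key observation is that $\nu(G_A, w)$, the maximum-weight matching in the random node-induced subgraph $G_A$, equals $\max_{S \subseteq A,\, S \in \mathcal{F}} \sum_{i \in S} w_i$, the maximum weight of a feasible subset of the active elements --- this is immediate from the definition of a transversal matroid. Hence for every $\vec{w} \in \mathbb{R}_{\ge 0}^n$,
\[
\sum_{i=1}^n w_i \cdot \tfrac{1}{2} x_i (1 + x_i) \;\le\; \E_A\Bigl[\max_{S \subseteq A,\, S \in \mathcal{F}} \sum_{i \in S} w_i\Bigr],
\]
which is exactly the dual feasibility condition in \Cref{lem:dualityCRS} with $\lambda_i := \tfrac{1}{2} x_i (1 + x_i)$. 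Therefore a $\lambda$-bounded contention resolution scheme for $(\mathcal{F}, A)$ exists, i.e.\ one with $\Pr[i \in S] \ge \tfrac{1}{2} x_i (1 + x_i)$ for all $i$. The stated additive form then follows by completing the square: $\tfrac{1}{2} x_i (1 + x_i) - \bigl(x_i - \tfrac{1}{8}\bigr) = \tfrac{1}{8}(2x_i - 1)^2 \ge 0$.

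I do not expect a genuine obstacle here, as this is essentially a bookkeeping reduction; the two points that warrant care are the classical identification of $\conv(\mathcal{F})$ with the projection of the bipartite matching polytope (so that $\vec{x}$ indeed lifts to a fractional matching $\vec{z}$ with the prescribed $R$-degrees), and verifying that the independent activation model in the corollary coincides verbatim with the $(A_v)$ appearing in \Cref{lem:mainvtx} so that the lemma's weighted bound may be invoked for all $w \ge 0$ simultaneously.
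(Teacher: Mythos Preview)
Your proposal is correct and mirrors the paper's intended derivation: instantiate \Cref{lem:mainvtx} on the bipartite graph defining the transversal matroid (using the standard lift of $\vec{x}\in\conv(\mathcal{F})$ to a fractional matching with $R$-degrees $x_v$ and independent $A_v=\Ber(x_v)$), then invoke \Cref{lem:dualityCRS} with $\lambda_i=\tfrac12 x_i(1+x_i)$. The paper presents this corollary without a separate proof precisely because it is the immediate specialization you describe; your completing-the-square step for the additive $x_i-\tfrac18$ bound is also correct.
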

\Section{Edge-weighted} \label{sec: edge-weight}

The edge-weighted variant of \eqref{LPon} has an integrality gap of at most $2\sqrt{2}-2$. We show this via the following example, which is related to the upper bound of \cite{naor2025online} for online rounding of fractional matchings. 

Consider an instance with $2n$ offline nodes. The first batch consists of $n$ online nodes $a = 1, 2, \ldots, n$ with node $a$ neighboring two offline nodes with both edges of weight 1, such that each offline node has exactly one neighbor in the first batch. In the second phase, a single online node is realized, neighboring a pair of offline nodes chosen uniformly at random, with both edges of weight $(1 + \sqrt{2}) \cdot n$. The optimal value of \eqref{LPon} is $(2 + \sqrt{2}) \cdot n$, achieved by setting the value of every variable to $\nicefrac{1}{2}$. The expected gain of an online algorithm that matches $k \cdot n$ online nodes in the first batch (for $k \in [0,1]$) is given by $k \cdot n + \left( 1 -  \frac{(kn)(kn-1)}{(2n)(2n-1)} \right) \cdot (1 + \sqrt{2})  \cdot n $. We can compute that 
\[\lim_{n \rightarrow \infty} \max_{k \in [0,1]} \frac{k \cdot n + \left( 1 -  \frac{(kn)(kn-1)}{(2n)(2n-1)} \right) \cdot (1 + \sqrt{2})  \cdot n}{(2 + \sqrt{2} ) \cdot n} = \max_{k \in [0,1]} \frac{ k + \left(1 - \frac{k^2}{4} \right) (1 + \sqrt{2}) }{2 + \sqrt{2}} = 2\sqrt{2} - 2.\]
This demonstrates the following claim. 

\begin{claim}
    There exists a sequence of instances for which the ratio of the expected value of optimum online and the optimal solution to \eqref{LPon} tends to $2\sqrt{2} - 2$. 
\end{claim}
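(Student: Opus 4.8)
The instance, the optimal LP solution, and the closed form for the gain of an online algorithm that matches a $k$-fraction of the first batch are already laid out in the text above; the plan is to (i) confirm that the optimal value of \eqref{LPon} equals $(2+\sqrt{2})n$ exactly, (ii) show that the displayed maximum over $k$ is precisely the value of optimum online on this instance, and (iii) pass to the limit $n\to\infty$.

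For (i), the all-$\tfrac12$ assignment is feasible: every offline node $i$ satisfies $x_i+y_i^\theta\le\tfrac12+\tfrac12=1$ (with equality only when $i$ lies in the pair selected by $\theta$), each first-batch node has fractional degree $1$, and the single second-batch node has fractional degree $1$ in every realization. Its objective value is $2n\cdot\tfrac12 + (1+\sqrt2)n = (2+\sqrt2)n$, where the second term comes from $2\cdot\tfrac12\cdot(1+\sqrt2)n$ in each realization. Conversely, \eqref{eq:lp_A} bounds the first-stage contribution by $\sum_{a\in B_1}1=n$, and since every realization $\theta$ has a single second-batch node, \eqref{eq:lp_B} bounds the second-stage contribution by $(1+\sqrt2)n$; hence the optimal value of \eqref{LPon} is exactly $(2+\sqrt2)n$.

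For (ii), observe that the $n$ pairs of offline neighbors of the first-batch nodes partition $I$, that every first-batch edge has weight $1$, and that the second-stage pair is uniform over all $\binom{2n}{2}$ offline pairs. Consequently, whatever first-stage matching an online algorithm commits to, its value depends only on the (possibly random) number $M$ of matched first-batch nodes: exactly $M$ offline nodes are then occupied, the second-stage reward $(1+\sqrt2)n$ is collected iff at least one of the two uniformly random offline endpoints is free, and this has probability $1-\binom{M}{2}/\binom{2n}{2}=1-\tfrac{M(M-1)}{2n(2n-1)}$. Writing $g(m):=m+\bigl(1-\tfrac{m(m-1)}{2n(2n-1)}\bigr)(1+\sqrt2)n$, the algorithm's expected value is $\mathbb{E}[g(M)]$; since $g$ is concave in $m$ (its quadratic coefficient is negative), Jensen gives $\mathbb{E}[g(M)]\le g(\mathbb{E}[M])\le \max_{m\in[0,n]}g(m)$, while every integer $m\in\{0,\dots,n\}$ is achievable by matching any $m$ first-batch nodes to one of their neighbors. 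Thus $\max_{m\in\{0,\dots,n\}}g(m)\le \OptOn\le \max_{m\in[0,n]}g(m)$.

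For (iii), dividing through by $(2+\sqrt2)n$ and taking $m=\lfloor kn\rfloor$, we have $\tfrac{m(m-1)}{2n(2n-1)}\to\tfrac{k^2}{4}$ as $n\to\infty$, so both bounds in the previous display converge to $\max_{k\in[0,1]}\tfrac{k+(1-k^2/4)(1+\sqrt2)}{2+\sqrt2}$; its maximizer $k=2(\sqrt2-1)\in[0,1]$ gives the value $2\sqrt2-2$, exactly the computation already displayed in the text. Hence $\OptOn/\text{OPT}\eqref{LPon}\to 2\sqrt2-2$, proving the claim. The only step requiring genuine (though mild) care is the reduction in (ii): that the value depends only on the count $M$ — and not on which first-batch nodes are matched or which neighbor within a pair is chosen — is a consequence of the exchangeability of the offline nodes under the second-stage distribution, and that randomizing $M$ cannot help is the concavity of $g$; everything else is the arithmetic indicated above.
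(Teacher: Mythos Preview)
Your proposal is correct and follows the same construction and calculation as the paper. You simply supply the justifications the paper leaves implicit: the matching upper bound showing $\textup{OPT}\eqref{LPon}=(2+\sqrt2)n$, the exchangeability argument reducing optimum online to a function of the count $M$ alone, and the concavity step ruling out gains from randomizing $M$.
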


\Paragraph{Edge-weighted analysis} We now show that the edge-weighted variant of \Cref{alg: round-augment} (with $c = 2\sqrt{2} - 2$) achieves an approximation ratio of $2\sqrt{2} - 2 \approx 0.828$ for the edge-weighted problem. This matches the integrality gap of the edge-weighted LP.

Our approach is to lower bound the weight of the maximum matching in the second stage by the size of a matching produced by an algorithm which recursively matches within a max-depth star subgraph, then proceeds on the remaining available graph. Note that the desired approximation would be immediate if this procedure could guarantee that each edge $e \in E$ is included in the final matching with probability at least its scaled matching weight $c \cdot y_e$. \Cref{lemma: star-CRS} shows constructively that such a guarantee can be achieved for star graphs (i.e., 1-uniform matroids) by a monotone contention resolution scheme. We will inductively apply this guarantee to star graphs pulled from a tree, using monotonicity in the CRS to argue that Negative Association is preserved. The proof relies in part on the algebraic bound in Fact~\ref{fact:star-bound}, a proof of which can be found in \Cref{app:deferredproofs}.

\begin{restatable}{fact}{factCRSineq} \label{fact:star-bound}
    For $c:=2 \sqrt{2} - 2$ and $x \geq 3$,
    \[c^{x+1}\left(1-\frac{1}{x+1}\right)^{x+1} \leq c^{x}\left(1-\frac{1}{x}\right)^{x}.\]
\end{restatable}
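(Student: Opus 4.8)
The plan is to reduce the claimed inequality to a short one-variable concavity argument. Dividing both sides by the positive quantity $c^{x}(1-1/x)^{x}$, the statement is equivalent to $c\cdot h(x+1)/h(x)\le 1$, where $h(t):=(1-1/t)^{t}$. Setting $\psi(t):=\ln h(t)=t\ln(1-1/t)$ and taking logarithms, it is enough to prove that for all real $x\ge 3$,
\[
\phi(x):=\psi(x+1)-\psi(x)\;\le\;\ln(1/c)\;=\;-\ln\!\left(2\sqrt2-2\right).
\]

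The first real step is to show that $\psi$ is concave on $(1,\infty)$. A direct computation gives $\psi'(t)=\ln(1-1/t)+\tfrac{1}{t-1}$ and then $\psi''(t)=\tfrac{1}{t(t-1)}-\tfrac{1}{(t-1)^{2}}=\tfrac{-1}{t(t-1)^{2}}<0$ for $t>1$. Concavity of $\psi$ means $\psi'$ is decreasing, so $\phi'(x)=\psi'(x+1)-\psi'(x)<0$; that is, $\phi$ is strictly decreasing on $(1,\infty)$. Hence $\phi(x)\le\phi(3)$ for every $x\ge 3$, and $\phi(3)=\psi(4)-\psi(3)=\ln\!\big((3/4)^{4}\big)-\ln\!\big((2/3)^{3}\big)=\ln\frac{2187}{2048}$.

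The last step is the single numerical check $\tfrac{2187}{2048}\le \tfrac1c$. Since $\tfrac1c=\tfrac{1}{2(\sqrt2-1)}=\tfrac{\sqrt2+1}{2}$, this is equivalent to $4374\le 2048(1+\sqrt2)$, i.e. $2326\le 2048\sqrt2$, which holds because $2048\sqrt2>2048\cdot1.4=2867.2$. Chaining the bounds, $\phi(x)\le\ln\tfrac{2187}{2048}\le\ln\tfrac1c$ for $x\ge3$; exponentiating and multiplying back through by $c^{x}(1-1/x)^{x}$ yields the claim.

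I do not expect a genuine obstacle here — it is a routine calculus estimate — but the point that needs care is that the hypothesis $x\ge 3$ is truly used: the inequality fails for $x$ slightly larger than $1$ (where $h(x)$ is tiny and $\phi(x)$ is large), so the argument hinges on combining monotonicity of $\phi$ with the explicit evaluation $\phi(3)=\ln\tfrac{2187}{2048}$ and verifying it lies below $\ln\tfrac1c$. If one prefers to avoid treating $x$ as real — in the application $x$ is an integer star-depth — the same three steps work verbatim for integer $x$, since concavity of $\psi$ still gives $\phi(x)\le\phi(3)$ along the integers.
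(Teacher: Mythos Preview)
Your proof is correct. Structurally it matches the paper's: both reduce the inequality to showing that the ratio $h(x)/h(x+1)$, where $h(t)=(1-1/t)^t$, is at least $c$, establish that this ratio is monotone in $x$, and then verify the boundary case $x=3$ numerically. The only genuine difference is in how monotonicity is obtained. The paper rewrites the ratio as $f(x)=\tfrac{x+1}{x}\bigl(1-\tfrac{1}{x^2}\bigr)^x$ and bounds its log-derivative from below using the auxiliary inequality $\log(1-y)\ge -y-y^2$; you instead observe that $\psi(t)=t\ln(1-1/t)$ is concave by a one-line second-derivative computation, which immediately makes $\phi(x)=\psi(x+1)-\psi(x)$ decreasing. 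Your route is slightly cleaner in that it avoids the extra log estimate, while the paper's formulation makes the value $f(3)\approx 0.936$ directly comparable to $c\approx 0.828$ without rationalizing $1/c$.
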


\begin{lemma}[Star CRS]\label{lemma: star-CRS}
Fix $c := 2\sqrt{2} - 2$. Let $\{y_i\}_{i=1}^n$ be non-negative numbers with $\sum_i y_i \le 1$ and $(A_i)_{i=1}^n$ be NCD Bernoulli random variables satisfying $\Pr[A_i] \ge 1 - c \cdot (1 - y_i)$ for all $i \in [n]$. Say index $i$ is \emph{active} if $A_i=1$. There exists a monotone\footnote{By \emph{monotone} we refer to the property that for $i \in A \subseteq B$, $\Pr[i \text{ selected} \mid A \text{ available}] \ge \Pr[i \text{ selected} \mid B \text{ available}]$.}contention resolution scheme, which upon observing the realizations $(A_i)_{i=1}^n$, selects at most one active index such that $$\Pr[i \textup{ selected}] = c \cdot y_i.$$
\end{lemma}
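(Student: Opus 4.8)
The plan is to reduce the lemma to a single ``Hall-type'' inequality and then invoke \Cref{lem:dualityCRS}. Write $p_i := \Pr[A_i = 1]$, so the hypothesis rearranges to $1 - p_i \le c(1 - y_i)$ for all $i$. First dispose of the degenerate case: if $y_{i^\star} = 1$ for some $i^\star$, then $\sum_i y_i \le 1$ forces $y_i = 0$ for all $i \ne i^\star$ and $p_{i^\star} = 1$, so the scheme that always selects $i^\star$ works and is trivially monotone. In the main case I would establish
\begin{equation}\label{eq:starhall}
  c \sum_{i \in S} y_i \;\le\; \Pr\!\big[\exists\, i \in S : A_i = 1\big] \qquad \text{for every } S \subseteq [n] .
\end{equation}

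Granting \eqref{eq:starhall}, the rest goes through standard machinery. Take $\mathcal{F} = \{T \subseteq [n] : |T| \le 1\}$, the rank-$1$ matroid on the star. For $\vec w \ge \vec 0$, order coordinates so that $w_1 \ge \dots \ge w_n$ and put $w_{n+1} := 0$; then $\E_A[\max_{i \in A} w_i] = \sum_{k}(w_k - w_{k+1})\,\Pr[\exists\, i \le k : A_i = 1]$ while $\sum_i w_i(c y_i) = \sum_k (w_k - w_{k+1})\, c\!\sum_{i \le k} y_i$, so \eqref{eq:starhall} (applied to every prefix under every ordering, i.e.\ to every $S$) is precisely the condition of \Cref{lem:dualityCRS} for $\lambda_i = c y_i$. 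Hence a $(c y_i)$-bounded CRS for $(\mathcal{F}, A)$ exists; independently discarding a tentatively selected $i$ with probability $1 - c y_i/\pi_i$, where $\pi_i \ge c y_i$ is its guarantee, yields a scheme whose marginals are \emph{exactly} $c y_i$, which still selects at most one active element and only rescales each conditional selection probability by a constant. Finally I would realize this scheme monotonically: fix an order of $[n]$ (sorted by decreasing $y_i$), and upon observing the active set walk through it in that order, selecting the current active element --- if nothing has been selected yet --- with the probability that makes its overall marginal equal $c y_i$. This is monotone in the footnote's sense because enlarging the active set only inserts competitors ahead of any given element; that each per-step probability lies in $[0,1]$ is what unpacks, via \eqref{eq:starhall} together with negative cylinder dependence, into a short but delicate check.

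It remains to prove \eqref{eq:starhall}, and this is where \Cref{fact:star-bound} enters. Using negative cylinder dependence in its $\overline{A_i}$-form,
\[
  \Pr\!\big[\exists\, i \in S : A_i = 1\big] = 1 - \E\!\Big[\textstyle\prod_{i \in S}\overline{A_i}\Big] \;\ge\; 1 - \prod_{i \in S}(1 - p_i) \;\ge\; 1 - \prod_{i \in S} c(1 - y_i) \;=\; 1 - c^{|S|}\!\prod_{i \in S}(1 - y_i),
\]
the last inequality from $1 - p_i \le c(1 - y_i)$. By AM--GM, $\prod_{i \in S}(1 - y_i) \le \bigl(1 - \tfrac{1}{|S|}\sum_{i \in S} y_i\bigr)^{|S|}$, so it suffices to show that for every integer $k \ge 1$ and $s \in [0,1]$,
\[
  c\,s + c^{k}\Bigl(1 - \tfrac{s}{k}\Bigr)^{k} \;\le\; 1 .
\]
The left-hand side is convex in $s$ (second derivative $c^{k}\tfrac{k-1}{k}(1-s/k)^{k-2} \ge 0$), so it is maximized at $s \in \{0,1\}$; at $s = 0$ it is $c^k \le 1$, and at $s = 1$ it is $c + c^{k}(1-1/k)^{k}$. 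Thus it suffices that $c^{k}(1-1/k)^{k} \le 1 - c$ for all $k \ge 1$: for $k = 1$ this is $0 \le 1 - c$; for $k = 2$ it is an \emph{equality}, since $c^2/4 = (2\sqrt2 - 2)^2/4 = 3 - 2\sqrt2 = 1-c$, which is exactly the relation defining $c = 2\sqrt2 - 2$; and for $k \ge 3$, \Cref{fact:star-bound} says $k \mapsto c^{k}(1-1/k)^{k}$ is non-increasing, while $c^{3}(2/3)^{3} = \tfrac{8}{27}c^3 < 1-c$ by a one-line numerical check, so $c^{k}(1-1/k)^k \le \tfrac{8}{27}c^3 < 1-c$ for all $k \ge 3$.

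The main obstacle is this last chain with the \emph{tight} constant: because the $k = 2$ case holds with equality, there is zero slack, so the bound $1 - p_i \le c(1-y_i)$, the cylinder-dependence step, and AM--GM must all be applied losslessly, and the value of $c$ is pinned down by $c^2/4 = 1-c$; pushing the constant any higher breaks \eqref{eq:starhall}. By comparison, the passage from \eqref{eq:starhall} to a CRS via \Cref{lem:dualityCRS}, the trimming to exact marginals, and the monotone realization are routine in spirit, with the only real care needed being the verification that the greedy per-step selection probabilities never exceed $1$.
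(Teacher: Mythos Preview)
Your proof of the Hall-type inequality $c\sum_{i\in S}y_i\le\Pr[\exists\, i\in S:A_i=1]$ is correct and matches the paper's argument almost exactly (you use convexity in $s$ where the paper observes monotonicity via $\partial_x f \ge 0$; both reduce to checking $c + c^k(1-1/k)^k \le 1$ at $k \in \{1,2,3\}$ and invoking \Cref{fact:star-bound}).

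The genuine gap is the monotone realization. \Cref{lem:dualityCRS} as stated yields only a (not necessarily monotone) CRS, so you need a separate argument, and your proposed sequential-greedy scheme does not work. Take $n=2$, $y_1=y_2=\tfrac12$, and $A_1,A_2$ independent with $p_1=p_2=1-\tfrac{c}{2}$; this is NCD and meets the marginal hypothesis. The Hall condition holds with \emph{equality} on $\{1,2\}$, since $c(y_1+y_2)=c=1-\tfrac{c^2}{4}=\Pr[A_1\cup A_2]$, so a monotone CRS does exist (select the unique active index if there is one, otherwise flip a fair coin). But in your greedy order, $q_1=\tfrac{c/2}{1-c/2}$, and the probability of reaching index $2$ is $p_2(1-p_1q_1)=(1-\tfrac{c}{2})^2\approx 0.343 < \tfrac{c}{2}\approx 0.414$, forcing $q_2>1$. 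No reordering helps: because the Hall bound is tight on $\{1,2\}$, any scheme hitting the target marginals must select \emph{whenever} something is active, which a sequential scheme with $q_1<1$ cannot do. The paper instead proves (in the appendix) that the Hall condition alone implies a monotone CRS, by running LP duality over the class of \emph{monotone} deterministic schemes; the point is that for any $z\ge 0$ the rule ``select a maximum-$z$ active element'' is itself monotone and attains $\E[\max_{i\in A}z_i]$, which bounds the dual. Independent thinning then gives exact marginals while preserving monotonicity.
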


\begin{proof}
Existence of such a selection scheme is equivalent to the condition $\sum_{i \in K} c \cdot y_i \le \Pr[\bigcup_{i \in K} A_i]$ for all subsets $K \subseteq \{1, \dots, n\}$. This is analogous to \Cref{lem:dualityCRS}, with the added requirement of monotonicity; the proof follows ideas of \cite{chekuri2014submodular}, and for completeness can be found in \Cref{app:deferredproofs}. Using the negative cylinder dependence of $(A_i)_{i=1}^n$ and the bounds on individual marginal probabilities, we have $$\Pr \left[ \bigcup_{i \in K} A_i \right] = 1 - \Pr \left[ \bigwedge_{i \in K} \overline{A_i} \right] \ge 1 - \prod_{i \in K} \Pr [ \overline{A_i} ] \ge 1 - \prod_{i \in K} c(1-y_i). $$
Thus, it suffices to show
\begin{equation}
c \sum_{i \in K} y_i + c^{|K|} \prod_{i \in K} (1-y_i) \le 1. \label{ineq:onelevelCRS_NCD_final}
\end{equation}  
For convenience, define $S_K := \sum_{i \in K} y_i$ and $m := |K|$. Note that $\prod_{i \in K} (1-y_i) \le (1-S_K/m)^{m}$ by the AM-GM inequality. So, to prove \eqref{ineq:onelevelCRS_NCD_final}, it is sufficient to show that
\[f(x,m) := c x + c^m (1-x/m)^m \le 1\]
for all $x \in [0,1]$ and $m \geq 1$. The partial derivative of $f$ with respect to $x$ is
\begin{align*}
    \frac{\partial f}{\partial x} 
        = c - c^m (1-x/m)^{m-1} 
        \geq c - c^m 
        \geq 0.
\end{align*}
 Thus, $f(x, m)$ achieves its maximum at $x=1$ for any fixed $m \geq 1$. Define $h(m) := F(1,m)$. We need to show $h(m) = c + c^m(1-1/m)^m\le 1$. It suffices to consider $m \in \{1, 2, 3\}$ since $h(m+1) \leq h(m)$ for all $m \geq 3$ by \cref{fact:star-bound}. Computing each case explicitly, $h(1) = c \le 1$. $h(2) = c + c^2(1/2)^2 = c + c^2/4$. Since $c = 2\sqrt{2}-2$ is a positive root of $x+x^2/4=1$ (i.e., $x^2+4x-4=0$), we have $h(2)=1$. Finally, $h(3) = c + c^3(2/3)^3 \approx 0.996$.
\end{proof}

To extend our result to general graphs, we apply an inductive argument where we match according to the monotone CRS guaranteed by \Cref{lemma: star-CRS} within a max-depth star with root-most node $g$, then recursively compute a matching in the remaining graph with an updated availability $A_g'$ for node $g$. To couple with the event that $g$ remains available for matching, we select $A_g'$ such that $\Pr[A_g']$ is exactly the probability that $g$ is initially available and also not matched in the star.

Because the existence of this CRS depends on the NA property of node availabilities, it is critical that availabilities remain NA throughout the induction. As $g$'s new availability may have correlations with the availabilities of other nodes we require a more careful argument to handle this step compared to our previous vertex-weighted analysis which had independence. To do use, we use the monotonicity of our star-CRS along with the closure properties of NA (see \Cref{sec:prelims}), along with the following basic fact showing that applying independent randomness to one random variable in a collection of NA random variables maintains the NA property (see \Cref{app:deferredproofs} for a proof). 
\begin{restatable}{claim}{claimNAprelim}\label{claim:NAprelim}
    If \((X_0, X_1, \ldots, X_n)\) are NA random variables with \(X_0 \in [0,1]\) almost surely, then the random variables \((\textup{Ber}(X_0), X_1, \ldots, X_n)\) are also NA, where $\textup{Ber}(X_0)$ is realized according to independent randomness. 
\end{restatable}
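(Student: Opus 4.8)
The plan is to represent the fresh Bernoulli coordinate by an explicit coupling and then invoke the negative association of the original collection. Introduce $U \sim \mathrm{Unif}[0,1]$ independent of $(X_0, X_1, \dots, X_n)$ and set $Y := \1[U \le X_0]$. Since $\Pr[Y = 1 \mid X_0, X_1, \dots, X_n] = X_0$, this $Y$ has exactly the law of $\textup{Ber}(X_0)$ driven by independent randomness, so it suffices to show that $(Y, X_1, \dots, X_n)$ is NA. Fix disjoint index sets of this coupled collection and monotone increasing test functions $f, g$ on the relevant cubes (both automatically bounded, since a coordinatewise nondecreasing function on $[0,1]^k$ is sandwiched between its values at $(0,\dots,0)$ and $(1,\dots,1)$). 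If neither index set contains the coordinate carrying $Y$, then $f$ and $g$ depend only on the $X_i$'s, and the desired inequality is simply the NA inequality for $(X_0, \dots, X_n)$ applied to two disjoint subsets of $\{1, \dots, n\}$; nothing needs to be proved. By the symmetry of the NA condition, it remains to bound $\E[\,f(Y, \vec X_{S'})\, g(\vec X_T)\,]$ where $S', T \subseteq \{1, \dots, n\}$ are disjoint and $f$'s first argument is the $Y$-slot.

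The key step is to condition on $U$. For each fixed $u \in [0,1]$, the map $(x_0, \vec x_{S'}) \mapsto f(\1[x_0 \ge u], \vec x_{S'})$ is coordinatewise nondecreasing: it is nondecreasing in $x_0$ because $x_0 \mapsto \1[x_0 \ge u]$ is nondecreasing and $f$ is nondecreasing in its first argument, and it is nondecreasing in each remaining coordinate directly from $f$. Because $U$ is independent of $(X_0, \dots, X_n)$, conditioning on $U = u$ does not change the joint law of the $X_i$'s, so $\E[\,f(Y, \vec X_{S'})\, g(\vec X_T) \mid U = u\,] = \E[\,f(\1[X_0 \ge u], \vec X_{S'})\, g(\vec X_T)\,]$. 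Now $\{0\} \cup S'$ and $T$ are disjoint index sets of the NA collection $(X_0, \dots, X_n)$, and both $f(\1[\,\cdot \ge u\,], \cdot)$ and $g$ are monotone increasing, so NA of the original collection gives $\E[\,f(\1[X_0 \ge u], \vec X_{S'})\, g(\vec X_T)\,] \le \E[\,f(\1[X_0 \ge u], \vec X_{S'})\,]\cdot \E[\,g(\vec X_T)\,]$. Taking expectation over $U$ and using independence once more to recombine $\1[X_0 \ge U]$ with $X_0$ into $Y$, the right-hand side becomes $\E[\,f(Y, \vec X_{S'})\,]\cdot \E[\,g(\vec X_T)\,]$, which is precisely the NA inequality for $(Y, X_1, \dots, X_n)$.

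I do not anticipate a real obstacle; the argument is a short coupling followed by a direct application of the hypothesis, and the only care needed is bookkeeping. Specifically, one should (i) confirm that the conditional law of $Y$ given the whole collection is indeed $\textup{Ber}(X_0)$ with fresh randomness, so that proving NA for the coupled vector is enough; (ii) verify that $(x_0, \vec x_{S'}) \mapsto f(\1[x_0 \ge u], \vec x_{S'})$ is genuinely monotone so that the original NA inequality applies; and (iii) invoke independence of $U$ twice — once to strip the conditioning on $U$ inside the expectations, and once to reassemble $Y$ after integrating over $u$. Since the NA condition is symmetric in its two index sets, placing the $Y$-coordinate in the first set costs no generality.
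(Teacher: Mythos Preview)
Your proof is correct and follows essentially the same route as the paper: both arguments reduce the claim to the NA of $(X_0,\dots,X_n)$ by replacing the Bernoulli coordinate with a monotone function of $X_0$. The only cosmetic difference is that you condition on an auxiliary uniform $U$ and apply NA for each fixed $u$ before integrating, whereas the paper first integrates out the Bernoulli to obtain the convex combination $h(X_0,\vec X_{I'})=(1-X_0)f_0(\vec X_{I'})+X_0 f_1(\vec X_{I'})$ and then applies NA once to $h$ and $g$; these are two presentations of the same computation.
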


\begin{lemma}[Main Edge-Weighted Lemma] \label{lemma: main-edge}
 Fix $c:= 2\sqrt{2}-2$. Let $G=(L, R, E)$ be a bipartite graph with non-negative edge weights $\{w_e\}_{e \in E}$ supporting a fractional matching $\vec{x}$. If $(A_v)_{v \in R}$ are NA binary random variables satisfying $\Pr[A_v] \ge 1 - c\cdot (1-x_v)$ for all $v \in R$, then
\[\mathbb{E}[\nu(G_A, w)] \ge c \cdot \sum_{e \in E} w_e x_e.\]
\end{lemma}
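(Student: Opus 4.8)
The plan is to lower-bound $\mathbb{E}[\nu(G_A,w)]$ by the expected weight of a matching produced by an explicit recursive ``peeling'' algorithm which, whenever it commits an edge $e$ to the matching, does so with probability exactly $c\,x_e$; summing over all edges of $G$ then gives the bound. First I would reduce to the case that the support of $\vec x$ is a forest: since $\nu(G_A,w)$ does not depend on $\vec x$ while $c\sum_e w_e x_e$ is linear in $\vec x$, and the set of feasible fractional matchings (those additionally satisfying $x_v\le 1-\tfrac{1-\Pr[A_v]}{c}$ for all $v\in R$, which is exactly the hypothesis rewritten) is a polytope containing $\vec x$, it suffices to prove the inequality at a vertex of this polytope; cancelling fractional mass around any cycle preserves every node-degree and hence feasibility, so a vertex has acyclic support. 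As both sides decompose over connected components, I may then assume $G$ is a single tree, which I would root at an arbitrary node of $L$.

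For the peeling step, let $g$ be the parent of a deepest leaf, so that all children $v_1,\dots,v_k$ of $g$ ($k\ge 1$) are leaves. If $g\in L$, I would let $S$ consist of all edges incident to $g$ --- the edges $(g,v_j)$ together with, unless $g$ is the root, the edge $(g,h)$ to $g$'s parent $h\in R$ --- and match within $S$ using the monotone contention resolution scheme of \Cref{lemma: star-CRS}. This applies because $\sum_{e\in S}x_e=x_g\le 1$, the opposite endpoints $v_1,\dots,v_k,h$ all lie in $R$ and hence have NCD availabilities, and the required marginal bounds hold since $x_{v_j}=x_{gv_j}$ and $\Pr[A_h]\ge 1-c(1-x_h)\ge 1-c(1-x_{gh})$. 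The scheme selects at most one edge of $S$, matching edge $e$ with probability exactly $c\,x_e$. If instead $g\in R$, I would take $S=\{(g,v_1),\dots,(g,v_k)\}$ only, condition on $A_g$, and --- when $g$ is available --- match $(g,v_j)$ with probability $c\,x_{gv_j}/\Pr[A_g]$ (a valid distribution, since $\sum_j c\,x_{gv_j}\le c\,x_g\le \Pr[A_g]$ using $c\le 1$ and the marginal hypothesis), otherwise matching nothing. In either case I delete $g$ and the leaves $v_j$, and recurse on the remaining forest. (When $g\in L$ is the root the recursion is empty and the step just contributes $c\sum_j w_{gv_j}x_{gv_j}$.)

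The key invariant to maintain through the recursion is that availabilities stay NA and continue to satisfy $\Pr[A_v]\ge 1-c(1-x_v)$ for the current degrees. The only availability that changes is that of a node $m\in R$ ($m=h$ when $g\in L$ is a non-root, and $m=g$ when $g\in R$), whose new availability is $A_m'=A_m\cdot\mathbbm{1}[m\text{ not matched in }S]$; then $\Pr[A_m']=\Pr[A_m]-c\sum_{e\in S:\,m\in e}x_e$ and $m$'s new degree is $x_m-\sum_{e\in S:\,m\in e}x_e$, so the marginal bound is preserved with the same slack. For negative association I would write $A_m'=\mathrm{Ber}(X)$ with $X=A_m\cdot q(\vec A)$, where $\vec A$ collects the availabilities of the other endpoints of $S$ and $q$ is the conditional probability (over the CRS's own coins) that $m$ is not matched; monotonicity of the CRS makes $q$ increasing in $\vec A$, so $X$ is an increasing function of the block $(A_m,\vec A)$, which is disjoint from the remaining availabilities. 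Closure of NA under coordinatewise-monotone maps on disjoint blocks, followed by \Cref{claim:NAprelim} to absorb the fresh Bernoulli randomness, then shows the updated family is NA (when $g\in R$, $q$ is a constant and this step is immediate). Applying the inductive hypothesis to the smaller forest, and noting that every edge is resolved in exactly one peeling step at probability exactly $c\,x_e$ while the selected edges always form a matching inside $G_A$ (each endpoint is matched at most once before deletion, and only when currently available), yields $\mathbb{E}[\nu(G_A,w)]\ge c\sum_e w_e x_e$.

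The hard part will be the negative-association bookkeeping in the third step: it is exactly there that \emph{monotonicity} of the star CRS is essential, so that ``$m$ was not matched'' is a decreasing function of the competing availabilities, and it is through \Cref{lemma: star-CRS} --- hence through \Cref{fact:star-bound} --- that the constant $c=2\sqrt 2-2$ enters, which is precisely what makes the marginal condition $\Pr[A_v]\ge 1-c(1-x_v)$ self-propagating under the peeling.
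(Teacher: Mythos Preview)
Your approach is essentially the paper's: reduce to an acyclic support, then peel the tree one star at a time, using the monotone CRS of \Cref{lemma: star-CRS} when the center lies in $L$ and a direct randomized choice when the center lies in $R$, while maintaining NA of the surviving $R$-availabilities via monotonicity of the CRS together with \Cref{claim:NAprelim}. There is one internal inconsistency to fix: you write ``in either case I delete $g$ and the leaves $v_j$,'' but when $g\in R$ your own invariant paragraph then updates $m=g$'s availability and its new degree to $x_g-\sum_j x_{gv_j}$, which only makes sense if $g$ is \emph{retained}; if $g$ were actually deleted, the edge from $g$ to its parent would be dropped unmatched, contradicting your claim that every edge is resolved with probability exactly $c\,x_e$. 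With the intended reading (when $g\in R$ delete only the $L$-leaves $v_j$ and keep $g$ with updated availability), the argument is correct and matches the paper's proof.
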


\begin{proof}
We proceed by induction on the size of the edge set $|E|$. The statement trivially holds for $|E|=0$. We begin the inductive step with the same simple cases as in the vertex-weighted setting. If $G$ is disconnected, we apply the inductive hypothesis to each connected component. If $G$ is connected but there exists some edge $e \in E$ with $x_e = 0$, we apply the inductive hypothesis to the graph $G \setminus \{e\}$ with $\vec{x}$ restricted to $E \setminus \{e\}$; under this transformation, the right-hand-side is unchanged while the weight of the maximum matching cannot increase. Finally, if there exists an edge $e \in E$ with $x_{e} = 1$, either $e$ is isolated (and $G$ is disconnected) or there exists an adjacent edge with zero matching value. In either case, we apply one of the two previous arguments.

Cycles in the support of $\vec{x}$ can be eliminated in a manner similar to the proof of \Cref{lem:main}. The only modification for the edge-weighted setting is that between the edge sets $C^+$ and $C^-$, matching values should be increased on the set with higher total edge weight. This operation cannot decrease the right-hand side of the objective, and as before, has no impact on the size of the max matching in $G_A$.

For the remainder of the proof, we assume that the support of $\vec{x}$ is a tree. Thus, it suffices to prove the statement when $G$ is a tree. We root it arbitrarily and consider two cases.

\begin{enumerate}
    \item \textit{There exists a (deterministic) leaf node $u \in L$.} Let $v \in R$ be its unique neighbor and define an independent Bernoulli random variable $Z := \text{Ber} \left( \nicefrac{c \cdot x_{uv}}{1 - c\cdot (1-x_{v})} \right).$ In this case, we analyze a matching strategy for $G_A$ that selects the edge $e=(u,v)$ if both $A_{v} = 1$ and $Z = 1$, deletes $e$ otherwise, and then augments with a max matching in the remaining graph. Toward that end, we define a new collection of random variables $A'$ where $A'_v = A_v \cdot (1 - Z)$ and $A'_{v'} = A_{v'}$ for all $v'\neq v \in R$. Note that $A'$ are matching availabilities in the remaining graph $G^\uparrow = G \setminus \{u\}$ under the matching strategy described above. Thus,
    \[\E[\nu(G_A, w)] \ge c \cdot x_{uv} \cdot w_{uv} + \E[\nu(G^\uparrow_{A'}, w)].\]
    It is immediate from the independence of $Z$ that the collection of random variables $A'$ is NA. Moreover, if $\vec{x}'$ denotes the restriction of $\vec{x}$ to $G^\uparrow$, then $x'_v = x_v - x_{uv}$. So,
    \[\Pr[A_v'] = \Pr[A_{v}] \cdot (1 - \Pr[Z]) \ge 1- c \cdot(1- x_{v}) \cdot \left( 1 - \frac{c \cdot x_{uv}}{1 - c\cdot (1-x_{v})} \right) = 1 - c \cdot(1- x_{v}').\]
    We conclude the proof of this case by applying the inductive hypothesis to bound
    \[\E[\nu(G^\uparrow_{A'}, w)] \geq c \cdot \sum_{e \in E \setminus (u,v)} w_e \cdot x_e.\]

    \item  \textit{No leaf is in $L$.} Given that $G$ is a non-empty tree, let $S \subseteq R$ be a maximal set of leaf nodes with shared parent $p \in L$. If $p$ is the root node, then $G$ is a star with center $p$ and leaves $S$. Let $\phi$ be the CRS guaranteed by \Cref{lemma: star-CRS} for the random variables $(A_i)_{i \in S}$, where $\phi\left((A_i)_{i \in S}\right)$ denotes the (random) node selected from among the active elements of $(A_i)_{i \in S}$. Consider a matching strategy that matches $p$ to the node $\phi\left((A_i)_{i \in S}\right)$; if $p$ is the root this yields a random matching with expected weight
    \[\E\left[\sum_{i \in S} w_{ip} \cdot \mathbbm{1}\{\phi\left((A_i)_{i \in S}\right) = i\}\right] = c \cdot \sum_{i \in S} w_{ip} \cdot x_i = c \sum_{e \in E} w_e \cdot x_e.\]
    
    Now, suppose instead that $p$ has some parent $g \in R$, and define the node-induced subgraphs $G^{\downarrow} := G[S \cup \{p, g\}]$ and $G^{\uparrow} := G[V \setminus S \setminus \{p\}]$. Let $\phi$ denote the CRS from \Cref{lemma: star-CRS} applied to the random variables $(A_i)_{i \in S}, A_g$. We will consider a matching strategy that matches $p$ to the node $\phi((A_i)_{i \in S}, A_g)$ selected by the CRS on $G^{\downarrow}_A$, then augments with a max matching in the remaining graph. Given $\phi$, define a function $f$ such that 
    \begin{align*}
       f((A_i)_{i \in S}, A_g ) 
        &= A_g \cdot (1 - \Pr[g = \phi((A_i)_{i \in S}, A_g) \mid (A_i)_{i \in S}, A_g]).
    \end{align*}
    Note $f((A_i)_{i \in S}, A_g)$ is a random variable that takes value 0 if $A_g = 0$ and otherwise gives the conditional probability that $\phi$ does \emph{not} select $g$ (i.e., does not match along edge $(p,g)$). For $v \in V \setminus S \setminus \{p, g\}$, define $A'_v = A_v$; define also $A_g' = \text{Ber}(f((A_i)_{i \in S}, A_g ))$ where this Bernoulli is realized according to independent randomness. Note that $A'$ can be perfectly coupled to the availabilities of nodes in $G^{\uparrow}$ after applying our CRS-based matching to $G^{\downarrow}$: in particular, if $A_g = 0$ we have $A'_g = 0$, while if $A_g = 1$, the availability $A'_g$ for a fixed input $(A_i)_{i \in S}, A_g$ is determined by an independent sample with probability $f((A_i)_{i \in S}, A_g)$. Thus, we lower bound the expected max matching weight in $G_A$ by following our CRS on $G^{\downarrow}$ and augmenting with the maximum matching among the remaining available nodes in $G^{\uparrow}$. \begin{align*}
        \E[\nu(G_A, w)] &\geq \sum_{e \in G^\downarrow} w_e \cdot \Pr[e \text{ selected by $\phi$}] + \E[\nu(G^\uparrow_{A'}, w)].
    \end{align*}

    Before we can apply the inductive hypothesis, we need to (1) show that the collection of random variables $A'$ is NA, and (2) construct a feasible matching $\vec{x}'$ such that $\Pr[A_v'] \geq  1-c\cdot (1-x_v')$ for all $v \in R \setminus\{S\}$.
    For (1), note that the function $f$ is clearly increasing in $A_g$. Moreover, because the CRS $\phi$ is monotone, $f$ is also increasing in each $A_i$ for $i \in S$. Thus, by the closure properties of NA \cite{joag1983negative}, we have that $f((A_i)_{i \in S}, A_g), (A_v)_{v \in R \setminus S \setminus \{g\}}$ are NA. By \Cref{claim:NAprelim}, this means $\text{Ber}(f((A_i)_{i \in S}, A_g)), (A_v)_{v \in R \setminus S \setminus \{g\}}$ are also NA. For (2), we simply take $\vec{x}'$ to be the restriction of $\vec{x}$ to $G^\uparrow$. This is clearly a feasible fractional matching in $G^\uparrow$. Moreover, $\Pr[A'_v] \geq c\cdot x_v'$ holds for all $v \in R \setminus S \setminus \{g\}$ by assumption since $x_v'=x_v$ and $A_v'=A_v$. At $g$, because $x_g' = x_g - x_{pg}$,
    \begin{align*}
        \Pr[A_g'] 
            &= \E[f((A_i)_{i \in S}, A_g)] \\
            &= \E[A_g] - \Pr[g = \phi((A_i)_{i \in S}, A_g)]\\
            &\geq 1 - c \cdot (1-x_g) - c \cdot x_{pg} \\
            &= 1 - c \cdot (1- x_g').
    \end{align*}
    Finally, applying the inductive hypothesis to lower bound $\E[\nu(G^\uparrow_{A'}, w)]$ yields the desired result:
    \begin{align*}
        \E[\nu(G_A, w)] &\geq \sum_{e \in G^\downarrow} w_e \cdot \Pr[e \text{ selected by $\phi$}] + c\sum_{e \in G^\uparrow} w_e x'_e \\
        &= c \sum_{e \in G_\downarrow} w_ex_e + c \sum_{e \in G^\uparrow}w_ex_e \\
        &= c \sum_{e \in E} w_e x_e. \qedhere
    \end{align*}
\end{enumerate} \end{proof}
Applying \Cref{lemma: main-edge} to the second-stage graph $G^\theta$, matching $\vec{y}^\theta$, and NA node availabilities $(A_i)_{i \in I}$ immediately implies a $(2\sqrt{2}-2)$-approximation for the edge-weighted setting. Indeed, 
\begin{align*}
    \frac{\mathbb{E}[\text{ALG}]}{\text{OPT\eqref{LPon}}} &\ge \frac{ \sum_{e \in E} c \cdot x_e \cdot w_e + \sum_{\theta} p^{\theta} \cdot \E[\nu(G^{\theta}_{A}, w)]}{\sum_{e \in E} x_e \cdot w_e + \sum_{\theta} p^{\theta} \cdot \sum_{e \in E^\theta} y_{e}^{\theta} \cdot w_e} && \\
    &\ge \frac{\sum_{\theta} p^{\theta} \cdot \left(\sum_{e \in E} c \cdot x_e \cdot w_e + \sum_{e \in E^{\theta}} c \cdot y^\theta_e \cdot w_e \right) }{\sum_{\theta} p^{\theta} \cdot \left( \sum_{e \in E} x_e \cdot w_e + \sum_{e \in E^\theta} y_e^{\theta} \cdot w_e \right)}  = c.   && \text{(\Cref{lemma: main-edge})}
\end{align*}

\Paragraph{Tight fractional rounding} The bound above implies an edge-uniform rounding of fractional matchings up to the multiplicative constant $c_{\textsf{E}} := 2\sqrt{2}-2$. Indeed, our dependent rounding algorithm applied to first-stage fractional matching $\vec{x}$ matches each first-stage $e$ with probability exactly $c \cdot x_e$. Furthermore, combining \Cref{lem:dualityCRS} with \Cref{lemma: main-edge} implies that in the second stage, there exists a contention resolution scheme matching each edge $e$ in the support of a feasible fractional matching $\vec{y}$ with probability $c \cdot y_e$. This proves the edge-weighted half of \Cref{cor:fractionalmatching}.\footnote{Naturally, we apply \Cref{lem:dualityCRS} to a ground set of edges, realized according to some correlated distribution.}
\Section{Two-Stage Matching from Samples}

The algorithms we have provided thus far assume full knowledge of the distribution over second-stage graphs $\mathcal{D}$, and run in time that is polynomial in the size of the support of $\mathcal{D}$. Practically, both of these dependencies can be undesirable. In some settings, an exact representation of $\mathcal{D}$ may be unknown, and in others,  $\mathcal{D}$ may have a compact representation despite having large support. Consider, for example, the setting originally studied by \cite{feng2021two} in which each vertex $b \in B_2$ appears independently with known probability $\pi_b$. This distribution has exponential support but is succinctly represented by a vector of probabilities.

To address both of these points, we extend our results to a setting where we only have access to i.i.d. samples from $\mathcal{D}$. We differentiate between algorithms that have access to the full distribution $\mathcal{D}$ and those that only have access to the empirical distribution $\hat{\mathcal{D}}$ over samples using superscripts. In the vertex-weighted setting, we show that running \textsf{Round-Augment} on an empirical distribution $\hat{\mathcal{D}}$ over $\text{poly}(|I|, \epsilon^{-1})$ samples suffices to guarantee a $(\nicefrac{7}{8}-\epsilon)$-approximation  (\Cref{thm: vtx-samples}).

\begin{theorem} \label{thm: vtx-samples}
For a distribution $\mathcal{D}$ over vertex-weighted graphs, let $\hat{\mathcal{D}}$ be the empirical distribution over samples $\theta_1, \dots, \theta_k \sim \mathcal{D}$. With probability at least $1-\delta$ over the draw of the $k \geq O\left(\frac{|I| + \log(\nicefrac{1}{\delta})}{\epsilon^2} \right)$ samples,
\[\underset{\theta \sim \mathcal{D}}{\E}[\textup{ALG}^{\hat{\mathcal{D}}}(\theta)] \geq \left(\frac{7}{8} - \epsilon\right) \cdot \underset{\theta \sim \mathcal{D}}{\E}[\OptOn^\mathcal{D}(\theta)].\]
\end{theorem}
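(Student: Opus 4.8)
The plan is to recast the earlier vertex‑weighted analysis so that the only quantity the samples influence is isolated, and then control that quantity by uniform convergence. For a first‑stage fractional matching $x$ — which enters everything only through its offline‑degree vector $(x_i)_{i\in I}\in[0,1]^{I}$ — and a realization $\theta$, write $h(x,\theta)$ for the maximum weight of a fractional matching in $G^\theta$ in which offline node $i$ has capacity $1-x_i$, and set $\Phi_{\mathcal D}(x):=\sum_i w_i x_i+\E_{\theta\sim\mathcal D}[h(x,\theta)]$; this is the value of \eqref{LPon} with the first stage pinned to $x$, so $\OptOn^{\mathcal D}\le\Phi_{\mathcal D}(x^{*})$ for $x^{*}$ LP‑optimal on $\mathcal D$, while the first‑stage part $\hat x$ of the empirically optimal solution maximizes $\Phi_{\hat{\mathcal D}}$. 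Running \textsf{Round-Augment} on $(\hat x,\hat y)$ and evaluating against the true $\theta$, the pointwise argument from the vertex‑weighted analysis goes through on each realization with the maximizing fractional matching of $h(\hat x,\theta)$ as the witness (a legal witness, since its offline degrees are at most $1-\hat x_i=\Pr[A_i]$): via \Cref{lem:indsufficient} and \Cref{lem:mainvtx} together with the elementary bound $\hat x_i+z_i-\tfrac{1}{2}\hat x_i z_i\ge\tfrac{7}{8}(\hat x_i+z_i)$ whenever $\hat x_i+z_i\le 1$, one obtains $\E_{\theta\sim\mathcal D}[\textup{ALG}^{\hat{\mathcal D}}(\theta)]\ge\nicefrac{7}{8}\cdot\Phi_{\mathcal D}(\hat x)$.

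The second step is the decomposition that removes the weight scale. Decreasing offline capacities from $1$ to $1-x_i$ lowers the max‑weight matching of $G^\theta$ by at most $\sum_i w_i x_i$ (the ``first stage is free'' phenomenon, formally LP sensitivity), so $h(x,\theta)=\nu(G^\theta,w)-r(x,\theta)$ with $0\le r(x,\theta)\le\sum_i w_i x_i$. Hence $\Phi_{\mathcal D}(x)=\E_\theta[\nu(G^\theta,w)]+G_{\mathcal D}(x)$ where $G_{\mathcal D}(x):=\sum_i w_i x_i-\E_\theta[r(x,\theta)]\in[0,\sum_i w_i x_i]$ and the leading term is independent of $x$. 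Since $\hat x$ maximizes $\Phi_{\hat{\mathcal D}}$ it also maximizes $G_{\hat{\mathcal D}}$ (the leading term cancels), so
\[
\Phi_{\mathcal D}(\hat x)=\E_\theta[\nu(G^\theta,w)]+G_{\mathcal D}(\hat x)\ \ge\ \E_\theta[\nu(G^\theta,w)]+G_{\hat{\mathcal D}}(\hat x)-\Delta\ \ge\ \E_\theta[\nu(G^\theta,w)]+G_{\hat{\mathcal D}}(x^{*})-\Delta\ \ge\ \Phi_{\mathcal D}(x^{*})-2\Delta ,
\]
where $\Delta:=\sup_x\big|\E_\theta[r(x,\theta)]-\tfrac{1}{k}\sum_{j=1}^{k}r(x,\theta_j)\big|$. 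Combining with the first step, $\E_{\theta\sim\mathcal D}[\textup{ALG}^{\hat{\mathcal D}}(\theta)]\ge\nicefrac{7}{8}\,\Phi_{\mathcal D}(x^{*})-\tfrac{7}{4}\Delta\ge\nicefrac{7}{8}\,\OptOn^{\mathcal D}-\tfrac{7}{4}\Delta$, so it suffices to show $\Delta\le\tfrac{4}{7}\epsilon\cdot\OptOn^{\mathcal D}$ with probability $\ge 1-\delta$.

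The third step is this uniform deviation bound, and here is where scale‑invariance is won. Each summand $r(x,\theta_j)$ lies in $[0,\sum_i w_i x_i]\subseteq[0,W_1]$, where $W_1$ is the max weight of a first‑stage matching, and $W_1\le\OptOn^{\mathcal D}$ (an online algorithm can take that matching and stop). Thus the averaged quantities are bounded by $\OptOn^{\mathcal D}$, and dividing through, $\Delta/\OptOn^{\mathcal D}$ is the supremum over $x\in[0,1]^{I}$ of the deviation of a $[0,1]$‑valued empirical mean; since for each fixed $\theta$ the map $x\mapsto r(x,\theta)$ is a concave piecewise‑linear function of $x\in\R^{|I|}$, the associated function class has pseudo‑dimension $O(|I|)$, and standard uniform‑convergence bounds give $\Delta/\OptOn^{\mathcal D}=O\big(\sqrt{(|I|+\log(1/\delta))/k}\big)$ with probability $\ge 1-\delta$ (alternatively, an $\epsilon$‑net of $[0,1]^{I}$ using Lipschitzness of $r(\cdot,\theta)$ plus Hoeffding and a union bound, which costs only a benign logarithmic factor). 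Taking $k=\Omega\big((|I|+\log(1/\delta))/\epsilon^{2}\big)$ then gives $\Delta\le\tfrac{4}{7}\epsilon\,\OptOn^{\mathcal D}$, which completes the proof.

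I expect the main obstacle to be exactly this combination: the naive route — uniform convergence of $h(x,\theta)$, or of $\Phi_{\hat{\mathcal D}}\to\Phi_{\mathcal D}$ directly — fails, because $\nu(G^\theta,w)$ can be arbitrarily large on a low‑probability realization and no polynomial number of samples estimates such a heavy‑tailed quantity. The resolution is the observation that this heavy part $\E_\theta[\nu(G^\theta,w)]$ does not depend on the first‑stage decision and therefore cancels in every comparison between $\hat x$ and $x^{*}$, so that only the genuinely bounded ``gain from first‑stage matching'' term $G_{\mathcal D}(x)\in[0,\OptOn^{\mathcal D}]$ must be learned from samples; this is precisely what removes the dependence on the scale of the weights (in contrast to the edge‑weighted case) and keeps the sample size at $O(\epsilon^{-2})$. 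A secondary point to be careful about is that $\hat x$ is itself sample‑dependent, so the deviation at $\hat x$ genuinely requires the $\sup_x$ bound rather than single‑point concentration; concavity/piecewise‑linearity of $G_{\mathcal D}$ is what keeps the covering‑number (pseudo‑dimension) cost at $O(|I|)$.
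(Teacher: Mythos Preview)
Your decomposition $\Phi_{\mathcal D}(x)=\E_\theta[\nu(G^\theta,w)]+G_{\mathcal D}(x)$ with $G_{\mathcal D}(x)\in[0,W_1]\subseteq[0,\OptOn^{\mathcal D}]$ is exactly the right scale-invariance insight, and your Step~1 bound $\E_{\theta\sim\mathcal D}[\textup{ALG}^{\hat{\mathcal D}}]\ge\tfrac78\,\Phi_{\mathcal D}(\hat x)$ directly on the true distribution (using the optimizer of $h(\hat x,\theta)$ as a pointwise second-stage witness) is arguably cleaner than the paper's route. The paper instead first applies the $\tfrac78$ guarantee over the empirical $\hat{\mathcal D}$ and then transfers; more importantly, it avoids uniform convergence over a continuous family altogether by observing that the first-stage decision of any online algorithm---including $\textup{ALG}^{\hat{\mathcal D}}$ after dependent rounding, and $\OptOn$ itself---is without loss an \emph{integral} matched set $S\subseteq I$. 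It then bounds pairwise differences $\mu_S-\mu_T$, which lie in $[-\mu_{\Opt},\mu_{\Opt}]$ by essentially your $r\le W_1$ argument, via Hoeffding and a plain union bound over all $2^{2|I|}$ pairs, giving $k=O((|I|+\log(1/\delta))/\epsilon^2)$ with no further machinery.

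Your Step~3, as written, has a gap. The claim that the class has pseudo-dimension $O(|I|)$ is not justified: piecewise linearity of $r(\cdot,\theta)$ in $x$ is a property in the wrong variable for bounding the pseudo-dimension of $\{\theta\mapsto r(x,\theta):x\in P\}$, and incidentally $r(\cdot,\theta)$ is convex, not concave, since $h(\cdot,\theta)$ is the concave LP value function. Your $\epsilon$-net fallback also needs care: netting $[0,1]^{|I|}$ in $\ell_\infty$ with Lipschitz constant $\sum_i w_i$ reintroduces the very weight scale you worked to remove. The repair is to net in the weighted $\ell_1$ metric $d(x,x')=\sum_i w_i|x_i-x'_i|$, in which $r(\cdot,\theta)$ is $1$-Lipschitz and the feasible region $P\subseteq\{x\ge0:\langle w,x\rangle\le W_1\}$ has diameter at most $2W_1$; a volume argument then gives log covering number $O(|I|\log(1/\epsilon))$ and hence $k=O((|I|\log(1/\epsilon)+\log(1/\delta))/\epsilon^2)$---the ``benign logarithmic factor'' you anticipated, but one that the paper's finite union bound sidesteps entirely.
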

\begin{proof}
    For any subset $S \subseteq I$ of offline nodes matched in the first stage, let $S(\theta)$ denote the offline nodes matched in a second-stage max matching under realization $\theta \in \Theta$. Define the expected vertex-weighted matching values $\mu_S = \sum_{v \in S} w_i + \E_{\theta \sim \mathcal{D}}[\sum_{v \in S(\theta)} w_v]$ and $\hat{\mu}_S = \sum_{v \in S} w_v + \E_{\theta \sim \hat{\mathcal{D}}}[\sum_{v \in S(\theta)} w_v]$ under the full and empirical distributions, respectively. We assume without loss of generality that the first-stage matchings generated by $\OptOn^\mathcal{D}$ and $\OptOn^{\hat{\mathcal{D}}}$ are integral---it will be convenient to refer to the first-stage node set matched by $\OptOn^\mathcal{D}$ as $\Opt$ and the first-stage node set matched by $\OptOn^{\hat{\mathcal{D}}}$ as $\hat{\Opt}$.
    
    To begin, we will show that $k$ samples is sufficient to approximate the difference $\mu_{S} - \mu_{T}$ to high accuracy for any pair of node sets $S, T \subseteq I$. Note that $\hat{\mu}_{S} - \hat{\mu}_{T}$ is the sample average of bounded, i.i.d. random variables $D(\theta_1), \dots, D(\theta_k)$ where $D(\theta_i) := \sum_{v \in S \cup S(\theta_i)} w_v - \sum_{v \in T \cup T(\theta_i)} w_v$. Indeed,
    \[\sum_{v \in T \cup T(\theta)} w_v \geq \sum_{v \in T \cup S(\theta)} w_v \geq \sum_{v \in S \cup S(\theta)} w_v - \sum_{v \in S \setminus T} w_v. \]
    The observation that $S \setminus T$ is a first-stage matching candidate for $\Opt$ implies
    $D(\theta_i) \leq \sum_{v \in S \setminus T} w_v \leq \mu_{\Opt}.$ Thus, without loss of generality $D(\theta_i) \in [-\mu_\Opt, \mu_\Opt]$. Applying a Hoeffding bound with $k \geq \frac{2}{\epsilon^2}\left(2\ln2 \cdot |I| + \ln(\nicefrac{2}{\delta})\right)$ samples, we have
    \[\Pr\left[\left|(\hat{\mu}_S - \hat{\mu}_T) - (\mu_S - \mu_T)\right| \geq \epsilon \cdot \mu_{\Opt}\right] \leq 2 \cdot \exp(\nicefrac{-k\epsilon^2}{2}) \leq \frac{\delta}{2^{2|I|}}.\]
    
    A union bound over the $2^{2|I|}$ possible pairs $S, T \subseteq I$ gives that the approximation holds for all pairs of sets simultaneously with probability at least $1 - \delta$. In particular, this holds for arbitrary $S$ when $T = \Opt$. Because $\textup{ALG}^{\hat{\mathcal{D}}}$ is a potentially random strategy that mixes between sets of first-stage matched nodes,
    \begin{equation} \label{eq: vtx-hoeffding}
        \underset{\theta \sim \mathcal{D}}{\E}[\textup{ALG}^{\hat{\mathcal{D}}}(\theta)] - \underset{\theta \sim \hat{\mathcal{D}}}{\E}[\textup{ALG}^{\hat{\mathcal{D}}}(\theta)] \geq \frac{7}{8}\left(\mu_{\Opt} - \hat{\mu}_{\Opt}\right) - \epsilon \cdot \mu_{\Opt}
    \end{equation}
    It follows from \Cref{lem:mainvtx} and the optimality of $\hat{\Opt}$ over the empirical distribution that
    \begin{equation} \label{eq: empirical-bound}
        \underset{\theta \sim \hat{\mathcal{D}}}{\E}[\textup{ALG}^{\hat{\mathcal{D}}}(\theta)] \geq \frac{7}{8} \cdot \hat{\mu}_{\hat{\Opt}} \geq \frac{7}{8} \cdot \hat{\mu}_{\Opt}.
    \end{equation}
    Adding the bounds from \eqref{eq: vtx-hoeffding} and \eqref{eq: empirical-bound} yields the result.
\end{proof}

To achieve a similar $\epsilon$ loss in the approximation ratio for the edge-weighted setting, there is a necessary dependence on the scale of edge weights and the expected matching weight of the optimum online algorithm. For instances with max matching weight $W$ achievable across any realization and expected online optimum matching weight $\mu$, we show that running \textsf{Round-Augment} on an empirical distribution over $\text{poly}(\frac{W}{\mu}, |E|, \epsilon^{-1})$ samples suffices to guarantee an approximation at least $(2\sqrt{2} - 2)-\epsilon$ (\Cref{thm:edgesamples}). 
\begin{theorem} \label{thm:edgesamples}
For a distribution $\mathcal{D}$ over edge-weighted graphs, let $\hat{\mathcal{D}}$ be the empirical distribution over samples $\theta_1, \dots, \theta_k \sim \mathcal{D}$. If $W$ is the sum of edge weights and $\mu$ is the expected matching weight of the optimum online, then with probability at least $1-\delta$ over the draw of the $k \geq O\left(\frac{W^2}{\mu^2 \epsilon^2} \left(|E| + \log(\nicefrac{1}{\delta})\right)\right)$ samples,
    \[\underset{\theta \sim \mathcal{D}}{\E}[\textup{ALG}^{\hat{\mathcal{D}}}(\theta)] \geq \left(2\sqrt{2} - 2 - \epsilon\right) \cdot \underset{\theta \sim \mathcal{D}}{\E}[\OptOn^\mathcal{D}(\theta)].\]
\end{theorem}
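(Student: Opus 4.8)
The plan is to follow the template of \Cref{thm: vtx-samples}, with two modifications forced by the edge-weighted setting. First, I would index first-stage strategies by \emph{matchings} $M$ rather than by matched node sets, since the first-stage weight $w(M)$ now matters in addition to the set $S_M \subseteq I$ of offline nodes $M$ covers; there are at most $2^{|E|}$ such matchings. Second, I would give up the sharp $O(\mu)$ range bound exploited in the vertex-weighted proof: for a fixed strategy on a single realization, the second-stage contribution can be as large as $W$, and this is the source of the $W^2/\mu^2$ factor. For a matching $M$, let $\nu_\theta(S_M)$ be the maximum weight of a second-stage matching among the available nodes $I\setminus S_M$ under realization $\theta$, and set $\mu_M := w(M) + \E_{\theta\sim\mathcal{D}}[\nu_\theta(S_M)]$ and $\hat\mu_M := w(M) + \E_{\theta\sim\hat{\mathcal{D}}}[\nu_\theta(S_M)]$. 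Writing $\Opt$ for the first-stage matching of $\OptOn^{\mathcal{D}}$, I note $\mu = \mu_{\Opt}$, and $\OptOn^{\hat{\mathcal{D}}}(\hat{\mathcal{D}}) \ge \hat\mu_{\Opt}$, since ``play $\Opt$ in stage one, then augment optimally'' is a valid online strategy under $\hat{\mathcal{D}}$.

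First I would prove uniform concentration of the $\hat\mu_M$. For each fixed $M$, the quantity $\hat\mu_M - \mu_M$ is an average of $k$ i.i.d.\ terms $\nu_{\theta_i}(S_M) - \E_\theta[\nu_\theta(S_M)]$, each lying in an interval of width at most $W$ (any second-stage matching has weight at most $W$), so Hoeffding gives $\Pr\big[\,|\hat\mu_M-\mu_M| \ge \tfrac{\epsilon}{2}\mu\,\big] \le 2\exp(-k\epsilon^2\mu^2/(2W^2))$. Union bounding over the at most $2^{|E|}$ matchings, for $k \ge O\big(\tfrac{W^2}{\mu^2\epsilon^2}(|E|+\log(\nicefrac{1}{\delta}))\big)$ I get that, with probability at least $1-\delta$ over the samples, $|\hat\mu_M - \mu_M| \le \tfrac{\epsilon}{2}\mu$ holds simultaneously for all $M$. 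I then condition on this event.

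Next I would transfer performance from $\hat{\mathcal{D}}$ to $\mathcal{D}$. The matching $\textup{ALG}^{\hat{\mathcal{D}}}$ commits to in stage one has some rounding-induced law over matchings, and conditioned on it equalling $M$, the algorithm's expected value is exactly $\mu_M$ under $\mathcal{D}$ and $\hat\mu_M$ under $\hat{\mathcal{D}}$; averaging over this law and applying uniform concentration yields $\E_{\theta\sim\mathcal{D}}[\textup{ALG}^{\hat{\mathcal{D}}}(\theta)] \ge \E_{\theta\sim\hat{\mathcal{D}}}[\textup{ALG}^{\hat{\mathcal{D}}}(\theta)] - \tfrac{\epsilon}{2}\mu$. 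By the edge-weighted analysis of \Cref{sec: edge-weight} (\Cref{lemma: main-edge} together with the LP comparison following it, which hold verbatim for the distribution $\hat{\mathcal{D}}$ and its LP \eqref{LPon}), I have $\E_{\theta\sim\hat{\mathcal{D}}}[\textup{ALG}^{\hat{\mathcal{D}}}(\theta)] \ge c\cdot\OptOn^{\hat{\mathcal{D}}}(\hat{\mathcal{D}}) \ge c\,\hat\mu_{\Opt} \ge c\,(\mu_{\Opt}-\tfrac{\epsilon}{2}\mu)$, where the last step uses the concentration bound at $M=\Opt$. Chaining these two inequalities and using $\mu = \mu_{\Opt} = \E_{\theta\sim\mathcal{D}}[\OptOn^{\mathcal{D}}(\theta)]$ gives $\E_{\theta\sim\mathcal{D}}[\textup{ALG}^{\hat{\mathcal{D}}}(\theta)] \ge \big(c - \tfrac{c+1}{2}\epsilon\big)\mu_{\Opt} \ge (c-\epsilon)\,\E_{\theta\sim\mathcal{D}}[\OptOn^{\mathcal{D}}(\theta)]$, since $c<1$. (The running time is $\text{poly}(n,k)$, as \textsf{Round-Augment} solves \eqref{LPon} over the size-$k$ support of $\hat{\mathcal{D}}$.)

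The hard part is getting a good enough range bound for the concentration step, and this is exactly where the edge-weighted argument genuinely departs from \Cref{thm: vtx-samples}. In the vertex-weighted proof, the difference $\nu_{\theta_i}(S)-\nu_{\theta_i}(T)$ is bounded by the weight of a first-stage matching into $S\setminus T$, a realization-\emph{independent} quantity hence at most $\mu$, so no dependence on the weight scale appears. In the edge-weighted case, the analogous slack is the weight of a maximum \emph{second-stage} matching into $S_M\setminus S_{M'}$ under $\theta_i$, which truly depends on $\theta_i$ and can be as large as $W\gg\mu$ when edge weights are concentrated on rare realizations; this forces the $W^2/\mu^2$ factor, which (as the introduction notes) is unavoidable. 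Once the range bound is in place, the Hoeffding estimate, the union bound over matchings, and the transfer step are routine.
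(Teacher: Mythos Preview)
Your proposal is correct and follows essentially the same approach as the paper: Hoeffding on the $[{-}W,W]$-bounded second-stage contributions, a union bound over the at most $2^{|E|}$ first-stage matchings, and then a transfer from $\hat{\mathcal{D}}$ to $\mathcal{D}$ using the $(2\sqrt{2}-2)$-approximation from \Cref{sec: edge-weight}. The only cosmetic difference is that the paper (mirroring the vertex-weighted template) concentrates pairwise differences $\hat{\mu}_{M_1}-\hat{\mu}_{M_2}$ and union-bounds over $2^{2|E|}$ pairs, whereas you concentrate each $\hat{\mu}_M$ directly; since in the edge-weighted case the difference enjoys no sharper range than the individual term, your version is slightly more direct and yields the same sample complexity up to constants.
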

\begin{proof}
    We use the same general ideas as in the proof of \Cref{thm: vtx-samples}. For any first-stage matching $M \subseteq E$, let $M(\theta)$ denote the second-stage max matching corresponding to $\nu(G^\theta \setminus M, w)$ for realization $\theta \in \Theta$.  Define the expected edge-weighted matching values $\mu_M = \sum_{e \in M} w_e + \E_{\theta \sim \mathcal{D}}[\nu(G^\theta \setminus M)]$ and $\hat{\mu}_M = \sum_{e \in M} w_e + \E_{\theta \sim \hat{\mathcal{D}}}[\nu(G^\theta \setminus M)]$ under the full and empirical distributions, respectively. Once again, we assume without loss of generality that $\OptOn^\mathcal{D}$ and $\OptOn^{\hat{\mathcal{D}}}$ produce integral first-stage matchings $\Opt$ and $\hat{\Opt}$, respectively.

    This time, we show that $k$ samples is sufficient to approximate the difference $\mu_{M_1} - \mu_{M_2}$ to high accuracy for any pair of first-stage matchings $M_1$ and $M_2$. Note that $\hat{\mu}_{M_1} - \hat{\mu}_{M_2}$ is the sample average of i.i.d. random variables taking values in $[-W, W]$. By Hoeffding's inequality with $k\geq \frac{2W^2}{\mu^2\epsilon^2}\left[2\ln2 \cdot |E| + \ln(\nicefrac{2}{\delta})\right]$ samples,
     \[\Pr\left[\biggl|(\hat{\mu}_{M_1} - \hat{\mu}_{M_2}) - (\mu_{M_1} - \mu_{M_2}) \biggr| \geq \epsilon \cdot \mu_{\Opt}\right] \le 2 \cdot \exp\left(-\frac{k\epsilon^2\mu_{\Opt}^2}{2W^2}\right) \leq \frac{\delta}{2^{2|E|}}.\]
     for each pair of first-stage matchings $M_1, M_2$. A union bound over the $2^{2|E|}$ such pairs yields that
      \begin{equation} \label{eq: edge-hoeffding}
        \underset{\theta \sim \mathcal{D}}{\E}[\textup{ALG}^{\hat{\mathcal{D}}}(\theta)] - \underset{\theta \sim \hat{\mathcal{D}}}{\E}[\textup{ALG}^{\hat{\mathcal{D}}}(\theta)] \geq (2\sqrt{2}-2) \cdot \left(\mu_{\Opt} - \hat{\mu}_{\Opt}\right) - \epsilon \cdot \mu_{\Opt}
    \end{equation}
     It follows from \Cref{lemma: main-edge} and the optimality of $\hat{\Opt}$ over the empirical distribution that
    \begin{equation} \label{eq: empirical-bound-edge}
        \underset{\theta \sim \hat{\mathcal{D}}}{\E}[\textup{ALG}^{\hat{\mathcal{D}}}(\theta)] \geq (2\sqrt{2}-2) \cdot \hat{\mu}_{\hat{\Opt}} \geq (2\sqrt{2}-2) \cdot \hat{\mu}_{\Opt}.
    \end{equation}
    Adding the bounds from \eqref{eq: edge-hoeffding} and \eqref{eq: empirical-bound-edge} yields the result.
\end{proof}
\Section{Discussion} \label{sec:discussion}
In this work, we proposed the \textsf{Round-Augment} framework for stochastic two-stage bipartite matching and used this framework to derive algorithms for the vertex-weighted and edge-weighted settings that achieve approximation ratios of $\nicefrac{7}{8}$ and $2\sqrt{2} - 2 \approx 0.828$ to the online optimum, respectively. Furthermore, we showed that both approximation ratios are tight, in the sense that they match the integrality gap of our linear relaxations of the optimum online policy.

Our work motivates an interesting new direction in the burgeoning literature on approximating the optimum online policy (i.e., philosopher inequalities). We provided a tight rounding of the natural ``marginal-level'' LP relaxation of the optimum online. Despite a line of work on the online bipartite matching problem with node-arrivals, a guarantee matching the integrality gap is currently not known for the natural ``marginal level'' LP relaxation of optimum online. Is it possible to obtain one? Further improvements to the two-stage problem will require a new LP relaxation, perhaps utilizing a hierarchy to capture higher moments, which is of interest to a number of problems in the space of approximating optimum online. 
\section*{Acknowledgments}
This work was supported in part by AFSOR grant FA9550-23-1-0251, ONR grant N000142212771, and a National Defense Science \& Engineering Graduate (NDSEG) fellowship. We thank Ellen Vitercik for early discussions that helped lead us to this research direction.

\bibliographystyle{alpha}
\bibliography{abb,a,ultimate}

\appendix

\Section{Hardness} \label{app:hardness}

A result of \cite{braverman2025new} can be adapted to show that the two-stage bipartite matching problem is NP-hard to approximate. For the sake of completeness, we provide the construction below. 

\begin{definition}
    An instance $\phi$ of the \textsc{Max-(3,4)-SAT} problem consists of a 3-\textsc{SAT} instance where each of $n$ variables appears in exactly 4 of the $m$ clauses. 
\end{definition}

We reduce from a hardness of approximation result for this problem. 

\begin{theorem}[\cite{berman2003approximation}] \label{thm:max34}
    It is NP-hard to approximate \textsc{Max-(3,4)-SAT} within some universal constant $\alpha < 1$. 
\end{theorem}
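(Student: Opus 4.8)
The plan is the classical two-stage route: obtain gap-hardness for ordinary (unbounded-occurrence) \textsc{Max-3SAT} from the PCP theorem, then convert it to an instance of the rigid \textsc{Max-(3,4)} shape via an expander-based \emph{occurrence-reduction} gadget that preserves a constant gap. Concretely, the PCP theorem (in its \textsc{Max-3SAT} form) supplies a constant $\epsilon_0 > 0$ for which it is \textsf{NP}-hard to distinguish satisfiable 3-CNF formulas from those whose maximum fraction of simultaneously satisfiable clauses is at most $1-\epsilon_0$; this is the instance $\psi$, with $m$ clauses, that the reduction starts from.

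The reduction itself replaces each variable $x$ of $\psi$ --- occurring $d_x$ times --- by $d_x$ private copies $x^{(1)},\dots,x^{(d_x)}$, substituting $x^{(j)}$ for the $j$-th occurrence of $x$, so that every copy appears exactly once among the rewritten original clauses. Consistency among the copies is enforced by fixing a constant-degree expander (amplifier) $H_x$ on $\{x^{(1)},\dots,x^{(d_x)}\}$ and adding, for each edge of $H_x$, a short \emph{equality} gadget (a constant number of clauses forcing its two endpoints to agree). One then pads clauses with fresh variables so that every clause has exactly three literals, and tunes the amplifier degrees plus a bounded number of auxiliary equality gadgets on the dummies so that \emph{every} variable --- original copy or dummy --- ends up in exactly four clauses. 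The resulting \textsc{Max-(3,4)-SAT} instance has size $O(m)$ and is produced in polynomial time.

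For the gap analysis, completeness is immediate: a satisfying assignment of $\psi$ lifts to one for the new instance by setting all copies of each variable equal and choosing the dummies to satisfy their gadgets. For soundness, given any assignment $\tau$ of the new instance, I would form the majority-vote assignment $\bar\tau$ of the original variables and flip every ``minority'' copy to agree with $\bar\tau$; by the edge-expansion of $H_x$, the number of equality gadgets violated by $\tau$ is at least a constant times the number of minority copies, so this repair fixes all violated equality clauses while damaging at most that many rewritten original clauses. Choosing the amplifier degree large enough relative to $\epsilon_0$ makes the trade-off net positive, so a $(1-\epsilon_0)$-satisfiable $\psi$ yields an instance that is at most $(1-\epsilon_1)$-satisfiable for a constant $\epsilon_1 > 0$ depending only on $\epsilon_0$ and the amplifier parameters. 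Taking $\alpha := 1-\epsilon_1$ gives the theorem.

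The main obstacle is the simultaneous bookkeeping in the construction together with the expansion estimate: one must engineer the amplifier graphs and the dummy-padding so that occurrences are \emph{exactly} four and clauses have \emph{exactly} three literals, \emph{while} keeping the edge-expansion constant large enough to dominate the per-copy repair cost uniformly over all variables --- these requirements interact, and it is this balancing act that pins down the (necessarily constant, but otherwise unspecified) loss $\epsilon_1$ in the gap. Once the amplifier parameters are fixed, the remaining verifications are routine.
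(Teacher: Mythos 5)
The paper does not prove this statement at all: it is imported verbatim as a citation to Berman--Karpinski, so there is no in-paper argument to compare against. Your outline follows the standard occurrence-reduction paradigm (PCP gap for \textsc{Max-3SAT}, variable copies, expander-based consistency gadgets, majority-vote repair for soundness), which is indeed the route the cited work takes, so the high-level plan is the right one.

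However, the step you defer as ``routine balancing'' is precisely where the theorem lives, and the naive version of your construction provably cannot reach the $(3,4)$ format. Concretely: your repair argument needs the amplifier's edge expansion to exceed $1$ (each flipped minority copy can break the one rewritten original clause it sits in, so you need strictly more than one repaired equality gadget per flipped copy); a $d$-regular graph has edge expansion at most about $d/2$, so you need $d\ge 3$ in practice. An equality gadget $x^{(j)}\leftrightarrow x^{(k)}$ costs at least two clauses containing both endpoints, so a copy of degree $d$ in $H_x$ appears in at least $2d+1\ge 7$ clauses --- and in any case $2d+1$ is odd, so it is never $4$. Escaping this requires the genuinely nontrivial ingredient of Berman--Karpinski-style \emph{amplifiers}: bipartite-like graphs in which only a sparse set of ``contact'' vertices carries the actual occurrences while a larger population of auxiliary ``checker'' variables absorbs the degree needed for expansion, with the expansion property stated relative to the contact set. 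Your sketch gestures at dummies and degree tuning but does not introduce this contact/checker separation, without which the occurrence count and the expansion requirement are in direct arithmetic conflict. A secondary, smaller gap: padding the two-literal equality clauses up to exactly three literals with fresh variables changes the satisfiability accounting (a dummy appearing positively in several padded clauses can satisfy them for free), so the gadgets must be designed so that the padding cannot be exploited to erase the gap; this also needs an explicit argument rather than an appeal to tuning.
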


We can use this to demonstrate the two-stage stochastic matching problem is NP-hard to approximate within a constant, even in its simplest unweighted variant.  

\begin{lemma}
    There exists some constant $\beta < 1$ such that the unweighted two-stage stochastic matching problem is hard to approximate within factor $\beta$. 
\end{lemma}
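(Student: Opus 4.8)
The plan is to build a reduction from \textsc{Max-(3,4)-SAT} to the unweighted two-stage stochastic matching problem, adapting the philosopher-inequality hardness construction of \cite{braverman2025new}. The high-level idea is to associate the first-stage matching decisions with truth assignments to variables, and to use the randomness in the second stage to ``test'' clauses: a clause gadget should be completable (contribute full matching value) in expectation only to the extent that its literals are set consistently with a satisfying assignment. By \Cref{thm:max34} there is a constant $\alpha < 1$ such that distinguishing whether a \textsc{Max-(3,4)-SAT} instance $\phi$ is fully satisfiable from the case where at most an $\alpha$-fraction of clauses can be satisfied is NP-hard; our reduction should map these two cases to two-stage instances whose optimum online values differ by a multiplicative constant, yielding the desired $\beta < 1$.

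Concretely, first I would construct, for each variable $x_j$, a first-stage gadget — a small fixed bipartite graph on $B_1$ and a dedicated block of offline nodes — in which the only (nearly) optimal first-stage matchings correspond to the two Boolean values of $x_j$; committing to ``true'' leaves one designated offline ``port'' node free and ``false'' leaves a different one free (and using the $4$-occurrence regularity of the problem, four literal-copies of whichever port is relevant). Next I would build the second-stage distribution: with appropriate probabilities, a batch $B_2$ arrives that presents a clause gadget for a uniformly random clause $C_i$ (or a product-style distribution over clauses), wired to the three port nodes of the literals appearing in $C_i$. The clause gadget is designed so that its second-stage max matching attains its full size exactly when at least one of the three incident ports was left free by the first stage (i.e. the clause is satisfied), and is strictly smaller otherwise. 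The remaining offline nodes not touched by any realized clause gadget are padded with deterministic first-stage edges so their contribution is fixed and does not interfere with the analysis.

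Then I would analyze \OptOn on this instance. The first-stage decisions of any online algorithm are, by definition, independent of the realized clause, so they induce a (possibly randomized) assignment to the variables; the expected second-stage gain is then an affine function of the probability that a random clause is satisfied by that assignment. If $\phi$ is satisfiable, fixing the first stage to a satisfying assignment makes every clause gadget fully completable, giving online value $V_{\mathrm{yes}}$. If every assignment satisfies at most an $\alpha$-fraction of clauses, then for \emph{any} (even randomized) first-stage strategy the expected fraction of completable clause gadgets is at most $\alpha$, capping the online value at some $V_{\mathrm{no}} < V_{\mathrm{yes}}$; here it is important that the gadget sizes are chosen so the clause-testing part carries a non-vanishing constant share of the total matching value, so that $V_{\mathrm{no}}/V_{\mathrm{yes}} \le \beta$ for an absolute constant $\beta < 1$. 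Finally, since a hypothetical $\beta$-approximation algorithm for two-stage matching would distinguish the two cases and hence solve the NP-hard gap problem of \Cref{thm:max34}, the claimed inapproximability follows.

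The main obstacle I expect is designing the clause gadget and calibrating its size relative to the variable gadgets and the padding so that (i) the ``satisfied vs.\ unsatisfied'' distinction in a single clause gadget is exactly a fixed additive gap in matching size, (ii) this gap does not leak across clause gadgets or let the algorithm ``cheat'' by sacrificing one variable gadget to help another (this is where the $4$-regularity and a careful choice of port multiplicities matter, and where one must rule out beneficial deviations from the canonical assignment-encoding first-stage matchings), and (iii) the clause-testing value is a constant fraction $\Omega(1)$ of the total, so the $\alpha$-gap in \textsc{Max-(3,4)-SAT} survives as a constant multiplicative gap $\beta$ in the matching objective. Verifying that the LP/online benchmark behaves as claimed — in particular that no online policy can exploit correlations between the (independent-of-$\theta$) first stage and the realized clause — is the crux, though it follows cleanly from the definition of \OptOn once the gadgets are in place.
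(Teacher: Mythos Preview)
Your proposal is correct and follows essentially the same approach as the paper: reduce from \textsc{Max-(3,4)-SAT}, encode a truth assignment via the first-stage matching in per-variable gadgets, and test clauses via randomly realized second-stage nodes whose matchability depends on at least one literal's port being free. The paper's actual construction is somewhat simpler than what you envision---each variable is just two offline nodes $\texttt{T}_i,\texttt{F}_i$ with a single degree-2 first-stage neighbor, each clause is a single degree-3 second-stage node arriving independently with small probability $p$, and no padding is needed; the ``cheating'' and contention issues you flag are handled by an exchange argument (forcing all first-stage nodes to be matched) and by taking $p$ small enough that collisions among second-stage arrivals are $O(p^2)$.
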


As the proof proceeds nearly identically to \cite{braverman2025new}, we provide a sketch below. 

\begin{proof}[Proof Sketch.]
    Given a \textsc{Max-(3,4)-SAT} instance $\phi$ with $n$ variables and $m$ clauses, construct a two-stage graph $G_{\phi}$ with $2n$ offline nodes $\{\texttt{T}_1, \texttt{F}_1, \ldots, \texttt{T}_n, \texttt{F}_n\}$. In the first stage, there are $n$ online nodes of degree two, with the $i$\textsuperscript{th} online node neighboring $\texttt{T}_i$ and $\texttt{F}_i$. The second stage is sampled from distribution $\mathcal{D}$, which consists of $m$ nodes each arriving independently with probability $p = $. Each node has degree three and corresponds to a different clause $C$ in $\phi$; for each of the literals $l_i \in C$, if $l_i = x_i$ the node has an edge to $\texttt{F}_i$ while if $l_i = \overline{x_i}$ it has an edge edge to $\texttt{T}_i$.

    It is straightforward to use an exchange argument to argue that $\text{OPT}_{\text{on}}(G_{\phi})$ matches every first-stage node. Then, an arriving second-stage node can only be matched if the corresponding clause is satisfied in the truth assignment provided by this first-stage matching. Thus $$\text{OPT}_{\text{on}}(G_{\phi}) \le n + \text{OPT}(\phi) \cdot p$$ where $\text{OPT}(\phi)$ denotes the number of clauses satisfied in an optimal assignment. Conversely, $$\text{OPT}_{\text{on}}(G_{\phi}) \ge n + \text{OPT}(\phi) \cdot p - 2n \cdot 6p^2. $$ Indeed, consider the matching algorithm which in the first stage follows the assignment of $\text{OPT}(\phi)$, and attempts to match all second-stage arrivals which correspond to satisfied clauses. This would result in an expected gain of $n + \text{OPT}(\phi) \cdot p$ if there was no contention among arrivals in the second stage. The total number of edges an individual offline node sees realized in the second-stage that cannot be matched due to contention is upper bounded by $$\mathbb{E} \left[ \max \left( 0, \text{Ber}(4,p) - 1 \right) \right] \le 6p^2$$ as each variable appears in only four clauses. We can bound $\text{OPT}(\phi) \ge \tfrac{7}{8} m = \tfrac{7}{6}n$. We do not optimize constants, but note that by taking $p = (1-\sqrt{\alpha} ) \cdot 0.1$, for $\alpha$ as in \Cref{thm:max34}, a direct calculation shows it is hence hard to approximate $\text{OPT}_{\text{on}}(G_{\phi})$ within some constant factor $\beta := 1 - 0.01 ( 1 - \alpha )^2$. 
\end{proof}

We note that the same construction implies an analogous hardness-of-approximation result for the two-stage problem whose second stage is realized from a distribution $\mathcal{D}$ with polynomial-size support. Indeed, it suffices to realize the arrivals from the second stage such that each node arrives with probability $p$ and arrivals satisfy \emph{pairwise} independence. Classic constructions of pairwise-independent hash functions provide examples of such distributions over $2^{[n]}$, whose support grows polynomially in the size of $n$ for a fixed constant $p$.\footnote{For example, consider a distribution $\mathcal{D}$ defined by first picking $a$ and $b$ uniformly at random from $\mathbb{Z}_n = \{0, 1, \ldots, n-1\}$ for prime $n$, and then outputting the subset of all $i \in \mathbb{Z}_i$ for which $a + b \cdot i \text{ (mod } n \text{)} \le p \cdot n$.}
\Section{Deferred Proofs} 
\label{app:deferredproofs}
\Paragraph{Validity of LP relaxation}
\begin{claim}
    \eqref{LPon} is a relaxation of the optimum online, with an optimal value of at least $\textup{OPT}\eqref{LPon}.$
\end{claim}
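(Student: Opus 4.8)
The plan is to exhibit a single feasible solution to \eqref{LPon} whose objective value equals the expected gain of the optimum online algorithm. Fix an optimal online algorithm (possibly randomized), and let $M_1$ denote its random first-stage matching and $M_2$ its random second-stage matching. The crucial structural fact is that $M_1$ must be committed before $\theta$ is revealed, so the distribution of $M_1$ does not depend on $\theta$. Define
\[
x_{ia} := \Pr[(i,a) \in M_1], \qquad y^\theta_{ib} := \Pr[(i,b) \in M_2 \mid \theta],
\]
setting $x_{ia}=0$ for $(i,a)\notin E$ and $y^\theta_{ib}=0$ for $(i,b)\notin E^\theta$.

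Next I would check feasibility constraint by constraint. For \eqref{eq:lp_A}, each $a \in B_1$ is matched at most once in any realization of the algorithm's randomness, so $\sum_i \1[(i,a)\in M_1] \le 1$ pointwise; taking expectations gives $\sum_i x_{ia}\le 1$. The bound \eqref{eq:lp_B} follows identically after conditioning on $\theta$. For \eqref{eq:lp_online_constraint}, fix $i$ and $\theta$: since each offline node is matched at most once across \emph{both} stages, $\sum_a \1[(i,a)\in M_1] + \sum_b \1[(i,b)\in M_2] \le 1$ pointwise, and taking expectations conditioned on $\theta$ gives $\sum_a \Pr[(i,a)\in M_1 \mid \theta] + \sum_b y^\theta_{ib}\le 1$. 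The only place the online (non-anticipative) nature of the algorithm enters is here: $\Pr[(i,a)\in M_1 \mid \theta] = \Pr[(i,a)\in M_1] = x_{ia}$ because $M_1$ is independent of $\theta$, which is exactly what is needed for $(\vec x,\vec y)$ to satisfy \eqref{eq:lp_online_constraint} simultaneously for every $\theta$. Nonnegativity is immediate.

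Finally I would verify the objective. By linearity of expectation, $\sum_{a\in B_1}\sum_{i}\tilde w_{ia} x_{ia}$ is the expected weight collected in stage one and $\E_{\theta\sim\mathcal D}\bigl[\sum_{b\in B_2}\sum_i \tilde w_{ib} y^\theta_{ib}\bigr] = \E_{\theta}\E[\text{weight collected in stage two}\mid\theta]$ is the expected weight collected in stage two; in the vertex-weighted case this uses that an offline node matched in stage one is never rematched in stage two, so $w_i$ is counted once. Hence the value of $(\vec x,\vec y)$ in \eqref{LPon} equals $\E_{\theta\sim\mathcal D}[\OptOn(\theta)]$, so $\textup{OPT}\eqref{LPon} \ge \E_{\theta\sim\mathcal D}[\OptOn(\theta)]$. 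There is no genuine obstacle here — the argument is routine; the single point worth emphasizing is that the joint feasibility of \eqref{eq:lp_online_constraint} across all $\theta$ hinges on the first-stage marginals being $\theta$-independent, which is also precisely why \eqref{LPon} is \emph{not} a relaxation of the optimum offline.
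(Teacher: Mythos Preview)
Your argument is correct and follows essentially the same route as the paper: exhibit a feasible point of \eqref{LPon} whose objective equals $\E_{\theta}[\OptOn(\theta)]$ by reading off first- and second-stage marginals of an optimal online algorithm, with the key observation that the first-stage marginals are $\theta$-independent. The only minor difference is that you treat a possibly randomized optimal algorithm and take expectations, whereas the paper writes indicator variables (implicitly assuming a deterministic first-stage matching); your version is slightly more general but otherwise identical in structure.
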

\begin{proof}
Constraint \eqref{eq:lp_online_constraint} ensures that each offline node \(i\) is fractionally matched to at most one online node (across both batches); constraints \eqref{eq:lp_A} and \eqref{eq:lp_B} enforce that each online node is matched at most once in its batch; and the conditions above guarantee that only valid edges receive nonzero weight. It is simple to verify that the LP is indeed a relaxation. Let $\mathcal{M}^\theta$ denote the matching output by the optimal online algorithm under realization $\theta$, and define
\[x_{ia} = \mathbbm{1}\{(i,a) \in \mathcal{M}^\theta\} \quad \quad \text{and} \quad \quad y_{ib}^\theta = \mathbbm{1}\{(i,b) \in \mathcal{M}^\theta\}.\]
Note that $x_{ia}$ is well-defined, since the optimal online algorithm has no knowledge of the realization of $\theta$, and so selects the same first stage matching under all realizations. Moreover, the validity of Constraints \eqref{eq:lp_online_constraint}, \eqref{eq:lp_A}, \eqref{eq:lp_B}, and \eqref{eq:lp_nonneg} all follow directly from the fact that $\mathcal{M}^\theta$ is a valid matching for all realizations $\theta \in \textup{supp}(\mathcal{D})$. Finally, the zero weight condition on non-edges follows from the fact that $\mathcal{M}^\theta \subseteq E \cup E^{\theta}$, so non-edges have a value of zero by default.

Finally, the objective value of this solution is exactly the expected weight of the matching returned by the optimal online algorithm. 
\begin{align*}
    \E^{\theta}[|\mathcal{M}^\theta|] 
        &= \E^{\theta}\left[\sum_{i \in I} \left(\sum_{a \in B_1} \mathbbm{1}\{(i,a) \in \mathcal{M}^\theta\} + \sum_{b \in B_2} \mathbbm{1}\{(i,b) \in \mathcal{M}^\theta\}  \right) \right] \\
        &= \sum_{i \in I} \sum_{a \in B_1} \mathbbm{1}\{(i,a) \in \mathcal{M}^\theta\} + \E^{\theta}\left[\sum_{i \in I}\sum_{b \in B_2} \mathbbm{1}\{(i,b) \in \mathcal{M}^\theta\} \right]  \\
        &= \sum_{a \in B_1} \sum_{i \in I} x_{ia} + \mathbb{E}_{\theta}\Biggl[\sum_{b \in B_2} \sum_{i \in I} y^\theta_{ib}\Biggr].
\end{align*}
\end{proof}

\Paragraph{Negative dependence of node availabilities}
\depRoundNA*
\begin{proof}
Let $X_v^{(k)}$ be the fractional degree of node $v \in R$ after applying $k$ steps of dependent rounding to the initial matching $x$. We show by induction on $k$ that the random variables $(X_v^{(k)})_{v \in R}$ are NA. Recall it suffices to show that for any two disjoint index sets $S, T \subseteq R$ and any two functions $f: [0,1]^S \to \mathbb{R}$ and $g: [0,1]^T \to \mathbb{R}$ that are both non-decreasing in each coordinate, it holds that
\[\mathbb{E}[f(X^{(k)}_S) \cdot g(X^{(k)}_T)] \le \mathbb{E}[f( X^{(k)}_S)] \cdot \mathbb{E}[g(X^{(k)}_T)].\]
Note that upon termination, the node degrees will be exactly the collection $(\overline{A_v})_{v \in R}$.

The case where $k = 0$ is immediate, as $(X_v^{(0)})_{v \in R}$ is deterministic. Next, consider the $(k+1)^{\textup{st}}$ iteration of dependent rounding. If we identify a cycle $P$ in the first step then $X_v^{(k+1)} = X_v^{(k)}$ for all $v$, and we invoke the inductive hypothesis. If we instead identify a maximal path $P$, for fixed disjoint $S, T \subseteq R$ we consider cases based on the endpoints of $P$. For each, let $p = \beta \cdot (\alpha + \beta)^{-1}$ be the probability of an $\alpha$-adjustment and $1-p$ be the probability of a $\beta$-adjustment.
\begin{enumerate}
    \item \textit{Neither endpoint of $P$ is in $S \cup T$}.
    The conclusion is immediate as $X_i^{(k+1)} = X_i^{(k)}$ for all $i \in S \cup T$.
    \item \textit{One endpoint of $P$ is in $S \cup T$}. Call it $i$, and assume without loss of generality that \(i\in S\) and $i$ is incident to an edge in $M_1$.  For every \(s\in S\setminus\{i\}\) and every \(t\in T\) the rounding step leaves  
    \[
    X_s^{(k+1)}=X_s^{(k)} 
    \quad \text{and} \quad
    X_t^{(k+1)}=X_t^{(k)}.
    \] 
    To handle node $i$, define the monotone increasing function\[h(X_S^{(k)}) := p \cdot f(X_i^{(k)} + \alpha, X^{(k)}_{S \setminus \{i\}}) + (1-p) \cdot f(X_i^{(k)} - \beta, X^{(k)}_{S \setminus \{i\}}).\]  Because $X_i^{(k+1)} = X_i^{(k)} + \alpha$ with probability $p$ and $X_i^{(k+1)} = X_i^{(k)} - \beta$ with probability $1-p$,
    \begin{align*}
    \E[f(X_S^{(k+1)}) \cdot g(X_T^{(k+1)})]
        &= \E[h(X_S^{(k)}) \cdot g(X_T^{(k)})] \\
        &\leq \E[h(X_S^{(k)})] \cdot \E[g(X_T^{(k)})] \\
        &= \E[f(X_S^{(k+1)})] \cdot \E[g(X_T^{(k+1)})],
    \end{align*}  
    where the inequality follows from the inductive hypothesis.
    \item \textit{Both endpoints of $P$ are in $S$ (or $T$)}. Call these endpoints $i$ and $j$, and assume without loss of generality $i, j \in S$. Then $X_s^{(k+1)} = X_s^{(k)}$ for all $s \in S \setminus \{i, j\}$, $X_t^{(k+1)} - X_t^{(k)}$ for all $t \in T$, and $(X^{(k)}_i,X^{(k)}_{j})$ is sent to either $(X^{(k)}_i+\alpha,X^{(k)}_{j}-\alpha)$ with probability \ $p$ or $(X^{(k)}_i-\beta,X^{(k)}_{j}+\beta)$ with probability\ $1-p$.  Set
\[
h(X_S^{(k)}) 
= p \cdot \,f\bigl(X_i^{(k)}+\alpha,X_{j}^{(k)}-\alpha,X^{(k)}_{S\setminus\{i,j\}}\bigr)
+ (1-p) \cdot \,f\bigl(X_i^{(k)}-\beta,X_{j}^{(k)}+\beta,X^{(k)}_{S\setminus\{i,j\}}\bigr).
\]
By monotonicity of $f$ in each coordinate we again have that $h$ is non-decreasing. The same argument as in the previous case yields $\E[f(X_S^{(k+1)}) \cdot g(X_T^{(k+1)})]\le\E[f(X_S^{(k+1)})]\E[g(X_T^{(k+1)})]$.\\

\item \textit{One endpoint of $P$ is in each of $S$ and $T$}. Say $i \in S$ and $j \in T$, and assume without loss of generality that $i$ is adjacent to an edge in $M_1$ while $j$ is adjacent to an edge in $M_2$ (note the matchings must be different, as any path between $i$ and $j$ has even length). All other fractional degrees remain fixed, and $(X^{(k)}_i,X^{(k)}_j)$ is sent to either $(X^{(k)}_i+\alpha,X^{(k)}_j-\alpha)$ with probability $p$ or $(X_i-\beta,X_j+\beta)$ with probability $1-p$. Write
\[
\begin{array}{cc}
     f^+ = f\bigl(X_i^{(k)}+\alpha,X^{(k)}_{S\setminus\{i\}}\bigr) \qquad & \qquad f^- = f\bigl(X_i^{(k)}-\beta,X^{(k)}_{S\setminus\{i\}}\bigr) \\[1em]
     g^- = g\bigl(X_j^{(k)}-\alpha,X^{(k)}_{T\setminus\{j\}}\bigr) \qquad & \qquad g^+ = g\bigl(X_j^{(k)}+\beta,X^{(k)}_{T\setminus\{j\}}\bigr).
\end{array}
\]
By the inductive hypothesis,
\begin{align*}
\E[f(X_S^{(k+1)}) \cdot g(X_T^{(k+1)})] 
    &= p \cdot \E[f^+ \cdot g^-] + (1-p) \cdot \E[f^- \cdot g^+] \\
    &\leq p \cdot \E[f^+] \cdot \E[g^-] + (1-p) \cdot \E[f^-] \cdot \E[g^+].
\end{align*}
At the same time,
\[
\E[f(X_S^{(k+1)})]
= p \cdot \E[f^+] + (1-p) \cdot \E[f^-]\quad \text{and} \quad
\E[g(X_T^{(k+1)})]
= p \cdot \E[g^-] + (1-p) \cdot \E[g^+].
\]
Putting everything together, we have
\begin{align*}
\E[f(X_S^{(k+1)})] &\cdot \E[g(X_T^{(k+1)})] - \E[f(X_S^{(k+1)}) \cdot g(X_T^{(k+1)})] \\
&\geq \bigl(p \cdot \E[f^+]+(1-p) \cdot \E[f^-]\bigr)\bigl(p\cdot \E[g^-]+(1-p)\cdot \E[g^+]\bigr)\\
&\qquad - \bigl(p\cdot \E[f^+]\cdot\E[g^-]+(1-p)\cdot \E[f^-]\cdot\E[g^+]\bigr)\\
&= p\cdot (1-p) \cdot \Bigl(\E[f^+]\cdot\E[g^+] + \E[f^-]\cdot \E[g^-] - \E[f^+] \cdot \E[g^-] - \E[f^-] \cdot \E[g^+]\Bigr)\\
&= p \cdot (1-p) \cdot\bigl(\E[f^+]-\E[f^-]\bigr)\bigl(\E[g^+]-\E[g^-]\bigr).
\end{align*}

This quantity is non-negative since \(f\) and \(g\) are non-decreasing.
\end{enumerate}
For fixed disjoint $S, T \subseteq R$, these four cases cover all possible configurations, yielding the desired result. 
\end{proof}

\Paragraph{Edge-weighted analysis}
\factCRSineq*
\begin{proof}
    It suffices to show that
    \[c \frac{x^{x+1}}{(x+1)^{x+1}} \leq \frac{(x-1)^x}{x^x} \qquad \text{or equivalently} \qquad f(x) := \frac{x+1}{x}\left(1 - \frac{1}{x^2}\right)^x \geq c.\]
   First note that $f(3) = \frac{4}{3} \cdot \left(\frac{8}{9}\right)^3 \approx 0.936 \geq c$. Then, the log-derivative of $f$ is
    \begin{align*}
        \frac{d}{dx} \log f(x) = -\frac{1}{x(x+1)} +\log\left(1 - \frac{1}{x^2}\right) + \frac{2}{x^2 - 1}.
    \end{align*}
    Applying the fact that $\log(1-y) \geq -y-y^2$ for $y \in [0,1]$ with $y = \frac{1}{x^2}$ yields
\begin{align*}
     \frac{d}{dx} \log f(x) \geq -\frac{1}{x(x+1)} - \frac{1}{x^2} - \frac{1}{x^4} + \frac{2}{x^2 - 1} = \frac{x^3+1}{x^4(x^2-1)} \geq 0.
\end{align*}
Thus, $f$ is increasing for $x \geq 3$, and the desired result follows immediately.
\end{proof}

\begin{lemma}
Let $\{y_i\}_{i=1}^n$ be non-negative numbers with $\sum_i y_i \le 1$. Assume there is a distribution over active elements satisfying the condition that for any subset $S \subseteq [n]$, $$\Pr[ \ge 1 \textup{ elt. in } S \textup{ is active}] \ge c \cdot \sum_{i \in S} y_i.$$ Then, there exists a monotone contention resolution scheme, which upon observing the active elements, selects at most one such that $\Pr[i \textup{ selected}] = c \cdot y_i.$
\end{lemma}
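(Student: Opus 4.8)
The plan is to (i) reduce the exact-marginal requirement to a one-sided one, and then (ii) realize the one-sided version by a mixture of ``first active element'' selection rules indexed by orderings of $[n]$, each of which is automatically monotone.

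For step (i), it suffices to construct a monotone scheme $\phi$ that selects at most one active element and satisfies $\Pr[i \text{ selected}] \ge c y_i$ for every $i$. Given such a $\phi$, for each $i$ with $y_i>0$ I would, whenever $\phi$ would select $i$, instead output nothing with probability $1 - c y_i / \Pr[\phi \text{ selects } i]$; for $i$ with $y_i=0$ I would never select $i$. This downscales each marginal to exactly $c y_i$, cannot create a second selected element, and preserves monotonicity because every conditional selection probability gets multiplied by the same constant in $[0,1]$.

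For step (ii), I would introduce the coverage function $g:2^{[n]}\to[0,1]$ given by $g(S):=\Pr[A\cap S\neq\emptyset]$, which is monotone and submodular with $g(\emptyset)=0$, hence a polymatroid rank function. The hypothesis says precisely that $\lambda:=(c y_i)_i$ lies in the independence polytope $P_g:=\{x\ge 0 : \sum_{i\in S}x_i\le g(S)\ \forall S\}$ (the case $S=[n]$ giving $\sum_i\lambda_i\le g([n])$). For a monotone polymatroid, $P_g$ is the down-hull of its base polytope $B_g$, so there is a base $\mu\in B_g$ with $\mu\ge\lambda$ coordinatewise. By Edmonds' characterization, $B_g$ is the convex hull of the greedy vertices $v^\pi$ over orderings $\pi=(i_1,\dots,i_n)$ of $[n]$, where $v^\pi_{i_k}=g(\{i_1,\dots,i_k\})-g(\{i_1,\dots,i_{k-1}\})$; telescoping this identity shows $v^\pi_{i_k}$ is exactly $\Pr[i_k \text{ is the first active element under } \pi]$. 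Writing $\mu=\sum_\pi p_\pi v^\pi$ for a distribution $p$ over orderings, I would define $\phi$ to draw $\pi\sim p$ and then select the $\pi$-first active element (nothing if $A=\emptyset$). Then $\Pr[\phi \text{ selects } i]=\mu_i\ge\lambda_i$, and conditioned on $\pi$, if $i\in A\subseteq B$ then ``$i$ first active in $B$'' implies ``$i$ first active in $A$'', so $\phi$ is monotone; combining with the downscaling of step (i) yields the claimed scheme.

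The routine-but-load-bearing facts are the two polymatroid statements — that $\lambda\in P_g$ extends upward to a base $\mu\in B_g$, and that $B_g$ is spanned by the greedy vertices with the stated interpretation of $v^\pi$ — together with the clean identification of $g$ as a coverage function. I do not expect a genuine obstacle here: the monotonicity comes ``for free'' from the ordering structure, in the spirit of the contention-resolution arguments of \cite{chekuri2014submodular}. An alternative route would be to encode the existence of a monotone CRS as a feasibility LP and dualize as in \Cref{lem:dualityCRS}, but extracting the clean covering condition from that dual seems to require exactly the uncrossing / greedy structure that the polymatroid argument makes transparent, so I would prefer the route above.
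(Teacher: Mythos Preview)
Your proof is correct and takes a genuinely different route from the paper's. The paper encodes the existence of a monotone CRS as a linear program whose variables are distributions over \emph{all} deterministic monotone schemes, takes the dual, and lower bounds the dual objective by observing that the best deterministic monotone scheme for weights $z$ is ``select the active element of maximum $z$-value''; telescoping over the sorted $z$-values then reduces directly to the union-probability hypothesis. You instead recognize $g(S)=\Pr[A\cap S\neq\emptyset]$ as a polymatroid rank function, place the target marginals $(c y_i)_i$ in its independence polytope, extend to a base, and decompose that base via Edmonds' theorem into greedy vertices, which you cleanly identify with ``first-active-under-ordering'' selectors. Your route is more constructive (Carath\'eodory gives an explicit mixture over at most $n+1$ orderings) and makes the combinatorial structure transparent; the paper's route stays within the standard LP-duality CRS template of \cite{chekuri2014submodular} and so may be more familiar to readers of that literature. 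The two arguments are morally the same --- the paper's telescoping with sorted $z$ is exactly the greedy algorithm certifying a vertex of the base polytope --- but this connection is left implicit in the paper, whereas you make it explicit.
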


\begin{proof}
    We follow ideas due to \cite{chekuri2014submodular}. Let $\Psi$ denote the set of monotone deterministic contention resolution schemes. Equivalently, $\Psi$ consists of all mappings $\phi : 2^{[n]} \rightarrow [n] \cup \{\emptyset\}$ such that $\phi(A) \in A \cup \{\emptyset\}$ for any $A \subseteq [n]$, and $\phi$ is monotone (i.e., if $\phi(A) = i$ for $i \in A$, and $i \in B \subseteq A$, then $\phi(B) = i$).
    Let $(\lambda_\phi)_{\phi \in \Psi}$ be a probability distribution over these deterministic schemes. Such a distribution defines a randomized monotone contention resolution scheme.
    Going forward, let $A$ denote the random subset of active elements. Let $q_{i,\phi} = \E_A[\1(\phi(A)=i)]$ be the probability that item $i$ is selected by the deterministic scheme $\phi$, where the expectation is over the active elements. 
    The problem of finding the maximum $c$ such that $\Pr[i \text{ selected}] \ge c \cdot y_i$ for all $i \in [n]$, restricted to a distribution of schemes $\phi \in \Psi$, can be formulated as the following linear program (LP):
    \begin{align*}
        \text{(LP)} \quad \max \quad & c \\
        \text{s.t.} \quad & \sum_{\phi \in \Psi} \lambda_{\phi} q_{i,\phi} \ge c \cdot y_i, \quad \forall i \in [n] \\
        & \sum_{\phi \in \Psi} \lambda_{\phi} = 1 \\
        & \lambda_{\phi} \ge 0, \quad \forall \phi \in \Psi     \end{align*}
    The dual of this LP is:
    \begin{align*}
        \text{(DP)} \quad \min \quad & \mu \\
        \text{s.t.} \quad & \sum_{i=1}^n q_{i,\phi} z_i \le \mu, \quad \forall \phi \in \Psi \\
        & \sum_{i=1}^n y_i z_i \ge 1 \\
        & z_i \ge 0, \quad \forall i \in [n]
    \end{align*}
    We will show that the objective of the dual 
    is at least $c$, the constant from the lemma's hypothesis (i.e., $\Pr[A \cap S \ne \emptyset] \ge c \sum_{j \in S} y_j$).
    To do this, we show that for any feasible dual solution $(z_1, \dots, z_n, \mu)$, we have $\mu \ge c$.
    The dual objective can be written as $$\min \mu = \min_{\substack{z_i \ge 0 \\ \sum y_i z_i \ge 1}} \left( \max_{\phi \in \Psi} \sum_{i=1}^n q_{i,\phi} z_i \right).$$
    We can write $$\sum_{i=1}^n q_{i,\phi} z_i = \E_A\left[\sum_{i=1}^n z_i \1(\phi(A)=i)\right]= \E_A[z_{\phi(A)}],$$ where we use that $\phi(A)$ is either a single item $j \in A$ or $\emptyset$, and use the convention $z_\emptyset = 0$. Note that the optimal deterministic monotone scheme, for a given active set $A$ and weights $z_i \ge 0$, is to select an item $j \in A$ that maximizes $z_j$. If $A = \emptyset$ or all $z_j=0$ for $j \in A$, it selects nothing (value 0). So, $\max_{\phi \in \Psi} \E_A[z_{\phi(A)}] = \E_A[\max_{j \in A} z_j]$ (where $\max_{j \in \emptyset} z_j = 0$).
    Let $z_1 \ge z_2 \ge \dots \ge z_n \ge 0$ be the weights, sorted without loss of generality. Let $z_{n+1}=0$. We then have
    $$ \E_A \left[ \max_{j \in A} z_j \right] = \sum_{k=1}^n (z_k - z_{k+1}) \Pr[\max_{j \in A} z_j \ge z_k]. $$
    The event $\{\max_{j \in A} z_j \ge z_k\}$ is equivalent to the event that $A \cap S_k \ne \emptyset$, where $S_k = \{1, 2, \dots, k\}$ (indices corresponding to sorted $z_j$'s).
    Using the hypothesis of the lemma, $\Pr[A \cap S_k \ne \emptyset] \ge c \sum_{j \in S_k} y_j$.
    So,
    \begin{align*} \E_A \left[ \max_{j \in A} z_j \right] &\ge \sum_{k=1}^n (z_k - z_{k+1}) \left(c \sum_{j=1}^k y_j\right) = c \sum_{k=1}^n (z_k - z_{k+1}) \sum_{j=1}^k y_j = c \sum_{j=1}^n z_j y_j \end{align*}
    Since $z_i$ must satisfy $\sum_{j=1}^n y_j z_j \ge 1$, we have $\E_A[\max_{j \in A} z_j] \ge c \cdot 1 = c$.
    Therefore, the optimal value of (DP) is at least $c$. By strong duality, there hence exists a randomized monotone CR scheme $(\lambda_\phi)_{\phi \in \Psi}$ such that $\Pr[i \text{ selected}] = \sum_{\phi \in \Psi} \lambda_\phi q_{i,\phi} \ge c \cdot y_i$ for all $i \in [n]$.

    To conclude, we show that we can achieve exact equality here without violating monotonicity. Let $P_i = \Pr[i \text{ selected}]$. We have $P_i \ge c y_i$. To achieve equality, we can modify the scheme by adding a final step such that if the original scheme selects $i$ (which happens with probability $P_i$), we make the final selection of $i$ with probability $cy_i/P_i \le 1$. This ``thinning'' procedure transforms a randomized monotone CR scheme into another randomized monotone CR scheme. Specifically, if the original scheme is a convex combination of deterministic monotone CR schemes, the resulting scheme also is.
    Thus, there exists a monotone contention resolution scheme (which is a convex combination of deterministic monotone CR schemes) such that $\Pr[i \text{ selected}] = c \cdot y_i$ for all $i \in [n]$.
\end{proof}

\claimNAprelim*
\begin{proof}
Fix disjoint \(I,J\subseteq\{0,1,\dots,n\}\) and increasing $f(\cdot)$ and $g(\cdot)$ on these respective domains. If \(0\notin I\cup J\), then \(f\) and \(g\) depend only on the \(X\)’s, so 
\[
\E\bigl[f(\vec X_I)\,g(\vec X_J)\bigr]\le\E\bigl[f(\vec X_I)\bigr]\,\E\bigl[g(\vec X_J)\bigr]
\]
by negative association of \((X_1,\dots,X_n)\). Otherwise assume \(0\in I\) and write \(I' = I\setminus\{0\}\). For each \(b\in\{0,1\}\), define 
\[
f_b\bigl(\vec X_{I'}\bigr) = f\bigl((b,\vec X_{I'})\bigr).
\]
By the law of total expectation over the support \(\mathcal{X}\) of \(X_0\), 
\[
\E\bigl[f(\vec X_I)\,g(\vec X_J)\bigr]
=\sum_{x \in \mathcal{X}} \Pr\bigl[X_0 = x\bigr] \cdot \E\bigl[f(\vec X_I)\,g(\vec X_J)\big|X_0 = x\bigr].
\]
Given \(X_0 = x\), the conditional distribution of \(Y\) is \(\mathrm{Ber}(x)\) and is independent of \(\vec X_{I'}\) and \(\vec X_J\). Hence
\[
\E\bigl[f(\vec X_I)\,g(\vec X_J)\big|X_0 = x \bigr]
= \mathbb{E} \bigl[ (1 - x)\,f_0\bigl(\vec X_{I'}\bigr)\,g\bigl(\vec X_J\bigr)+x\,f_1\bigl(\vec X_{I'}\bigr)\,g\bigl(\vec X_J\bigr) \bigr].
\]
Thus
\[
\E\bigl[f(\vec X_I)\,g(\vec X_J)\bigr]
=\E\Bigl[\,(1 - X_0)\,f_0\bigl(\vec X_{I'}\bigr)\,g\bigl(\vec X_J\bigr)
+X_0\,f_1\bigl(\vec X_{I'}\bigr)\,g\bigl(\vec X_J\bigr)\Bigr].
\]
Define 
\[
h\bigl(X_0,\vec X_{I'}\bigr)
=(1 - X_0)\,f_0\bigl(\vec X_{I'}\bigr)+X_0\,f_1\bigl(\vec X_{I'}\bigr).
\]
Since \(f_1\ge f_0\), the function \(h\) is increasing in \(X_0\); it is also increasing in each \(X_i\) for \(i\in I'\).  Likewise \(g\) is increasing in each \(X_j\) for \(j\in J\) by assumption.  Because \(\{0\}\cup I'\) and \(J\) are disjoint, the negative association of \((X_0,X_1,\dots,X_n)\) implies
\[
\E\bigl[h\bigl(X_0,\vec X_{I'}\bigr)\,g\bigl(\vec X_J\bigr)\bigr]
\le\E\bigl[h\bigl(X_0,\vec X_{I'}\bigr)\bigr]\E\bigl[g\bigl(\vec X_J\bigr)\bigr].
\]
Noting that 
\[
\E\bigl[h\bigl(X_0,\vec X_{I'}\bigr)\bigr]
=\E\Bigl[(1 - X_0)\,f_0\bigl(\vec X_{I'}\bigr)+X_0\,f_1\bigl(\vec X_{I'}\bigr)\Bigr]
=\E\bigl[f(\vec X_I)\bigr],
\]
we conclude
\[
\E\bigl[f(\vec X_I)\,g(\vec X_J)\bigr]
\le\E\bigl[f(\vec X_I)\bigr]\E\bigl[g(\vec X_J)\bigr]. \qedhere
\]
\end{proof}

\Section{Approximation to Optimum Offline} \label{app:offline}
An earlier result from \cite{feng2021two} shows that the best possible approximation to the optimum offline for stochastic two-stage matching is $\nicefrac{3}{4}$. We give a simple two-stage rounding algorithm of a natural LP relaxation that achieves the same bound. Although this a weaker result than that of \cite{feng2021two}, who provide a $\nicefrac{3}{4}$-competitive algorithm for adversarial input, we include it here as a simple alternative proof. 

\Paragraph{LP relaxation of \(\OptOff\)}\label{subsec: LP-relax-exp}

Let $x_{ia}$ denote the fractional matching value for each first-stage node pair $i \in I$ and $a \in B_1$. Similarly, for all realizations $\theta \in \Theta$, let $y^\theta_{ib}$ denote the fractional matching value for each second-stage node pair $i \in I$ and $b \in B_2$. We enforce that non-edges are not included in the matching, i.e., 
\[
x_{ia}=0 \quad \text{if } (i,a)\notin E,\qquad \text{and} \qquad y^\theta_{ib}=0 \quad \text{if } (i,b)\notin E_\theta.
\]
An LP relaxation for \(\OptOff\) is given by
\begin{align}
\max \quad & \sum_{i \in I}\sum_{a \in B_1} x_{ia} + \sum_{i \in I} \sum_{b \in B_2} \mathbb{E}_{\theta \sim \mathcal{D}}\Bigl[y^\theta_{ib}\Bigr] \tag{$\text{LP}_{\text{off}}$}\\[1mm]
\text{s.t.} \quad & \sum_{a \in B_1} x_{ia} + \sum_{b \in B_2} \mathbb{E}_{\theta \sim \mathcal{D}}\Bigl[y^\theta_{ib}\Bigr] \le 1, \quad \forall i \in I, \label{eq:off_lp_I_agg}\\[1mm]
& \sum_{b \in B} y^\theta_{ib} \le 1, \quad \forall i \in I,  \forall \theta \in \Theta, \label{eq:off_lp_I_cap}\\[1mm]
& \sum_{i \in I} x_{ia} \le 1, \quad \forall a \in B_1, \label{eq:off_lp_A_cap}\\[1mm]
& \sum_{i \in I} y^\theta_{ib} \le 1, \quad \forall b \in B_2, \forall \theta \in \Theta, \label{eq:off_lp_B_cap}\\[1mm]
& x_{ia} \ge 0, \quad \forall i \in I, a \in B_1,\qquad y^\theta_{ib} \ge 0, \quad \forall i \in I, b \in B_2, \forall \theta \in \Theta. \label{eq:lp_nonneg}
\end{align}
Constraint \eqref{eq:off_lp_I_agg} ensures that, in expectation, the fractional degree of each offline node \(i\) at most one online node across both batches. Constraints \eqref{eq:off_lp_A_cap} and \eqref{eq:off_lp_B_cap} enforce that each online node is matched at most once in its batch; and the conditions above guarantee that only valid edges receive nonzero weight. 

It is simple to verify that this LP is indeed a relaxation. For each realization $\theta \in \Theta$, let $\mathcal{M}^\theta$ be the integral max-cardinality matching returned by the optimal offline algorithm. Define a solution $(x,y)$ to the LP via
\[
x_{ia} = \Pr_{\theta \sim \mathcal{D}}\bigl[(i,a) \in \mathcal{M}^\theta\bigr] \quad\quad \text{and} \quad\quad y^\theta_{ib} = \mathbbm{1}\{(i,b) \in \mathcal{M}^\theta\} \quad \quad \forall a \in B_1, b \in B_2, i \in I, \theta \in \Theta.
\]
It is clear that the non-negativity and non-edge constraints hold under this construction. Because $\mathcal{M}^\theta$ is a valid matching, the degree of each offline node cannot exceed 1:
\[\sum_{a\in B_1}\mathbbm{1}{\{(i,a)\in\mathcal{M}^\theta\}} + \sum_{b\in B_2}\mathbbm{1}{\{(i,b)\in\mathcal{M}^\theta\}} \le 1 \qquad \text{for all $i \in I$}.\]
It immediately follows that constraint \eqref{eq:off_lp_I_cap} is satisfied, and taking an expectation of both sides with respect to $\theta$ shows that constraint \eqref{eq:off_lp_I_agg} also holds. Similarly, each online node has degree at most 1 under $\mathcal{M}^\theta$:
\[\sum_{i \in I} \mathbbm{1}\{(i,a) \in \mathcal{M}^\theta\} \leq 1 \quad \forall a \in B_1 \qquad \text{and} \qquad \sum_{i \in I} \mathbbm{1}\{(i,b) \in \mathcal{M}^\theta\} \leq 1 \quad \forall b \in B_2.\]
Taking an expectation of both sides of the first inequality with respect to $\theta$ validates constraint \eqref{eq:off_lp_A_cap}, and the second gives constraint \eqref{eq:off_lp_B_cap} directly. Finally, the LP objective value corresponding to the solution $(x,y)$ is exactly $\OptOff$, the expected number of edges matched by the optimal offline algorithm.
\[
\sum_{a \in B_1} \sum_{i \in I} x_{ia} + \mathbb{E}_{\theta \sim \mathcal{D}} \left[ \sum_{b \in B_2} \sum_{i \in I} y^\theta_{ib} \right] = \E_{\theta \sim \mathcal{D}}\biggl[\sum_{a \in B_1} \sum_{i \in I} \mathbbm{1}\{(i,a) \in \mathcal{M}^\theta\} +  \sum_{b \in B_2} \sum_{i \in I} \mathbbm{1}\{(i,b) \in \mathcal{M}^\theta\} \biggr]
\]
Hence, we have constructed a feasible solution to the LP with objective value $\OptOff$, implying that the LP optimum is at least $\OptOff$.

\Paragraph{Two-stage online rounding algorithm}

Given an optimal fractional solution \((x,y)\) to the LP, our online rounding procedure proceeds in two stages. In batch \(B_1\), we round the fractional solution \(\vec{x}\) to obtain an integral matching \(M_1\). In batch \(B_2\), upon observing a realization \(\theta\) of the online nodes (with corresponding edge set \(E_\theta\)), we round the fractional solution \(\vec{y}^\theta\) on the offline nodes that remain unmatched from \(M_1\) to obtain an integral matching \(M_2^\theta\). In cases of conflict, matches from batch \(B_1\) take precedence.

\begin{algorithm}[H]
\caption{Two-Stage Online Rounding}\label{alg:rounding}
\begin{algorithmic}[1]
\STATE \textbf{Input:} Optimal (fractional) solution \((\vec{x},\vec{y})\) from the LP 
\STATE Upon arrival of batch $B_1$:
\STATE \quad round \(\vec{x}\) using a dependent rounding scheme to obtain matching \(M_1\).
\STATE Upon arrival of batch \(B_2\) with realization \(\theta\)
\STATE \quad round \(\vec{y}^\theta\) via dependent rounding to obtain matching \(M_2^\theta\).
\STATE \textbf{Return:} \(M = M_1 \cup M_2^\theta\)
\end{algorithmic}
\end{algorithm}

\begin{theorem}\label{thm:off-approx}
\Cref{alg:rounding} achieves a competitive ratio of $\nicefrac{3}{4}$ relative to the LP optimum—and hence at least \(\frac{3}{4}\) of \(\OptOff\).
\end{theorem}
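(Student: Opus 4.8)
The plan is to evaluate the expected matching size of \Cref{alg:rounding} exactly in terms of the LP solution $(\vec x, \vec y)$, and then compare it to the LP objective vertex by vertex.

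First I would record the two properties of dependent rounding used by the algorithm: it preserves marginals, so $\Pr[(i,a)\in M_1] = x_{ia}$ and, conditioned on a realization $\theta$, $\Pr[(i,b)\in M_2^\theta] = y^\theta_{ib}$; and the two invocations (for $\vec x$ and for $\vec y^\theta$) use independent randomness, so $M_1$ and $M_2^\theta$ are independent. Because $B_1$-matches take precedence, the only way a second-stage edge $(i,b)\in M_2^\theta$ fails to appear in the final matching $M$ is a collision at its offline endpoint, i.e. $i\in V(M_1)$. Writing $x_i := \sum_{a\in B_1} x_{ia}$ and $y^\theta_i := \sum_{b\in B_2} y^\theta_{ib}$, and using independence together with $\Pr[i\in V(M_1)] = x_i$,
\[
\E[|M|] = \sum_{i\in I} x_i + \E_{\theta\sim\mathcal D}\Bigl[\sum_{i\in I} y^\theta_i\,(1-x_i)\Bigr] = \sum_{i\in I}\bigl(x_i + z_i(1-x_i)\bigr),
\]
where $z_i := \E_{\theta\sim\mathcal D}[y^\theta_i]$ and the last equality is linearity of expectation (the $x_i$ are deterministic). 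The LP objective equals $\sum_{i\in I}(x_i + z_i)$.

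Next I would prove the per-vertex bound $x_i + z_i(1-x_i)\ge \tfrac34(x_i+z_i)$. Rearranging, this is equivalent to $x_i + z_i \ge 4x_iz_i$; since constraint \eqref{eq:off_lp_I_agg} gives $s_i := x_i + z_i \le 1$, AM-GM yields $x_iz_i \le s_i^2/4 \le s_i/4$, which is exactly the claim. Summing over $i\in I$ gives that $\E[|M|]$ is at least $\tfrac34$ times the optimal value of $\text{LP}_{\text{off}}$, and since that LP was shown above to be a valid relaxation of the optimum offline, we conclude $\E[|M|]\ge \tfrac34\,\OptOff$.

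There is no serious obstacle here; the only point requiring care is the bookkeeping of the conflict-resolution rule — confirming that a second-stage edge is dropped precisely when its offline endpoint is already matched in $M_1$, with no other source of loss, and that the relevant events are independent because the first- and second-stage roundings draw fresh randomness. Given that, the theorem reduces to the one-line convexity estimate per offline vertex.
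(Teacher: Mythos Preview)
Your proposal is correct and follows essentially the same approach as the paper: both analyses compute, for each offline node $i$, the probability of being matched as $x_i + (1-x_i)z_i$ using marginal preservation and independence of the two roundings, and then apply the per-vertex inequality $x_i + z_i - x_iz_i \ge \tfrac34(x_i+z_i)$ under the LP constraint $x_i+z_i\le 1$. The only cosmetic difference is that you spell out the AM--GM step explicitly, whereas the paper simply states the minimum of $\frac{x+y-xy}{x+y}$ over the feasible region is $\tfrac34$ at $x=y=\tfrac12$.
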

\begin{proof}
For each offline node \(i\in I\), define the matching contributions $x_i := \sum_{a \in B_1} x_{ia}$ and $y_i := \E[\sum_{b \in B_2} y^\theta_{ib}]$ from the first and second batch, respectively. Because the rounding procedure preserves marginal probabilities, the probability that \(i\) is matched in batch \(B_1\) is exactly \(x_i\). Moreover, if \(i\) is unmatched in the first batch (which happens with probability \(1-x_i\)), then \(i\) is matched in batch \(B_2\) with probability at least \(y_i\). Thus, the overall probability that offline node \(i \in I\) is matched by \Cref{alg:rounding} is at least
\[
x_i + (1-x_i)y_i = x_i + y_i - x_i y_i.
\]
By constraint \eqref{eq:off_lp_I_agg} of the LP, we have $x_i + y_i \leq 1$. Under this condition and non-negativity,
\[\min_{x \geq 0, \; y \geq 0, \; x + y \leq 1} \frac{x + y - xy}{x + y} = \frac{3}{4},\]
with the minimum achieved by the solution $x = y = 1/2$. Applying this bound to all offline nodes yields the result. We note that the same result holds for the vertex-weighted problem, when updating the LP relaxation accordingly. 
\end{proof}

\end{document}